\providecommand{\tabularnewline}{\\}
\theoremstyle{definition}
\newtheorem{defn}{\protect\definitionname}
\theoremstyle{definition}
 \newtheorem{example}{\protect\examplename}
\theoremstyle{plain}
\newtheorem{prop}{\protect\propositionname}
\theoremstyle{remark}
\newtheorem*{rem*}{\protect\remarkname}
\theoremstyle{remark}
\newtheorem{claim}{\protect\claimname}
\numberwithin{equation}{section}
\newcommand*\dif{\mathop{}\!\mathrm{d}}
\setlist[itemize]{noitemsep,topsep=2pt}
\setlist{noitemsep,topsep=2pt} 
\providecommand{\claimname}{Claim}
\providecommand{\definitionname}{Definition}
\providecommand{\examplename}{Example}
\providecommand{\propositionname}{Proposition}
\providecommand{\remarkname}{Remark}
\begin{document}
\title{Ensembling Portfolio Strategies for Long-Term Investments: A Distribution-Free
Preference Framework for Decision-Making and Algorithms}
\author{Duy Khanh Lam\thanks{I am thankful to my advisor, Giulio Bottazzi, and Daniele Giachini
at Scuola Superiore Sant\textquoteright Anna for providing the dataset
for the experiments and their helpful comments on this paper.}\\Scuola Normale Superiore}
\maketitle
\begin{abstract}
This paper investigates the problem of ensembling multiple strategies
for sequential portfolios to outperform individual strategies in terms
of long-term wealth. Due to the uncertainty of strategies' performances
in the future market, which are often based on specific models and
statistical assumptions, investors often mitigate risk and enhance
robustness by combining multiple strategies, akin to common approaches
in collective learning prediction. However, the absence of a distribution-free
and consistent preference framework complicates decisions of combination
due to the ambiguous objective. To address this gap, we introduce
a novel framework for decision-making in combining strategies, irrespective
of market conditions, by establishing the investor's preference between
decisions and then forming a clear objective. Through this framework,
we propose a combinatorial strategy construction, free from statistical
assumptions, for any scale of component strategies, even infinite,
such that it meets the determined criterion. Finally, we test the
proposed strategy along with its accelerated variant and some other
multi-strategies. The numerical experiments show results in favor
of the proposed strategies, albeit with small tradeoffs in their Sharpe
ratios, in which their cumulative wealths eventually exceed those
of the best component strategies while the accelerated strategy significantly
improves performance.\bigskip
\end{abstract}
\begin{center}
{\small\textit{Keywords}}{\small : Online Learning, Multi-Strategy,
Ensembling Strategies, Preference, Decision Making.}\pagebreak{}
\par\end{center}

\setlength{\abovedisplayskip}{2.5pt} 
\setlength{\abovedisplayshortskip}{2.5pt}
\setlength{\belowdisplayskip}{2.5pt} 
\setlength{\belowdisplayshortskip}{2.5pt} 

\section{Introduction}

In this paper, we investigate the problem of ensembling multiple sequential
portfolio strategies such that the combinatorial strategy could eventually
outperform all the component ones during a long-term investment. The
component strategies may include a broad range of prediction and analysis-based
strategies or even naive ones, such as the equal allocation strategy,
that an investor can observe over time. A natural problem arises in
that, due to the uncertainty of the future, the performances of the
observed strategies are not guaranteed and seem uncertain. Moreover,
many sophisticated strategies depend on a random market with specific
statistical assumptions, which are often invalid in reality, as the
future market could evolve much differently from the hypothetical
models. Hence, an investor could tackle the problem by ensembling
several strategies rather than believing in a single one, which is
probably a well-known common approach for noise reduction in machine
learning prediction, like the fusion models. Unfortunately, since
there has not been a concept to measure the preference of choice between
component strategies that do not rely on known distributions, which
is required for the traditional expected utility framework, the distribution-free
preference framework between the ensembles of collective strategies
also does not exist. In order to deal with this challenge, we primarily
put attention on well-forming a particular objective of ensembling
strategies through establishing a preference framework for making
decisions without relying on statistical assumptions on market data.\smallskip

Given the significant interest and importance of collective learning
methods and ensembles of trading or investment strategies in the literature,
particularly within practical investment contexts, there has been
a substantial volume of research, mostly experimental. To underscore
our later arguments, we briefly reference several recent articles
that follow the increasing popularity of algorithmic trading under
the umbrella of fintech and machine learning such as \citet{Nti2020},
\citet{Padhi2022}, \citet{Yu2023}, \citet{Sun2023} and \citet{Deng2024}.
The similarity in contemporary research is they only focus on prediction
and lack a uniquely theoretical framework for the formulation of an
economic agent's objective and preference. In general, the problem
of strategies ensemble can be equivalently cast as the problem of
making a portfolio of assets' allocation, which results in the inherent
uncertainty in the decision of choice. Consequently, their performances
in out-of-sample long-term investments are not secured. Moreover,
as a particular framework for making an ensemble has not been established,
the problem of an investor learning in a market with an infinitely
large amount of agents' strategies is difficult to be investigated.
These issues motivate us to propose a framework for making decisions
on ensembling various component strategies, with a clearly defined
and consistent preference over the lifetime of a market.\smallskip

\textbf{The main contributions and organization of the paper}. By
convention, the primary objective of an investor in this paper is
to create an ensemble of component strategies that eventually exceed
all of them in terms of wealth at some future periods. In Section
2, we firstly establish a model of an investor in a market who sequentially
decides combinations of various observable component strategies of
no-short portfolios over time. Then, we introduce a general framework
of distribution-free and time-consistent preference based on the cumulative
wealth of a strategy, which constitutes the investor's goal. Based
on this theoretical framework, the criterion for creating an ensemble
of strategies is specified. In Section 3, since the preferences between
strategies are defined on each sequence of unknown market data, we
propose a particular online learning strategy to determine combinations
of component strategies, ensuring the investor's preference is universally
guaranteed for any possibilities of market data. In the case of a
market with an infinitely large number of observable strategies, we
propose a modified algorithm to reduce dimensionality by learning
on a smaller set of representative strategies, thus ensuring universal
preference.\smallskip

In Section 4, we conduct experiments to numerically approximate the
proposed strategy and explore further accelerated variants by leveraging
the flexibility of the original strategy's mechanism. The proposed
strategies are tested and compared with other online learning strategies,
serving as alternative methods for ensemble multi-strategies, in terms
of empirical performances such as the Sharpe ratio, average growth
rate, and final cumulative wealth. The dataset utilized for these
experiments is sourced from The Center for Research in Security Prices,
LLC, spanning a comprehensive range of 27 years, which includes many
historically significant market events and comprises six blue-chip
stock symbols. Various experimental scenarios are conducted for both
small and large scales of component strategies, involving sequential
Mean-Variance portfolios with different risk profiles. In all experiments,
while the other multi-strategies often performed poorly, the wealth
generated by the proposed strategy eventually exceeds that of the
best component ones, albeit with slight tradeoffs in the Sharpe ratios.
Moreover, the accelerated strategy also significantly increases cumulative
wealth compared with the original one, though it is not theoretically
guaranteed but rather heuristic.

\section{Model settings and formalizations}

Let us consider a stock market of $m\geq2$ risky assets over discrete
time periods $n\in\mathbb{N}_{+}$, where $\ensuremath{\left[m\right]\coloneqq\big\{1,\ldots,m\big\}}$.
The vector $x_{n}\coloneqq\big(x_{n,1},\ldots,x_{n,m}\big)\in\mathbb{R}_{++}^{m}$
represents the assets' returns at time $n$, which is simply defined
as the ratio of asset prices at time $n$ to those at time $n-1$.
Let $x_{1}^{n}\coloneqq\big\{ x_{i}\big\}_{i=1}^{n}$ denote the finite
sequence of assets' returns over $n$ time periods, while $x_{1}^{\infty}$
represents an infinite sequence of returns over the lifetime of the
regarded market. The no-short portfolios are constrained in the simplex
$\mathcal{B}^{m}\coloneqq\big\{\beta\in\mathbb{R}^{m}:\,\sum_{j=1}^{m}\beta_{j}=1,\,\beta_{j}\geq0\big\}$
as the of portfolio choices, and the return of a portfolio $b\in\mathcal{B}^{m}$
with respect to $x_{n}$ is denoted by $\langle b,x_{n}\rangle$,
where $\langle\cdot,\cdot\rangle$ is the scalar product of two vectors.
For simplification and generality, the infinite sequence $x_{1}^{\infty}$
is treated as deterministic but unknown data, rather than as random,
throughout this paper.\smallskip

At each $n$, let $b_{n}:\mathbb{R}_{++}^{m\times(n-1)}\to\mathcal{B}^{m}$
denote the choice of portfolio based on $x_{1}^{n-1}$, and let the
\emph{strategy} associated with these portfolios be denoted by the
infinite sequence $\big(b_{n}\big)\coloneqq\left\{ b_{n}\right\} _{n=1}^{\infty}\in\mathcal{B}^{m\times\infty}$;
accordingly, a \textit{constant strategy} of a fixed portfolio $b$
is denoted as solely $\big(b\big)$, without a timing index. Given
the initial capital $S_{0}\big(b_{0}\big)\eqqcolon S_{0}=1$, assuming
there are no commission fees, the \textit{cumulative wealth} and its
corresponding \textit{\emph{exponential average }}\textit{growth rate}
after $n$ periods of investment using a generic strategy $\left(b_{n}\right)$
are defined respectively as:
\[
S_{n}\left(b_{n}\right)\coloneqq{\displaystyle {\displaystyle \prod_{i=1}^{n}}\left\langle b_{i},x_{i}\right\rangle }\text{ and }W_{n}\left(b_{n}\right)\coloneqq\dfrac{1}{n}\log S_{n}\left(b_{n}\right).
\]
By employing the definitions of cumulative wealth and growth rate,
we propose a conceptual framework that generalizes the evolution of
a portfolio strategy in long-term investment, as defined in Definition
\ref{Evolution of strategies} below, and also provide an illustration
in Example \ref{exp evolution of strategies}.
\begin{defn}
A strategy $\big(\hat{b}_{n}\big)$ is said to \textit{almost always
exceed} another strategy $\big(\bar{b}_{n}\big)$ if there exists
$M>1$ such that $S_{n}\big(\hat{b}_{n}\big)>S_{n}\big(\bar{b}_{n}\big)$
for all $n\geq M$ (the term \textquotedblleft almost\textquotedblright{}
can be omitted for $M=1$).\label{Evolution of strategies}
\end{defn}
As the negation of the Definition \ref{Evolution of strategies},
the strategy $\big(\hat{b}_{n}\big)$ does not always exceed the strategy
$\big(\bar{b}_{n}\big)$ if for any $M\geq1$, there exists $n_{M}\geq M$
such that $S_{n_{M}}\big(\hat{b}_{n_{M}}\big)\leq S_{n_{M}}\big(\bar{b}_{n_{M}}\big)$;
moreover, if there are infinitely many $n_{M}$ such that the strict
inequality holds, the strategy $\big(\hat{b}_{n}\big)$ can be said
to be \textit{infinitely often exceeded} by the strategy $\big(\bar{b}_{n}\big)$.
This way of defining the relative performance in terms of the growth
of strategies is general in the sense that it does not rely on any
specific statistical properties of the assets' returns. Hence, the
cumulative wealth of a strategy will either be almost always exceeded
or not by that of another, regardless of the behavior of the infinite
sequence $x_{1}^{\infty}$.
\begin{example}
Figure \ref{figure 1} illustrates an example of the evolutions of
some strategies taken from the experiment section. The left graphic
shows two strategies often exceeding each other, while the right graphic
shows a strategy almost always exceeding the remainder, except for
a few observations around time $n=3400$, where they are temporarily
equivalent.\label{exp evolution of strategies}
\begin{figure}[H]
\begin{centering}
\includegraphics[scale=0.33]{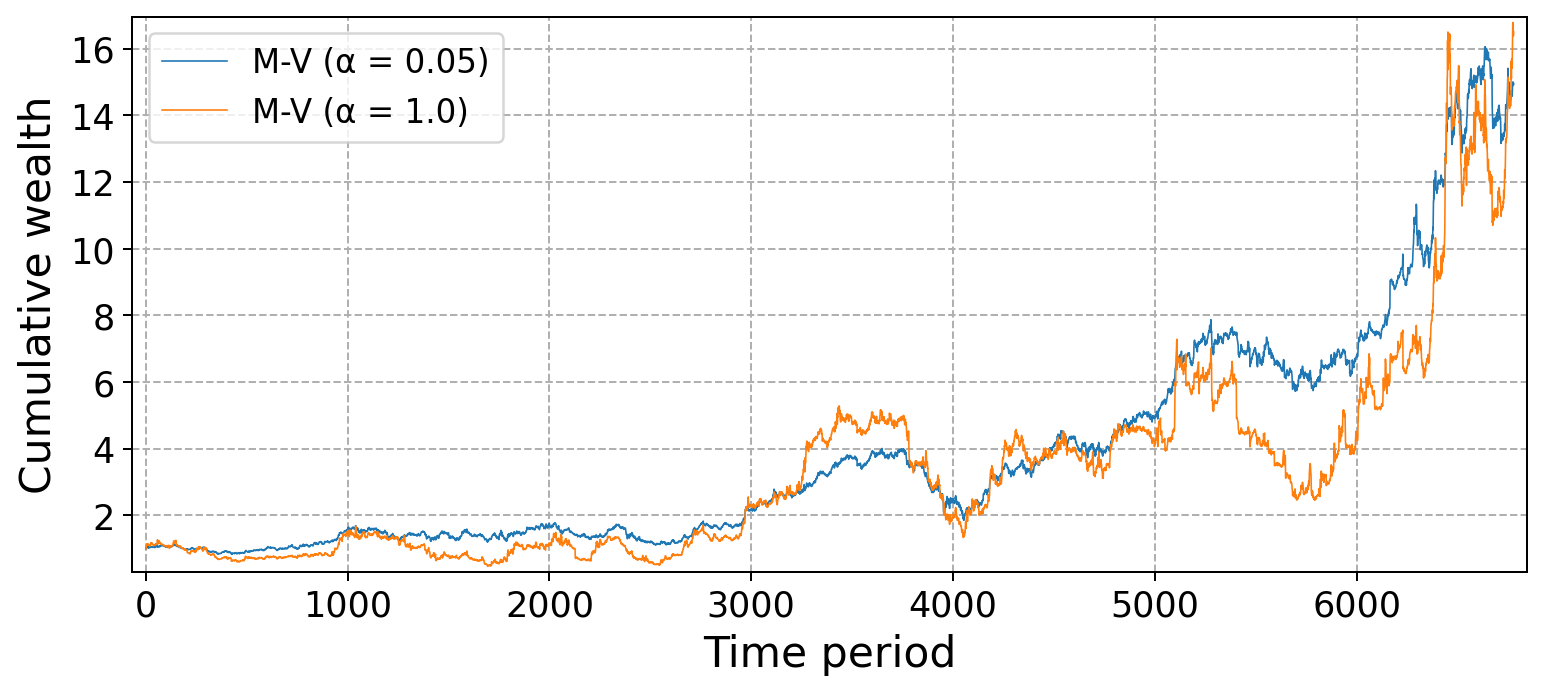}\includegraphics[scale=0.33]{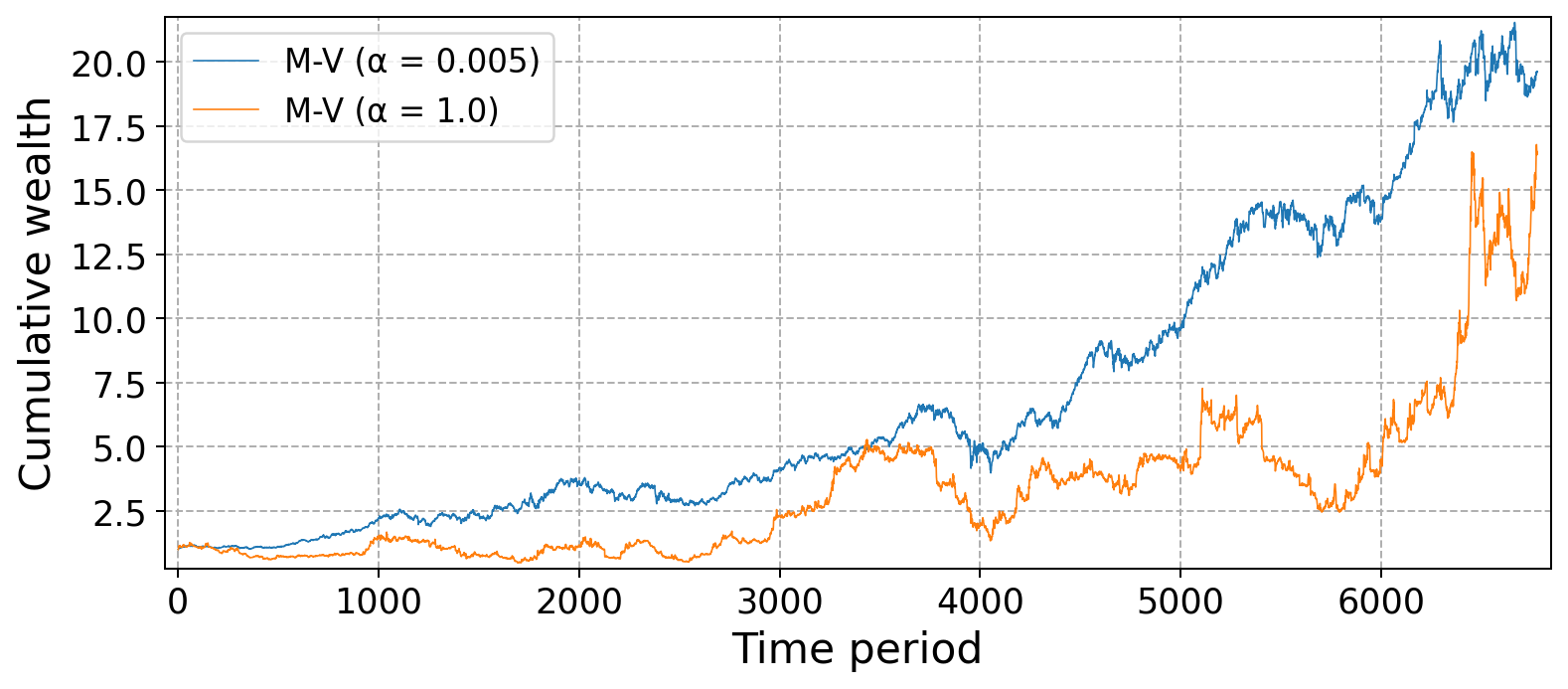}
\par\end{centering}
\centering{}{\footnotesize\caption{Evolutions of component M-V strategies over time with different levels
of risk aversion.\label{figure 1}}
}{\footnotesize\par}
\end{figure}
\end{example}
The main point of introducing the evolutionary framework is that a
strategy which is not always exceeded by other strategies will eventually
exceed them after a finite number of time periods from any observation
point. However, although using Definition \ref{Evolution of strategies}
alone is able to distinguish any pairs of strategies, it seems not
convenient for formalizing a preference ranking essential for constructing
an ensemble. In the next sections, we propose an alternative but closely
related preference relation framework which constitutes a criterion
for strategy combination corresponding to the investor's objective.
Then, we demonstrate its link to the evolution of cumulative wealth
of strategy, wherein a strategy cannot always exceed another one that
is strictly preferred.

\subsection{Preference relation and combination criterion}

Motivated by the fact that the amount of market data is infinitely
growing rather than fixed and wholly available, we need a \textit{preference
relation} that demonstrates a stable ranking over time to compare
any two strategies, which implies that the relation does not change
as the assets' returns are gradually observed as time evolves. For
this reason, such a preference should be defined over all the strategies
in a single infinite-dimensional space $\mathcal{B}^{m\times\infty}$.
The following Definition \ref{Definition of preference} introduces
an appropriate preference in an asymptotic sense, posing consistency
for the ranking.
\begin{defn}
\begin{singlespace}
\noindent Given an infinite sequence of assets' returns $x_{1}^{\infty}$
and two generic strategies $\big(\hat{b}_{n}\big)$, $\big(\bar{b}_{n}\big)$,
the preference relations along with their corresponding notations
are defined as follows:\label{Definition of preference}
\end{singlespace}
\begin{enumerate}
\item $\big(\bar{b}_{n}\big)\sim\big(\hat{b}_{n}\big)$: Neither one of
them is preferred to the other, or they are indifferent, as
\[
\big(\bar{b}_{n}\big)\sim\big(\hat{b}_{n}\big)\Leftrightarrow\lim_{n\to\infty}\big(W_{n}\big(\hat{b}_{n}\big)-W_{n}\big(\bar{b}_{n}\big)\big)=0.
\]
\item $\big(\bar{b}_{n}\big)\succsim\big(\hat{b}_{n}\big)$: $\big(\bar{b}_{n}\big)$
is weakly preferred to, or at least as good as, $\big(\hat{b}_{n}\big)$,
as 
\[
\big(\bar{b}_{n}\big)\succsim\big(\hat{b}_{n}\big)\Leftrightarrow\limsup_{n\to\infty}\big(W_{n}\big(\hat{b}_{n}\big)-W_{n}\big(\bar{b}_{n}\big)\big)\leq0.
\]
\item $\big(\bar{b}_{n}\big)\succ\big(\hat{b}_{n}\big)$: $\big(\bar{b}_{n}\big)$
is strictly preferred to, or better than, $\big(\hat{b}_{n}\big)$,
as
\[
\big(\bar{b}_{n}\big)\succ\big(\hat{b}_{n}\big)\Leftrightarrow\big(\bar{b}_{n}\big)\succsim\big(\hat{b}_{n}\big)\land\big(\bar{b}_{n}\big)\nsim\big(\hat{b}_{n}\big).
\]
\end{enumerate}
\noindent Hence, the subsequent logic deduction follows the definition
of relations as follows:
\[
\big(\bar{b}_{n}\big)\succ\big(\hat{b}_{n}\big)\veebar\big(\bar{b}_{n}\big)\sim\big(\hat{b}_{n}\big)\Leftrightarrow\big(\bar{b}_{n}\big)\succsim\big(\hat{b}_{n}\big)\text{ and }\big(\bar{b}_{n}\big)\succsim\big(\hat{b}_{n}\big)\land\big(\hat{b}_{n}\big)\succsim\big(\bar{b}_{n}\big)\Leftrightarrow\big(\bar{b}_{n}\big)\sim\big(\hat{b}_{n}\big).
\]

\end{defn}
The preference relations defined above provide a consistent ranking
for the investor, who prefers cumulative wealth, to evaluate utility
for any strategy through its equivalent performance metric, the average
growth rate. In an asymptotic sense with a timing factor, two similarly
preferred strategies might not have the same growth rate all the time,
but their difference is negligible as time evolves. This is more general
than the usual sense of indifference in preference without a timing
factor. Akin to a conventional preference relation, the proposed preference
relation also possesses the following properties of a rational ranking,
according to the fundamental axioms.
\begin{prop}
For any infinite sequence of assets' returns $x_{1}^{\infty}$, the
preference relations established in Definition \ref{Definition of preference}
satisfy the following properties:\label{Axioms}
\begin{enumerate}
\item Completeness: $\big(\hat{b}_{n}\big)\succsim\big(\bar{b}_{n}\big)\lor\big(\bar{b}_{n}\succsim\big(\hat{b}_{n}\big)$,
$\forall\big(\hat{b}_{n}\big),\big(\bar{b}_{n}\big)\in\mathcal{B}^{m\times\infty}.$ 
\item Reflexivity: $\big(\hat{b}_{n}\big)\succsim\big(\hat{b}_{n}\big)$,
$\forall\big(\hat{b}_{n}\big)\in\mathcal{B}^{m\times\infty}$.
\item Transitivity: $\big(\hat{b}_{n}\big)\succsim\big(\bar{b}_{n}\big)\land\big(\bar{b}_{n}\big)\succsim\big(\tilde{b}_{n}\big)\Rightarrow\big(\hat{b}_{n}\big)\succsim\big(\tilde{b}_{n}\big)$,
$\forall\big(\hat{b}_{n}\big),\big(\bar{b}_{n}\big),\big(\tilde{b}_{n}\big)\in\mathcal{B}^{m\times\infty}$.
\item Asymmetry: $\big(\hat{b}_{n}\big)\succ\big(\bar{b}_{n}\big)\Rightarrow\big(\bar{b}_{n}\big)\not\succsim\big(\hat{b}_{n}\big)$,
$\forall\big(\hat{b}_{n}\big),\big(\bar{b}_{n}\big)\in\mathcal{B}^{m\times\infty}$.
\end{enumerate}
\end{prop}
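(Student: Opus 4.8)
The plan is to reduce every clause to an elementary statement about the limit superior and limit inferior of the single real sequence $D_n \coloneqq W_n\big(\hat b_n\big) - W_n\big(\bar b_n\big)$, since all three relations in Definition \ref{Definition of preference} are phrased through such differences of growth rates. Using $\limsup_{n}(-D_n) = -\liminf_{n} D_n$, the two directions of weak preference read $\big(\bar b_n\big)\succsim\big(\hat b_n\big)\Leftrightarrow\limsup_{n\to\infty}D_n\le 0$ and $\big(\hat b_n\big)\succsim\big(\bar b_n\big)\Leftrightarrow\liminf_{n\to\infty}D_n\ge 0$, so every axiom becomes a question about the two extended-real quantities tied by $\liminf_{n}D_n\le\limsup_{n}D_n$.

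Reflexivity is immediate: taking $\bar b_n=\hat b_n$ gives $D_n\equiv 0$, hence $\limsup_{n}D_n=0\le 0$. For asymmetry I would unwind $\succ$ through the stated decomposition: $\big(\hat b_n\big)\succ\big(\bar b_n\big)$ means $\liminf_{n}D_n\ge 0$ together with the failure of indifference, $\neg\big(\lim_{n\to\infty}D_n=0\big)$. If one also had $\big(\bar b_n\big)\succsim\big(\hat b_n\big)$, i.e. $\limsup_{n}D_n\le 0$, then the sandwich $0\le\liminf_{n}D_n\le\limsup_{n}D_n\le 0$ would force $\lim_{n\to\infty}D_n=0$, i.e. indifference, a contradiction. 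The very same sandwich yields the second logic deduction displayed after the definition, namely that $\succsim$ holding in both directions is equivalent to $\sim$.

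For transitivity I would invoke subadditivity of the limit superior. Assume $\big(\hat b_n\big)\succsim\big(\bar b_n\big)$ and $\big(\bar b_n\big)\succsim\big(\tilde b_n\big)$, and set $A_n=W_n\big(\bar b_n\big)-W_n\big(\hat b_n\big)$ and $B_n=W_n\big(\tilde b_n\big)-W_n\big(\bar b_n\big)$, so that the hypotheses are $\limsup_{n}A_n\le 0$ and $\limsup_{n}B_n\le 0$. Since $A_n+B_n=W_n\big(\tilde b_n\big)-W_n\big(\hat b_n\big)$ by exact telescoping of the common term $W_n\big(\bar b_n\big)$ at every finite $n$, the inequality $\limsup_{n}(A_n+B_n)\le\limsup_{n}A_n+\limsup_{n}B_n\le 0$ is precisely $\big(\hat b_n\big)\succsim\big(\tilde b_n\big)$.

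The main obstacle is completeness. Reduced to $D_n$ it asserts, for every pair of strategies, the dichotomy $\limsup_{n}D_n\le 0$ or $\liminf_{n}D_n\ge 0$ — that is, the difference of growth rates cannot straddle zero in the sense $\liminf_{n}D_n<0<\limsup_{n}D_n$. This is the delicate point, because a Cesàro average such as $W_n=\tfrac1n\sum_{i=1}^{n}\log\langle b_i,x_i\rangle$ can be arranged to oscillate so that its running mean repeatedly crosses a fixed gap, placing the two one-sided limits on opposite sides of $0$. I would therefore expect completeness to require either restricting attention to strategies whose growth-rate gap $D_n$ converges in the extended reals, or an argument that exploits additional structure of $x_1^\infty$ ruling out such oscillation; pinning down (or correctly qualifying) this dichotomy is where the real content of the proposition lies, while the other three axioms are routine consequences of the order properties of $\limsup$ and $\liminf$.
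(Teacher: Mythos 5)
Your reductions to the single sequence $D_{n}\coloneqq W_{n}\big(\hat{b}_{n}\big)-W_{n}\big(\bar{b}_{n}\big)$ and your proofs of reflexivity, transitivity, and asymmetry are correct, and they are exactly the content hiding behind the paper's one-line proof (``all properties could be checked straightforwardly''). Two small points you handled implicitly are worth making explicit: $W_{n}$ is finite for every $n$ because $\langle b_{i},x_{i}\rangle\geq\min_{j}x_{i,j}>0$, so $D_{n}$ is a genuine real sequence; and in the transitivity step the subadditivity $\limsup_{n}(A_{n}+B_{n})\leq\limsup_{n}A_{n}+\limsup_{n}B_{n}$ involves no indeterminate $\infty-\infty$ form, since both upper limits are at most $0$ by hypothesis.

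Your hesitation about completeness is not merely justified caution --- it can be sharpened into an outright counterexample, so on this point your attempt diagnoses a flaw in the proposition itself rather than leaving a gap. Take $m=2$ and $x_{i}=(2,1)$ for all $i$; let $\big(\hat{b}_{n}\big)$ hold asset $1$ on blocks of indices whose lengths $L_{k}$ grow so fast that $n_{k-1}/n_{k}\to0$ (where $n_{k}=\sum_{j\leq k}L_{j}$) and hold asset $2$ on the complementary blocks, with $\big(\bar{b}_{n}\big)$ doing the reverse. Then the increments of $\log S_{n}\big(\hat{b}_{n}\big)-\log S_{n}\big(\bar{b}_{n}\big)$ equal $\pm\log2$ on alternating blocks, and the Ces\`aro average satisfies
\[
\liminf_{n\to\infty}D_{n}=-\log2<0<\log2=\limsup_{n\to\infty}D_{n},
\]
so neither $\big(\bar{b}_{n}\big)\succsim\big(\hat{b}_{n}\big)$ nor $\big(\hat{b}_{n}\big)\succsim\big(\bar{b}_{n}\big)$ holds: completeness fails for this (entirely ordinary) market sequence, and the same block construction works for any $x_{1}^{\infty}$ with persistent spread between assets. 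Completeness holds only in degenerate cases --- e.g., when $\frac{1}{n}\sum_{i=1}^{n}\log\big(\max_{j}x_{i,j}/\min_{j}x_{i,j}\big)\to0$, which forces universal indifference --- or under a qualification of the kind you propose, such as restricting to pairs whose gap $D_{n}$ converges in the extended reals. In short, $\succsim$ is a reflexive transitive preorder satisfying asymmetry of its strict part, but it is not total, and the paper's claim of straightforward verification glosses over precisely the point you isolated.
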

\begin{proof}
All properties could be checked straightforwardly using Definition
\ref{Definition of preference}.
\end{proof}
Given a set of strategies, since the infinite sequence $x_{1}^{\infty}$
is unknown, their evolutions of cumulative wealths are not definitively
determined at any time. Thus, any competitive strategy can only guarantee
to be at least as good as the best strategy at some future particular
observation periods. In order to create such a competitive strategy
conveniently and comprehensively, it is necessary to define a single
\textit{baseline strategy} against which it will compete. This baseline
strategy, representing the best possible cumulative wealth among the
observed strategies, is formulated based on the realizations of assets'
returns during the respective time periods.
\begin{defn}
(Baseline strategy). For an infinite sequence of assets' returns $x_{1}^{\infty}$
and a set of strategies $\big\{\big(b_{n}^{\alpha}\big):\alpha\in\left[k\right]\coloneqq\big\{1,\ldots,k\big\}\big\}$,
the baseline strategy, denoted as $\big(\ddot{b}_{n}\big)$ hereafter,
is defined as the best merged strategy of all the component ones,
where $S_{n}\big(\ddot{b}_{n}\big)\coloneqq\max_{\alpha\in[k]}S_{n}\big(b_{n}^{\alpha}\big)$
for all $n$.\label{Baseline strategy} 
\end{defn}
By definition, the baseline strategy $\big(\ddot{b}_{n}\big)$ is
the envelope of all cumulative wealth of observed strategies over
time, so $\big(\ddot{b}_{n}\big)\succsim\big(b_{n}^{\alpha}\big)$
for all $\alpha\in\left[k\right]$. Since the construction of the
baseline strategy relies on future data of assets' returns at every
time period, it is inherently inaccessible to the investor. However,
the investor could instead create a multi-strategy (or an ensemble),
denoted as $\big(b_{n}\big)$ and termed the \textit{combinatorial
strategy} henceforth, to compete with the baseline strategy. Despite
the inaccessibility of the baseline strategy $\big(\ddot{b}_{n}\big)$,
it facilitates us to evaluate the preference between the combinatorial
strategy and a single consistent strategy in the infinite-dimensional
space $\mathcal{B}^{m\times\infty}$, thereby ensuring that the investor's
objective is met. To clarify this point, it is simple but worth noting
that if $\big(b_{n}\big)\succ\big(\ddot{b}_{n}\big)$, then $\big(b_{n}\big)\succ\big(b_{n}^{\alpha}\big)$
for all $\alpha\in\left[k\right]$, due to the transitivity and asymmetry
of preference, but the reverse is not necessarily true. Since strict
preference is favored over a weak one, let us establish a specific
criterion for combining the strategies.
\begin{prop}
\textup{(Combination criterion)}. Given an infinite sequence of assets'
returns $x_{1}^{\infty}$, the baseline strategy $\big(\ddot{b}_{n}\big)$
cannot always exceed a combinatorial strategy $\big(b_{n}\big)$ if
$\big(b_{n}\big)\succ\big(\ddot{b}_{n}\big)$.\label{Combination-criterion}
\end{prop}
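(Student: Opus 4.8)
The plan is to unwind the hypothesis $\big(b_{n}\big)\succ\big(\ddot{b}_{n}\big)$ into a statement about the two wealth curves and show they cross infinitely often, which is exactly the negation of ``$\big(\ddot{b}_{n}\big)$ always exceeds $\big(b_{n}\big)$'' as formalized right after Definition \ref{Evolution of strategies}. First I would expand the strict preference using part~3 of Definition \ref{Definition of preference}: $\big(b_{n}\big)\succ\big(\ddot{b}_{n}\big)$ means $\big(b_{n}\big)\succsim\big(\ddot{b}_{n}\big)$ together with $\big(b_{n}\big)\nsim\big(\ddot{b}_{n}\big)$. Abbreviating $D_{n}\coloneqq W_{n}\big(\ddot{b}_{n}\big)-W_{n}\big(b_{n}\big)$, the weak preference translates to $\limsup_{n\to\infty}D_{n}\leq0$, while $\nsim$ asserts that $\lim_{n\to\infty}D_{n}$ is not equal to $0$ (it either fails to exist or exists but is nonzero).

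The key step is to combine these two facts into $\liminf_{n\to\infty}D_{n}<0$. Indeed, were $\liminf_{n\to\infty}D_{n}\geq0$ to hold simultaneously with $\limsup_{n\to\infty}D_{n}\leq0$, the limit would exist and equal $0$, contradicting $\big(b_{n}\big)\nsim\big(\ddot{b}_{n}\big)$. Hence $\liminf_{n\to\infty}D_{n}<0$, and I can extract a subsequence $\big(n_{k}\big)$ with $n_{k}\to\infty$ along which $D_{n_{k}}\to\liminf_{n\to\infty}D_{n}<0$, so that $D_{n_{k}}<0$ for all sufficiently large $k$.

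Finally I would convert the sign of $D_{n_{k}}$ back to a wealth comparison. Since $W_{n}(\cdot)=\tfrac{1}{n}\log S_{n}(\cdot)$ and the logarithm is strictly increasing on the positive reals, $D_{n_{k}}<0$ is equivalent to $S_{n_{k}}\big(\ddot{b}_{n_{k}}\big)<S_{n_{k}}\big(b_{n_{k}}\big)$. Consequently, for every $M\geq1$ one can select an index $n_{k}\geq M$ with $S_{n_{k}}\big(\ddot{b}_{n_{k}}\big)\leq S_{n_{k}}\big(b_{n_{k}}\big)$, which is precisely the condition defining ``$\big(\ddot{b}_{n}\big)$ does not always exceed $\big(b_{n}\big)$.''

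I expect the middle step to be the only genuine subtlety. The indifference negation $\nsim$ does not by itself yield a strictly negative limit, because the sequence $D_{n}$ need not converge; the argument must therefore pass through $\liminf$ rather than $\lim$, and one must check that the extracted subsequence supplies crossings at arbitrarily large times (so the negation holds for every $M$), not merely at one exceptional period. The remaining transitions are routine monotonicity and sign bookkeeping.
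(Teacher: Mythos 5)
Your proof is correct and takes essentially the same route as the paper: both combine $\limsup_{n\to\infty}\big(W_{n}\big(\ddot{b}_{n}\big)-W_{n}\big(b_{n}\big)\big)\leq0$ with $\big(b_{n}\big)\nsim\big(\ddot{b}_{n}\big)$ to force $W_{n}\big(\ddot{b}_{n}\big)-W_{n}\big(b_{n}\big)<0$ at arbitrarily large times, then convert back to wealths via monotonicity of the logarithm. The only cosmetic difference is that the paper argues by contradiction---assuming eventual dominance of the baseline, which together with the $\epsilon$-definition of the supremum limit would force the difference to converge to zero---whereas you reach the same conclusion directly by establishing $\liminf_{n\to\infty}\big(W_{n}\big(\ddot{b}_{n}\big)-W_{n}\big(b_{n}\big)\big)<0$ and extracting a subsequence.
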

\begin{proof}
See Appendix.
\end{proof}
\begin{rem*}
The reverse reasoning of the Combination criterion is not correct
since the baseline strategy could always exceed the combinatorial
strategy over time despite $\big(b_{n}\big)\succsim\big(\ddot{b}_{n}\big)$.
For instance:
\[
1<S_{n}\big(\ddot{b}_{n}\big)\big/S_{n}\big(b_{n}\big)<2,\forall n\Rightarrow\big(b_{n}\big)\sim\big(\ddot{b}_{n}\big)\Rightarrow\big(b_{n}\big)\nsucc\big(\ddot{b}_{n}\big),
\]
due to the asymmetry of preference, so the weak preference is not
sufficient for a combination.
\end{rem*}
According to the Combination criterion, the strict preference $\big(b_{n}\big)\succ\big(\ddot{b}_{n}\big)$
implies that the cumulative wealth of the combinatorial strategy will
eventually match or even surpass all those of the component strategies
simultaneously after a finite number of time periods. Notably, it
also includes the possibility of the combinatorial strategy almost
always exceeding the baseline strategy. Furthermore, this guarantee
does not rely on any statistical assumptions and is also not biased
toward the selection of the starting point of observations\footnote{In the literature, investment strategies and algorithms for ensembling
them are commonly tested on a specific dataset, and the strategy with
the highest final wealth is considered the best one. However, the
choice of a dataset for out-of-sample testing can be biased towards
certain strategies and does not account for longer horizons. Therefore,
the Combination criterion generalizes a target for ensembling the
component strategies. By following a combinatorial strategy that is
not always exceeded by all other strategies concurrently, the investor
can be confident of eventually accumulating the highest final wealth
at certain points in the future.}. Additionally, as an example to reiterate the significance and convenience
of the baseline strategy as mentioned above, it is easy to see that
creating a combinatorial strategy such that $\big(b_{n}\big)\succ\big(b_{n}^{\alpha}\big)$
for all $\alpha\in\left[k\right]$ does not guarantee that it will
exceed all component strategies simultaneously.\smallskip

\textbf{The universality of a preference under unknown future assets'
returns}. Since the preferences of strategies depend on an unknown
specific sequence $x_{1}^{\infty}$, the following Claim \ref{Consistent-universal-preference}
formalizes the universality of a preference over all possibilities
of $x_{1}^{\infty}$ in order to decide a choice between two strategies.
Generally, a weak preference is \textit{more universal} than the reverse
one if the strict relation holds for some possibilities of $x_{1}^{\infty}$
while the indifferent relation holds for the remaining others.
\begin{claim}
Consider two generic strategies $\big(\hat{b}_{n}\big)$ and $\big(\bar{b}_{n}\big)$
and the infinite-dimensional distribution space $\mathbb{R}_{++}^{m\times\infty}$
of all instances of the sequence $x_{1}^{\infty}$. By the completeness
and asymmetry of preference, if $\big(\bar{b}_{n}\big)\succsim\big(\hat{b}_{n}\big)$
everywhere in $\mathbb{R}_{++}^{m\times\infty}$ while $\big(\hat{b}_{n}\big)\succsim\big(\bar{b}_{n}\big)$
only in a non-empty set $D\subset\mathbb{R}_{++}^{m\times\infty}$,
then $\big(\bar{b}_{n}\big)\succ\big(\hat{b}_{n}\big)$ in $\mathbb{R}_{++}^{m\times\infty}\backslash D$.
In this context, an economically rational investor in the market universally
prefers the strategy $\big(\bar{b}_{n}\big)$ over the strategy $\big(\hat{b}_{n}\big)$,
or the preference $\big(\bar{b}_{n}\big)\succsim\big(\hat{b}_{n}\big)$
is said to be more universal than the preference $\big(\hat{b}_{n}\big)\succsim\big(\bar{b}_{n}\big)$.
This is reasonable by the fact that the possibility of the strategy
$\big(\hat{b}_{n}\big)$ being strictly preferred does not exist.
In the case $\big(\bar{b}_{n}\big)\sim\big(\hat{b}_{n}\big)$ everywhere
in $\mathbb{R}_{++}^{m\times\infty}$, they are said to be universally
indifferent.\label{Consistent-universal-preference}
\end{claim}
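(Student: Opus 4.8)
The plan is to read the statement pointwise in the data. For each fixed instance $x_1^\infty\in\mathbb{R}_{++}^{m\times\infty}$ the relations of Definition \ref{Definition of preference} are completely determined, so the set-valued conclusion ``$\big(\bar{b}_n\big)\succ\big(\hat{b}_n\big)$ in $\mathbb{R}_{++}^{m\times\infty}\backslash D$'' means precisely that the strict preference holds at every instance lying outside $D$. I would therefore fix an arbitrary $x_1^\infty\in\mathbb{R}_{++}^{m\times\infty}\backslash D$, establish $\big(\bar{b}_n\big)\succ\big(\hat{b}_n\big)$ at that single instance, and then invoke arbitrariness of the point to recover the claim.

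At such a point two facts are available. First, by hypothesis $\big(\bar{b}_n\big)\succsim\big(\hat{b}_n\big)$ holds everywhere, hence here. Second, reading ``$\big(\hat{b}_n\big)\succsim\big(\bar{b}_n\big)$ only in $D$'' as the assertion that the reverse weak preference holds on $D$ and fails on its complement, the point being outside $D$ yields $\lnot\big(\big(\hat{b}_n\big)\succsim\big(\bar{b}_n\big)\big)$. Completeness (Proposition \ref{Axioms}) is what makes this dichotomy exhaustive, since at every instance at least one of the two weak preferences must hold; outside $D$ the only one available is thus $\big(\bar{b}_n\big)\succsim\big(\hat{b}_n\big)$.

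It remains to upgrade ``weak but not reverse-weak'' to strict, and here I would use the logical deductions already recorded under Definition \ref{Definition of preference} rather than any fresh analysis. The exclusive-or decomposition $\big(\bar{b}_n\big)\succ\big(\hat{b}_n\big)\veebar\big(\bar{b}_n\big)\sim\big(\hat{b}_n\big)\Leftrightarrow\big(\bar{b}_n\big)\succsim\big(\hat{b}_n\big)$ forces exactly one of strict preference or indifference once the first fact is in hand. To discard indifference I use the companion equivalence $\big(\bar{b}_n\big)\sim\big(\hat{b}_n\big)\Leftrightarrow\big(\big(\bar{b}_n\big)\succsim\big(\hat{b}_n\big)\land\big(\hat{b}_n\big)\succsim\big(\bar{b}_n\big)\big)$: were $\big(\bar{b}_n\big)\sim\big(\hat{b}_n\big)$ to hold it would entail the reverse weak preference $\big(\hat{b}_n\big)\succsim\big(\bar{b}_n\big)$, contradicting the second fact. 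Hence $\big(\bar{b}_n\big)\succ\big(\hat{b}_n\big)$ at the chosen instance. The closing ``universally indifferent'' sentence is then just the degenerate case $D=\mathbb{R}_{++}^{m\times\infty}$, in which the same companion equivalence delivers $\sim$ at every instance.

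I do not expect a genuine obstacle: once the statement is read pointwise the whole argument is a short propositional deduction from the equivalences in Definition \ref{Definition of preference}. The only analytic ingredient --- that the two weak preferences together pin $\lim_{n\to\infty}\big(W_n\big(\hat{b}_n\big)-W_n\big(\bar{b}_n\big)\big)$ to $0$ via $\limsup\le 0\le\liminf$ --- is already subsumed in that companion equivalence, so no new $\limsup/\liminf$ computation is required. The single point demanding care is the informal phrasing: I must commit to the reading that ``holds only in $D$'' means ``holds on $D$ and nowhere else,'' since otherwise the complement need not witness failure of the reverse relation and the deduction stalls.
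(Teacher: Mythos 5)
Your proposal is correct and matches the paper's own (implicit) justification: the paper offers no separate appendix proof for this Claim, instead relying inline on completeness and the definitional equivalences under Definition \ref{Definition of preference} --- exactly the pointwise deduction you carry out, where outside $D$ the failure of the reverse weak preference rules out indifference via $\big(\bar{b}_{n}\big)\sim\big(\hat{b}_{n}\big)\Leftrightarrow\big(\bar{b}_{n}\big)\succsim\big(\hat{b}_{n}\big)\land\big(\hat{b}_{n}\big)\succsim\big(\bar{b}_{n}\big)$, upgrading the everywhere-valid weak preference to strict. Your care over the reading of ``only in $D$'' as ``on $D$ and nowhere else'' is also the reading the paper intends, so nothing is missing.
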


\section{A construction for combinatorial strategy}

\subsection{The benchmark strategy}

In this section, a \textit{benchmark strategy}, denoted as $\big(b_{n}^{*}\big)$,
is formed using the concept in Definition \ref{def 4} such that the
preference $\big(b_{n}^{*}\big)\succsim\big(\ddot{b}_{n}\big)$ is
more universal than the reverse one $\big(\ddot{b}_{n}\big)\succsim\big(b_{n}^{*}\big)$
in the sense of Claim \ref{Consistent-universal-preference}. Although
the strategy $\big(b_{n}^{*}\big)$ is unattainable as it relies on
future data like the baseline $\big(\ddot{b}_{n}\big)$, we could
create a combinatorial strategy $\big(b_{n}\big)$ using only past
data such that the preference $\big(b_{n}\big)\sim\big(b_{n}^{*}\big)$
holds everywhere in $\mathbb{R}_{++}^{m\times\infty}$, making $\big(b_{n}\big)\succsim\big(\ddot{b}_{n}\big)$
more universal than the reverse preference by the transitivity of
the preference relation.
\begin{defn}
Given a set of component strategies $\big\{\big(b_{n}^{\alpha}\big):\alpha\in\left[k\right]\big\}$,
for any \textit{constant combination} $\lambda\coloneqq\left(\lambda_{1},...,\lambda_{k}\right)\in\mathcal{B}^{k}$\textit{,
}a corresponding\textit{ constant combinatorial strategy} $\big(b_{n}^{\lambda}\big)$
is formed. Subsequently, the best constant combination $\lambda_{n}\coloneqq\left(\lambda_{n,1},...,\lambda_{n,k}\right)$
at time $n$ is determined as follows:\label{def 4}
\[
\lambda_{n}\coloneqq\operatorname*{argmax}_{\lambda\in\mathcal{B}^{k}}S_{n}\big(b_{n}^{\lambda}\big),\text{ where }b_{n}^{\lambda}\coloneqq{\displaystyle \sum_{\alpha=1}^{k}}\lambda_{\alpha}b_{n}^{\alpha},\,\forall n.
\]
\end{defn}
Each constant combinatorial strategy allocates fixed proportions of
capital into the corresponding component strategies. By utilizing
this concept, we establish a benchmark strategy as the best merger
in hindsight of all component strategies $\big(b_{n}^{\lambda}\big)$
similarly to Definition \ref{Baseline strategy}.
\begin{defn}
(Benchmark strategy). The benchmark strategy, represented by $\big(b_{n}^{*}\big)$
henceforth, is defined as one satisfying $S_{n}\big(b_{n}^{*}\big)\coloneqq S_{n}\big(b_{n}^{\lambda_{n}}\big)$
for all $n$, which is simply the best-merged strategy of all component
constant combinatorial strategies $\big(b_{n}^{\lambda}\big)$.\label{(Benchmark-strategy)}
\end{defn}
Definition \ref{(Benchmark-strategy)} immediately results in $W_{n}\big(b_{n}^{*}\big)\geq W_{n}\big(\ddot{b}_{n}\big)$,
so the preference $\big(b_{n}^{*}\big)\succsim\big(\ddot{b}_{n}\big)$
holds for any sequence $x_{1}^{\infty}$. If $\big(b_{n}^{*}\big)\succ\big(\ddot{b}_{n}\big)$
for some sequences in $\mathbb{R}_{++}^{m\times\infty}$, any combinatorial
strategy satisfying $\big(b_{n}\big)\sim\big(b_{n}^{*}\big)$ implies
$\big(b_{n}\big)\succ\big(\ddot{b}_{n}\big)$ due to the transitivity
and asymmetry of the preference relation, thus meeting the Combination
criterion. Therefore, it is sufficient to secure the preference $\big(b_{n}^{*}\big)\succ\big(\ddot{b}_{n}\big)$
for a sequence $x_{1}^{\infty}$ by verifying if $\big(W_{n}\big(b_{n}^{*}\big)-W_{n}\big(\ddot{b}_{n}\big)\big)\nrightarrow0$
as $n\to\infty$ due to $\big(b_{n}\big)\succsim\big(\ddot{b}_{n}\big)\land\big(b_{n}\big)\nsim\big(\ddot{b}_{n}\big)\Leftrightarrow\big(b_{n}\big)\succ\big(\ddot{b}_{n}\big)$.\smallskip

By definition, the constant combination mechanism exploits the inherent
extent of difference in returns between the component strategies.
If there is a significant discrepancy in the behaviors of capital
growth among the component strategies, the maximizers $\lambda_{n}$
will not consistently remain close to the vertices as time develops.
As a result, the benchmark strategy will infinitely often exceed the
baseline strategy, and so $W_{n}\big(b_{n}^{*}\big)>W_{n}\big(\ddot{b}_{n}\big)$
at infinitely many $n$. The following proposition provides a basis
for explaining the behavior of the benchmark strategy.
\begin{prop}
Given a set of strategies $\big\{\big(b_{n}^{\alpha}\big):\alpha\in\left[k\right]\big\}$,
suppose there exist $h$ disjoint and non-empty sets indexed as $\mathcal{A}^{j}$,
$\bigcup_{j\in\left[h\right]}\mathcal{A}^{j}=\left[k\right]$, and
a set $\mathcal{H}\subseteq\left[k\right]$ such that $\left|\mathcal{H}\cap\mathcal{A}^{j}\right|=1$
for all $j\in\left[h\right]\coloneqq\big\{1,...,h\big\}$, which satisfy
the following:
\[
\big<b_{n}^{\alpha^{j}},x_{n}\big>\geq\big<b_{n}^{\alpha},x_{n}\big>,\text{ \ensuremath{\forall\alpha\in\mathcal{A}^{j}},\,\ensuremath{\forall n}}\text{, where }\left\{ \alpha^{j}\right\} \coloneqq\mathcal{H}\cap\mathcal{A}^{j},j\in\left[h\right].
\]
Let $\lambda\in\mathcal{B}^{k}$ and $\gamma\in\mathcal{B}^{h}$ denote
the constant combinations of the component strategies in $\mathcal{A}$
and $\mathcal{H}$, respectively, then $S_{n}\big(b_{n}^{\lambda_{n}}\big)=S_{n}\big(b_{n}^{\gamma_{n}}\big)$
for all $n$.\label{S(lamda Y) =00003D S(lamda gamma)} 
\end{prop}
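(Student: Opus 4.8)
The plan is to prove the two inequalities $S_n\big(b_n^{\gamma_n}\big)\le S_n\big(b_n^{\lambda_n}\big)$ and $S_n\big(b_n^{\lambda_n}\big)\le S_n\big(b_n^{\gamma_n}\big)$ separately, for each fixed $n$. The first is immediate: every $\gamma\in\mathcal{B}^h$ embeds into $\mathcal{B}^k$ by placing the weight $\gamma_j$ on the representative index $\alpha^j$ and zero on all other indices, so each $b_n^\gamma$ is a particular $b_n^\lambda$. Hence the set of portfolios generated by $\mathcal{B}^h$ is contained in the one generated by $\mathcal{B}^k$, and maximizing $S_n$ over the larger set can only yield a weakly larger value; since $\gamma_n$ and $\lambda_n$ are the respective maximizers (attained by compactness of the simplices and continuity of $\lambda\mapsto S_n(b_n^\lambda)$, a product of affine functions), this gives $S_n\big(b_n^{\gamma_n}\big)\le S_n\big(b_n^{\lambda_n}\big)$.

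For the reverse inequality, I would introduce the weight-aggregation map sending an arbitrary $\lambda\in\mathcal{B}^k$ to $\gamma\in\mathcal{B}^h$ defined by $\gamma_j\coloneqq\sum_{\alpha\in\mathcal{A}^j}\lambda_\alpha$, i.e. collapsing all the mass that $\lambda$ assigns to block $\mathcal{A}^j$ onto its representative $\alpha^j$. This map is well-defined into $\mathcal{B}^h$ because the blocks partition $[k]$, so $\sum_{j\in[h]}\gamma_j=\sum_{\alpha\in[k]}\lambda_\alpha=1$ and each $\gamma_j\ge0$. The crux is the per-period comparison of returns. Writing the single-period return as $\langle b_i^\lambda,x_i\rangle=\sum_{j\in[h]}\sum_{\alpha\in\mathcal{A}^j}\lambda_\alpha\langle b_i^\alpha,x_i\rangle$ and using the hypothesis $\langle b_i^{\alpha^j},x_i\rangle\ge\langle b_i^\alpha,x_i\rangle$ for every $\alpha\in\mathcal{A}^j$ together with $\lambda_\alpha\ge0$, each block sum is bounded by $\big(\sum_{\alpha\in\mathcal{A}^j}\lambda_\alpha\big)\langle b_i^{\alpha^j},x_i\rangle=\gamma_j\langle b_i^{\alpha^j},x_i\rangle$; summing over $j$ yields $\langle b_i^\lambda,x_i\rangle\le\langle b_i^\gamma,x_i\rangle$ for every $i$.

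Finally I would lift this per-period inequality to cumulative wealth. Since returns are strictly positive ($x_i\in\mathbb{R}_{++}^m$ and the portfolios lie in the simplex), all factors $\langle b_i^\lambda,x_i\rangle$ and $\langle b_i^\gamma,x_i\rangle$ are positive, so the term-by-term inequality multiplies through the product formula $S_n(b_n^\lambda)=\prod_{i=1}^n\langle b_i^\lambda,x_i\rangle$ to give $S_n\big(b_n^\lambda\big)\le S_n\big(b_n^\gamma\big)$ for every $n$. Applying this to the maximizer $\lambda=\lambda_n$ and its aggregate $\gamma$ gives $S_n\big(b_n^{\lambda_n}\big)\le S_n\big(b_n^\gamma\big)\le S_n\big(b_n^{\gamma_n}\big)$, which combines with the first inequality to yield the claimed equality for all $n$. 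The argument is mostly mechanical; the only point demanding care is that the aggregation map respects the simplex constraint and that the domination hypothesis holds at every single period $i\le n$ simultaneously, so that the per-period bounds may be multiplied — this is precisely where the $\forall n$ in the hypothesis is essential, since a representative dominating only on average would not license the term-by-term product bound.
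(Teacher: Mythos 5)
Your proposal is correct and follows essentially the same route as the paper's proof: one direction via the inclusion of $\mathcal{B}^{h}$-combinations among $\mathcal{B}^{k}$-combinations (the paper's $\mathcal{H}\subseteq[k]$ step), and the other via collapsing each block's mass $\gamma_{j}=\sum_{\alpha\in\mathcal{A}^{j}}\lambda_{\alpha}$ onto the dominating representative and multiplying the per-period bounds, which is precisely the paper's chain of inequalities. Your added care about well-definedness of the aggregation map, existence of the maximizers, and positivity of returns only makes explicit what the paper leaves implicit.
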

\begin{proof}
See Appendix.
\end{proof}
Proposition \ref{S(lamda Y) =00003D S(lamda gamma)} states that if
the set of component strategies can be separated into $h$ subsets,
such that each subset contains a single strategy that never yields
lower returns than the other strategies in the same subset, then the
cumulative wealth of the baseline strategy will always be equal to
that of the best merged one of only $h$ dominating strategies. Consequently,
the maximizers $\lambda_{n}$ have only $h$ nonzero components regardless
of the number of component strategies. Hence, if there exists a single
strategy that generates higher returns than all others after a finite
number of time periods, the maximizers $\lambda_{n}$ will remain
close to a vertex of $\mathcal{B}^{k}$ as time develops, no matter
how the remainders behave, so $\big(W_{n}\big(b_{n}^{*}\big)-W_{n}\big(\ddot{b}_{n}\big)\big)\rightarrow0$.

\subsection{Small scale of component strategies}

In this section, we consider the case of a small number $k\geq2$
for the set $\big\{\big(b_{n}^{\alpha}\big):\alpha\in\left[k\right]\big\}$
of component strategies. After introducing a specific benchmark strategy
$\big(b_{n}^{*}\big)$ in the previous section, we propose now a combinatorial
strategy $\big(b_{n}\big)$, which does not depend on future data,
such that $\big(b_{n}\big)\sim\big(b_{n}^{*}\big)$ for all instances
of sequences in $\mathbb{R}_{++}^{m\times\infty}$. This goal can
be accomplished by demonstrating that the distance $\left|\log S_{n}\big(b_{n}^{*}\big)-\log S_{n}\big(b_{n}\big)\right|$
in the worst case of sequence $x_{1}^{n}$ is sublinear in the time
variable $n$. To this end, we propose a combinatorial strategy as
follows:\smallskip

\textbf{Strategy proposal}: At each time $n$, the investor makes
the following combination after observing the past cumulative wealths
$S_{n-1}\big(b_{n-1}^{\alpha}\big)$ of $k$ component strategies:
\begin{equation}
b_{1}\coloneqq{\displaystyle \dfrac{1}{k}{\displaystyle \sum_{i=1}^{k}b_{1}^{i}}}\text{ and }{\displaystyle b_{n}\coloneqq{\displaystyle \dfrac{{\displaystyle \int_{\mathcal{B}^{k}}{\displaystyle b_{n}}}S_{n-1}\big(b_{n-1}^{\lambda}\big)\mu\left(\lambda\right)\dif\lambda}{{\displaystyle \int_{\mathcal{B}^{k}}S_{n-1}\big(b_{n-1}^{\lambda}\big)\mu\left(\lambda\right)\dif\lambda}}},\,\forall n\geq2,}\label{eq:3.3}
\end{equation}
where $\mu\left(\lambda\right)$ is the probability density with respect
to the variable $\lambda\in\mathcal{B}^{k}$, and $\big(b_{n}^{\lambda}\big)$
is the strategy corresponding to a constant combination $\lambda$
of the component strategies.\smallskip

If the density $\mu\left(\lambda\right)$ is uniform, the proposed
combinatorial strategy given by (\ref{eq:3.3}) is always exceeded
by the benchmark strategy due to:
\begin{align}
S_{n}\big(b_{n}\big)={\displaystyle \prod_{i=1}^{n}\dfrac{{\displaystyle \int_{\mathcal{\mathcal{B}}^{k}}\big<{\displaystyle b_{i}^{\lambda}},x_{i}\big>}S_{i-1}\big({\displaystyle b_{i-1}^{\lambda}\big)}\mu\left(\lambda\right)\dif\lambda}{{\displaystyle \int_{\mathcal{\mathcal{B}}^{k}}S_{i-1}\big({\displaystyle b_{i-1}^{\lambda}\big)}\mu\left(\lambda\right)\dif\lambda}}} & {\displaystyle =\int_{\mathcal{\mathcal{B}}^{k}}S_{n}\big({\displaystyle b_{n}^{\lambda}\big)}\mu\left(\lambda\right)\dif\lambda\text{ \,(by telescoping)}}\nonumber \\
 & <S_{n}\big(b_{n}^{\lambda_{n}}\big)\int_{\mathcal{\mathcal{B}}^{k}}\mu\left(\lambda\right)\dif\lambda=S_{n}\big(b_{n}^{*}\big),\,\forall n.\label{eq:3.5}
\end{align}
In this case, the combinatorial strategy is equivalent to the \textit{exponentially
weighted average forecaster} with the learning rate parameter of one
and the logarithmic loss function, as established in \citet{CesaBianchi1997},
\citet{CesaBianchi1999,CesaBianchi2006}. Proposition \ref{finite strategy},
given below, establishes an upper bound on the positive difference
$\log S_{n}\big(b_{n}^{*}\big)-\log S_{n}\big(b_{n}\big)$ in the
worst case of the sequence $x_{1}^{n}$. It follows from the theorem
for the minimax problem shown in \citet{Cover1991}, \citet{Cover1996a},
\citet{Cover2006} and \citet{CesaBianchi2006}\footnote{Within the framework of \textit{Prediction with} \textit{expert advice},
as discussed in \citet{CesaBianchi1997}, \citet{CesaBianchi1999,CesaBianchi2006},
the Universal portfolio is equivalent to the Laplace mixture forecaster,
studied in \citet{Davisson1973}, \citet{Rissanen1986}, for the case
of uniform distribution, and to the Krichevsky\textendash Trofimov
mixture forecaster, proposed by \citet{Krichevsky1981}, for the Dirichlet
distribution in the Kelly betting market. It has also been investigated
within another umbrella of \textit{Online portfolio selection}, as
in\textbf{ }\citet{Ordentlich1996}, \citet{Helmbold1998}, \citet{Blum1999},
\citet{Fostera1999},\textbf{ }\citet{Vovk1998}, \citet{Erven2020}.}.
\begin{prop}
If the combinatorial strategy $\big(b_{n}\big)$ is constructed according
to (\ref{eq:3.3}) with the uniform probability density $\mu\left(\lambda\right)$
over the simplex $\mathcal{\mathcal{B}}^{k}$, then:
\[
{\displaystyle \max_{x_{1}^{n}}\left(\log S_{n}\left(b_{n}^{*}\right)-\log S_{n}\left(b_{n}\right)\right)\leq\left(k-1\right)\log\left(n+1\right),\text{ \ensuremath{\forall n}}},
\]
where $\big(b_{n}^{*}\big)$ is the benchmark strategy of all component
strategies in the set $\big\{\big(b_{n}^{\alpha}\big):\alpha\in\left[k\right]\big\}$.\label{finite strategy}
\end{prop}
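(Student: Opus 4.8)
The plan is to bound the per-sequence regret $\log S_n(b_n^*)-\log S_n(b_n)$ uniformly in $x_1^n$---so that the maximum over $x_1^n$ is immediate---by exploiting the polynomial structure of the wealth as a function of the mixing weight $\lambda$. The starting point is the telescoping identity already derived in (\ref{eq:3.5}), namely $S_n(b_n)=\int_{\mathcal{B}^k}S_n(b_n^\lambda)\,\mu(\lambda)\dif\lambda$ with $\mu$ uniform, together with $S_n(b_n^*)=\max_{\lambda\in\mathcal{B}^k}S_n(b_n^\lambda)$. Writing $g_{i,\alpha}\coloneqq\langle b_i^\alpha,x_i\rangle>0$, the single-period factor $\langle b_i^\lambda,x_i\rangle=\sum_\alpha\lambda_\alpha g_{i,\alpha}$ is linear in $\lambda$, so $S_n(b_n^\lambda)=\prod_{i=1}^n\big(\sum_\alpha\lambda_\alpha g_{i,\alpha}\big)$ is a homogeneous degree-$n$ polynomial in $\lambda$ with nonnegative coefficients. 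Expanding it in monomials gives $S_n(b_n^\lambda)=\sum_{\vec j}c_{\vec j}\,\lambda^{\vec j}$, where the sum runs over the $\binom{n+k-1}{k-1}$ multi-indices $\vec j=(j_1,\dots,j_k)$ with $j_\alpha\ge0$ and $\sum_\alpha j_\alpha=n$, and each $c_{\vec j}\ge0$.

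Reducing the whole ratio to a per-monomial comparison is the crux. The goal is to show, for every single monomial,
\[
\int_{\mathcal{B}^k}\lambda^{\vec j}\,\mu(\lambda)\dif\lambda\ \ge\ \frac{1}{\binom{n+k-1}{k-1}}\,\max_{\lambda\in\mathcal{B}^k}\lambda^{\vec j}.
\]
Granting this, I would finish by interchanging sum and integral, applying the per-monomial bound termwise (legitimate because $c_{\vec j}\ge0$), and then using $\sum_{\vec j}c_{\vec j}\max_\lambda\lambda^{\vec j}\ge\max_\lambda\sum_{\vec j}c_{\vec j}\lambda^{\vec j}=S_n(b_n^*)$---the elementary fact that a sum of maxima dominates the maximum of the sum. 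This chain yields $S_n(b_n)\ge S_n(b_n^*)/\binom{n+k-1}{k-1}$, and since each of the $k-1$ free coordinates $j_1,\dots,j_{k-1}$ ranges in $\{0,\dots,n\}$ we have $\binom{n+k-1}{k-1}\le(n+1)^{k-1}$; taking logarithms gives the claim for every $x_1^n$, hence for the maximum.

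It remains to establish the per-monomial inequality, which is where the exact constant is won. I would evaluate the left-hand side with the Dirichlet integral, $\int_{\mathcal{B}^k}\lambda^{\vec j}\mu(\lambda)\dif\lambda=(k-1)!\,\prod_\alpha j_\alpha!\,/\,(n+k-1)!$, and the right-hand side by weighted AM--GM, $\max_\lambda\lambda^{\vec j}=\prod_\alpha(j_\alpha/n)^{j_\alpha}=\prod_\alpha j_\alpha^{j_\alpha}/n^n$ (with $0^0\coloneqq1$, attained at $\lambda_\alpha=j_\alpha/n$). Substituting $\binom{n+k-1}{k-1}=(n+k-1)!/(n!\,(k-1)!)$ and simplifying, the desired inequality collapses to the purely combinatorial statement $\binom{n}{j_1,\dots,j_k}\le n^n/\prod_\alpha j_\alpha^{j_\alpha}$. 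This last bound follows at once from the multinomial theorem: taking $p_\alpha\coloneqq j_\alpha/n$, the full multinomial expansion of $(\sum_\alpha p_\alpha)^n=1$ is a sum of nonnegative terms, of which $\binom{n}{\vec j}\prod_\alpha p_\alpha^{j_\alpha}$ is a single one, so it is at most $1$.

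I expect the main obstacle to be obtaining the constant \emph{exactly}, rather than merely the shape $(k-1)\log(n+1)$. The intuitive geometric route---lower-bounding the integral by restricting to a shrunken simplex $\{(1-\varepsilon)\lambda^*+\varepsilon\nu:\nu\in\mathcal{B}^k\}$, on which $S_n(b_n^\lambda)\ge(1-\varepsilon)^n S_n(b_n^*)$ and whose relative volume is $\varepsilon^{k-1}$---delivers $\log(S_n(b_n^*)/S_n(b_n))\le-n\log(1-\varepsilon)-(k-1)\log\varepsilon$, which after optimizing in $\varepsilon$ still carries an additive $O(1)$ slack and does not reproduce the clean bound for small $k$. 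The monomial-by-monomial argument above avoids this slack entirely, and the only genuinely delicate point is checking that the per-monomial comparison, the termwise summation, and the max-of-sum step all stay tight simultaneously; the degenerate cases $j_\alpha=0$ are handled by the $0^0=1$ convention, which is consistent across the Dirichlet integral, the AM--GM maximizer, and the multinomial identity.
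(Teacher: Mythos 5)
Your proof is correct, and every step checks out: the telescoping identity from (\ref{eq:3.5}), the expansion of $S_{n}\big(b_{n}^{\lambda}\big)$ into monomials $\lambda^{\vec{j}}$ with nonnegative coefficients, the Dirichlet moment $\int_{\mathcal{B}^{k}}\lambda^{\vec{j}}\mu\left(\lambda\right)\dif\lambda=\left(k-1\right)!\,\prod_{\alpha}j_{\alpha}!\,/\left(n+k-1\right)!$, the evaluation $\max_{\lambda\in\mathcal{B}^{k}}\lambda^{\vec{j}}=\prod_{\alpha}\left(j_{\alpha}/n\right)^{j_{\alpha}}$, and the multinomial bound $\binom{n}{j_{1},\dots,j_{k}}\leq n^{n}/\prod_{\alpha}j_{\alpha}^{j_{\alpha}}$ combine exactly as you claim, yielding the sharper intermediate constant $\log\binom{n+k-1}{k-1}\leq\left(k-1\right)\log\left(n+1\right)$. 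The route, however, is genuinely different from the paper's. The paper's proof consists only of a change of variables plus a citation: it rewrites the regret in terms of the synthetic return vectors $r_{t}\coloneqq\big(\big<b_{t}^{1},x_{t}\big>,\dots,\big<b_{t}^{k},x_{t}\big>\big)$, observes that the regret depends on $x_{1}^{n}$ only through the sequence $r_{1}^{n}$, and then invokes the known worst-case theorem for the Universal portfolio of \citet{Cover1991} and \citet{Cover1996a} applied to the $k$-asset market $r_{1}^{n}$, which delivers $\left(k-1\right)\log\left(n+1\right)$ directly. You perform the same reduction implicitly (your quantities $g_{i,\alpha}$ are precisely the coordinates of $r_{i}$, and proving a bound uniform in $x_{1}^{n}$ is equivalent to the paper's maximization over $r_{1}^{n}$), but instead of citing the theorem you reprove it; your monomial-by-monomial comparison against the uniform Dirichlet prior is in essence the original Cover--Ordentlich argument, so you have reconstructed the content of the cited result rather than found an unrelated path. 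What your version buys is self-containedness and the exact counting constant $\binom{n+k-1}{k-1}$, which is tighter than $\left(n+1\right)^{k-1}$ and makes visible where the bound can be loose; what the paper's version buys is brevity and a modular structure in which the same reduction is reused in the proof of Proposition \ref{infinite strategy}. Your closing remark is also well judged: the shrunken-simplex argument with relative volume $\varepsilon^{k-1}$ indeed only gives the right order with an additive slack after optimizing $\varepsilon$, and your per-monomial argument is the standard way to avoid it.
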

\begin{proof}
See Appendix.
\end{proof}
Due to Proposition \ref{finite strategy} and the strict inequality
(\ref{eq:3.5}), ${\displaystyle \big({\textstyle W_{n}\big(b_{n}^{*}\big)-W_{n}\big(b_{n}\big)}\big)}\to0$,
i.e., $\big(b_{n}^{*}\big)\sim\big(b_{n}\big)$, for any sequence
$x_{1}^{\infty}$ by the sandwich theorem. Note that within the framework
of online portfolio optimization\textit{ }, there exist several other
algorithms as surveyed in \citet{Erven2020}, that satisfy the sublinear
upper bound for $\big(\log S_{n}\big(b_{n}^{*}\big)-\log S_{n}\big(b_{n}\big)\big)$.
However, despite their computational efficiency, these algorithms
rely on certain assumptions, which limit their guarantee in terms
of bounding the minimax problem that is required to theoretically
guarantee the preference $\big(b_{n}^{*}\big)\sim\big(b_{n}\big)$
for all sequences $x_{1}^{\infty}$. Moreover, our intentional choice
of the construction for a combinatorial strategy according to (\ref{eq:3.3})
is to allow further flexibility in its algorithm. In the experiment
section, we propose an accelerated variant for the combinatorial strategy
using a time-variant density instead of the fixed uniform one. This
acceleration method modifies the density by adaptively changing its
support over time, enabling the combinatorial strategy to even exceed
the cumulative wealth of the benchmark strategy.

\subsection{Arbitrarily large scale of component strategies}

When the scale of the set $\big\{\big(b_{n}^{\alpha}\big):\alpha\in\left[k\right]\big\}$
becomes larger, the computation of the proposed combinatorial strategy
becomes increasingly difficult, especially at relatively large scales.
Furthermore, the limit ${\displaystyle \big({\textstyle W_{n}\big(b_{n}^{*}\big)-W_{n}\big(b_{n}\big)}\big)}\to0$
is generally not guaranteed as $k\to\infty$, since the upper bound
established in Proposition \ref{finite strategy} is linear in $k$,
meaning that $\left(k-1\right)\log\left(n+1\right)\to\infty$. Additionally,
when $k\to\infty$, a convex combination of all the component portfolios
results in atomic weights, rendering the combination meaningless.
As a result, the combinatorial strategy reduces to nothing more than
an exponentially weighted average of all component strategies, as
formed by the formula (\ref{eq:WAE}) in the experiment section, thereby
failing to meet the Combination criterion, as $\big(b_{n}^{*}\big)\sim\big(\ddot{b}_{n}\big)$
for any sequence $x_{1}^{\infty}$. To address this issue, we propose
a modified combinatorial strategy below.\smallskip

\textbf{Strategy proposal}: First of all, we form a finite number
of base sets by partitioning an arbitrarily large set $\left[k\right]$
into $N$ smaller disjoint and nonempty sets, denoted as $\mathcal{A}^{j}$,
$j\in\left[N\right]\coloneqq\left\{ 1,...,N\right\} $, $\bigcup_{j\in\left[N\right]}\mathcal{A}^{j}=\left[k\right]$.
At each time period $n$, the investor decides the combination as
follows:
\begin{equation}
b_{1}\coloneqq{\displaystyle \dfrac{1}{N}{\displaystyle \sum_{i=1}^{N}\hat{b}_{1}^{i}}}\text{ and }b_{n}\coloneqq{\displaystyle \dfrac{{\displaystyle \int_{\mathcal{B}^{N}}{\displaystyle b_{n}}}S_{n-1}\big(b_{n-1}^{\lambda}\big)\mu\left(\lambda\right)\dif\lambda}{{\displaystyle \int_{\mathcal{B}^{N}}S_{n-1}\big(b_{n-1}^{\lambda}\big)\mu\left(\lambda\right)\dif\lambda}}},\,\forall n\geq2,\label{eq:3.6}
\end{equation}
where $\mu\left(\lambda\right)$ is the probability density with respect
to the variable $\lambda\in\mathcal{\mathcal{B}}^{N}$, and $\big({\displaystyle b_{n}^{\lambda}\big)}$
denotes the strategy corresponding to a constant combination $\lambda$
of all $N$ \textit{representative} \textit{strategies} $\big(\hat{b}_{n}^{i}\big)$.
In detail, each constant combinatorial portfolio $b_{n}^{\lambda}$
is constructed as follows:
\[
b_{n}^{\lambda}={\displaystyle \sum_{i=1}^{N}\lambda_{i}}\hat{b}_{n}^{i},\,\forall n,
\]
where the representative portfolios $\hat{b}_{n}^{i}$ are constructed
as:
\begin{equation}
\hat{b}_{1}^{i}\coloneqq{\displaystyle \sum_{\alpha\in\mathcal{A}^{i}}{\displaystyle b_{1}^{\alpha}}}\mu^{i}\big(\alpha\big),\,\forall i\in\left[N\right]\text{ and }\hat{b}_{n}^{i}\coloneqq\dfrac{{\displaystyle {\displaystyle \sum_{\alpha\in\mathcal{A}^{i}}}{\displaystyle b_{n}^{\alpha}}}S_{n-1}\left(b_{n-1}^{\alpha}\right)\mu^{i}\left(\alpha\right)}{{\displaystyle \sum_{\alpha\in\mathcal{A}^{i}}}{\displaystyle S_{n-1}\left(b_{n-1}^{\alpha}\right)\mu^{i}\left(\alpha\right)}},\forall n\geq2,i\in\left[N\right],\label{eq:3.7}
\end{equation}
where each $\mu^{i}\left(\alpha\right)$ is the probability mass with
respect to $\alpha\in\mathcal{A}^{i}$ and $i\in\left[N\right]$.\smallskip

The proposed combinatorial strategy utilizes the partition to reduce
dimensionality, where each base set represents a population of component
strategies corresponding to a distribution. Particularly, the representative
strategies $\big(\hat{b}_{n}^{i}\big)$, as per (\ref{eq:3.7}), represent
aggregate behaviors of all the component strategies in each corresponding
base set with respect to the mass $\mu^{i}\left(\alpha\right)$; then
the combinatorial strategy $\big(b_{n}\big)$ is constructed based
on these representative strategies according to (\ref{eq:3.6}), which
is similar to the formula (\ref{eq:3.3}). In addition, concerning
the allocation of wealth among the component strategies, let $P_{n}\big(b_{n}^{\alpha}\big)$
denote the time-dependent distribution of wealth over the corresponding
component strategies $\big(b_{n}^{\alpha}\big)$; then, the capital
allocated to each component portfolio $b_{n}^{\alpha}$ at time $n$
is given as follows:
\[
P_{n}\big(b_{n}^{\alpha}\big)={\displaystyle \dfrac{{\displaystyle \int_{\mathcal{B}^{k}}{\displaystyle \lambda_{i}}}S_{n-1}\big(b_{n-1}^{\lambda}\big)S_{n-1}\big(b_{n-1}^{\alpha}\big)\mu^{i}\big(\alpha\big)\mu\left(\lambda\right)\dif\lambda}{{\displaystyle \int_{\mathcal{B}^{k}}\sum_{\beta\in\mathcal{A}^{i}}S_{n-1}\big(b_{n-1}^{\lambda}\big){\displaystyle S_{n-1}\big(b_{n-1}^{\beta}\big)\mu^{i}\left(\beta\right)\mu\left(\lambda\right)}}\dif\lambda}},\,\forall\alpha\in\mathcal{A}^{i},i\in\left[N\right].
\]
Thus, the return at time $n$ of the combinatorial portfolio $\big(b_{n}\big)$
is ${\displaystyle \sum}_{\alpha\in\left[k\right]}\left\langle b_{n}^{\alpha},x_{n}\right\rangle P_{n}\big(\alpha\big)$.\smallskip

According to the construction of the proposed strategy, the benchmark
strategy $\big(b_{n}^{*}\big)$ is no longer formed by the constant
combinations of the component strategies but of the representative
strategies of the base sets, so the worst-case upper bound derived
in Proposition \ref{finite strategy} does not apply. Instead, Proposition
\ref{infinite strategy} derives the upper bound for the difference
$\log S_{n}\big(\ddot{b}_{n}\big)-\log S_{n}\big(b_{n}\big)$ in the
worst case of sequence $x_{1}^{n}$. However, contrary to Proposition
\ref{finite strategy}, this worst-case difference is not always lower
bounded by zero as in (\ref{eq:3.5}).
\begin{prop}
Consider a combinatorial strategy $\big(b_{n}\big)$ constructed according
to (\ref{eq:3.6}) with the uniform density $\mu\left(\lambda\right)$
over the simplex $\mathcal{\mathcal{B}}^{N}$, and the minimal mass
values ${\displaystyle \min_{i\in\left[N\right]}}\ensuremath{\mu^{i}\big(\alpha_{n}^{i}\big)}\eqqcolon\epsilon_{n}>0$\vspace{-1.3ex}
at $\alpha_{n}^{i}$, which are defined in hindsight as:
\[
\alpha_{n}^{i}=\operatorname*{argmax}_{\alpha\in\mathcal{A}^{i}}S_{n}\big(b_{n}^{\alpha}\big),\,\forall i\in\left[N\right],\,n\geq2,
\]
and the probability mass $\mu^{i}\left(\alpha\right)$ for the representative
portfolio $\hat{b}_{n}^{i}$ according to (\ref{eq:3.7}). We have:
\[
{\displaystyle \max_{x_{1}^{n}}\big(\log S_{n}\big(\ddot{b}_{n}\big)-\log S_{n}\big(b_{n}\big)\big)\leq\left(N-1\right)\log\left(n+1\right)-\log\epsilon_{n}},\text{ \ensuremath{\forall n}},
\]
where $\big(\ddot{b}_{n}\big)$ is the baseline strategy of all component
strategies in the set $\big\{\big(b_{n}^{\alpha}\big):\alpha\in\left[k\right]\big\}$.\label{infinite strategy}
\end{prop}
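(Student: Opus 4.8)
The plan is to route the target difference through the benchmark strategy $\big(b_n^*\big)$ of the $N$ representative strategies and bound the two resulting pieces separately, writing
\[
\log S_n\big(\ddot{b}_n\big)-\log S_n\big(b_n\big)=\big(\log S_n\big(\ddot{b}_n\big)-\log S_n\big(b_n^*\big)\big)+\big(\log S_n\big(b_n^*\big)-\log S_n\big(b_n\big)\big).
\]
For the second bracket I would reduce to Proposition \ref{finite strategy} by an auxiliary-market interpretation. Writing $\hat{x}_{n,i}\coloneqq\langle\hat{b}_n^i,x_n\rangle$ for the return of the $i$-th representative, every constant combination satisfies $\langle b_n^\lambda,x_n\rangle=\sum_{i}\lambda_i\hat{x}_{n,i}=\langle\lambda,\hat{x}_n\rangle$, so $S_n\big(b_n^\lambda\big)=\prod_{t}\langle\lambda,\hat{x}_t\rangle$ is exactly the wealth of the constant-rebalanced portfolio $\lambda$ in the auxiliary market whose $N$ assets have return vectors $\hat{x}_t$. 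The construction (\ref{eq:3.6}) is therefore precisely the uniform-prior universal portfolio of Proposition \ref{finite strategy} applied to this $N$-asset market, and since the Cover bound holds pointwise for any positive return sequence, the same minimax argument gives $\log S_n\big(b_n^*\big)-\log S_n\big(b_n\big)\leq(N-1)\log(n+1)$ for every fixed $x_1^n$ (the data-dependence of the representatives is harmless, as each realized $\hat{x}_t$ is just a fixed number).

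For the first bracket the key is an exact telescoping identity for the representative wealths. Substituting $\langle b_n^\alpha,x_n\rangle S_{n-1}\big(b_{n-1}^\alpha\big)=S_n\big(b_n^\alpha\big)$ into (\ref{eq:3.7}) gives
\[
\langle\hat{b}_n^i,x_n\rangle=\frac{\sum_{\alpha\in\mathcal{A}^i}S_n\big(b_n^\alpha\big)\mu^i(\alpha)}{\sum_{\alpha\in\mathcal{A}^i}S_{n-1}\big(b_{n-1}^\alpha\big)\mu^i(\alpha)},
\]
and multiplying these factors from $t=1$ to $n$ telescopes (the denominator at $t$ cancels the numerator at $t-1$, and the initial denominator equals $\sum_\alpha\mu^i(\alpha)=1$), yielding the clean form $S_n\big(\hat{b}_n^i\big)=\sum_{\alpha\in\mathcal{A}^i}S_n\big(b_n^\alpha\big)\mu^i(\alpha)$; that is, each representative accrues the $\mu^i$-weighted average of the component wealths in its base set. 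I would then finish by a vertex comparison: letting $i^*$ index the base set containing the globally best component so that $S_n\big(\ddot{b}_n\big)=S_n\big(b_n^{\alpha_n^{i^*}}\big)$, dropping all but the maximizing term in the weighted average gives $S_n\big(\hat{b}_n^{i^*}\big)\geq\mu^{i^*}\big(\alpha_n^{i^*}\big)\,S_n\big(\ddot{b}_n\big)\geq\epsilon_n S_n\big(\ddot{b}_n\big)$, while evaluating the benchmark at the vertex $\lambda$ with $\lambda_{i^*}=1$ gives $S_n\big(b_n^*\big)\geq S_n\big(\hat{b}_n^{i^*}\big)$. Chaining these yields $\log S_n\big(\ddot{b}_n\big)-\log S_n\big(b_n^*\big)\leq-\log\epsilon_n$, again for every $x_1^n$, and summing the two brackets and taking $\max_{x_1^n}$ delivers the claim.

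The main obstacle — and the reason the worst-case gap is no longer bounded below by zero as in (\ref{eq:3.5}) — is exactly this first bracket: aggregating each base set through the fixed mass $\mu^i$ dilutes its best component by the factor $\mu^i\big(\alpha_n^i\big)$, so the representatives' benchmark can fall strictly below the all-component baseline $\big(\ddot{b}_n\big)$, and the $-\log\epsilon_n$ term is precisely the price of this dilution. Obtaining that clean penalty hinges on the telescoping identity surviving as an \emph{equality} rather than merely an inequality, which relies on the wealth-proportional weighting hard-wired into (\ref{eq:3.7}); I would take care to verify that the normalizations $S_0=1$ and $\sum_\alpha\mu^i(\alpha)=1$ make the initial denominator equal to one, so that no residual constant survives the telescoping.
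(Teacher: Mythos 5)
Your proposal is correct and takes essentially the same route as the paper's proof: the identical telescoping identity $S_{n}\big(\hat{b}_{n}^{i}\big)=\sum_{\alpha\in\mathcal{A}^{i}}S_{n}\big(b_{n}^{\alpha}\big)\mu^{i}\left(\alpha\right)$, the same mass lower bound $S_{n}\big(\hat{b}_{n}^{i}\big)\geq\epsilon_{n}S_{n}\big(b_{n}^{\alpha_{n}^{i}}\big)$ producing the $-\log\epsilon_{n}$ penalty, and the same reduction of the benchmark-versus-combinatorial gap to Proposition \ref{finite strategy} applied to the $N$ representative strategies, using $S_{n}\big(b_{n}^{*}\big)\geq\max_{i\in\left[N\right]}S_{n}\big(\hat{b}_{n}^{i}\big)$ and $S_{n}\big(\ddot{b}_{n}\big)=\max_{i\in\left[N\right]}S_{n}\big(b_{n}^{\alpha_{n}^{i}}\big)$. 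Your explicit two-bracket decomposition and the auxiliary-market justification of the worst-case bound merely spell out what the paper compresses into a single chain of inequalities.
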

\begin{proof}
See Appendix.
\end{proof}
If the minimal mass values $\epsilon_{n}$ are chosen such that $n^{-1}\log\epsilon_{n}\to0$,
then by Proposition \ref{infinite strategy}, we have $\big(b_{n}\big)\succsim\big(\ddot{b}_{n}\big)$
for any sequence $x_{1}^{\infty}$, as the upper bound converges to
zero. The case of representative strategies with uniform probability
mass $\mu^{i}\left(\alpha\right)$ over each countable base set is
an instance ensuring such convergence. As we noted above, the infimum
limit is not bounded by zero, so there exist sequences $x_{1}^{\infty}$
where the combinatorial strategy $\big(b_{n}\big)$ is not always
exceeded by the baseline strategy $\big(\ddot{b}_{n}\big)$. The remark
below specifies conditions to guarantee the Combination criterion.
In addition, the time required for the combinatorial strategy to exceed
all component strategies increases significantly with larger numbers
of base sets and their respective scales. Additionally, the partitioning
into base sets affects the condition $\big(W_{n}\big(b_{n}\big)-W_{n}\big(\dot{b}_{t}\big)\big)\nrightarrow0$
and the evolutions of representative strategies' cumulative wealths.
Furthermore, the combinatorial strategy reduces to the one proposed
for a small number $k$ of component strategies by equating $N=k$.\smallskip
\begin{rem*}
Let $\big(\dot{b}_{n}\big)$ denote the best merged strategy of all
the representative strategies, as similarly defined as the baseline
strategy in Definition \ref{Baseline strategy}. The benchmark strategy
$\big(b_{n}^{*}\big)$ is defined as the constant combination of the
representative strategies, not the component strategies. If $n^{-1}\log\epsilon_{n}\to0$,
then $\big(\ddot{b}_{n}\big)\sim\big(\dot{b}_{n}\big)$ for any sequence
$x_{1}^{\infty}$; moreover, for sequences such that $\big(W_{n}\big(b_{n}^{*}\big)-W_{n}\big(\dot{b}_{n}\big)\big)\nrightarrow0$,
then $\big(b_{n}\big)\succ\big(\dot{b}_{n}\big)$ so $\big(b_{n}\big)\succ\big(\ddot{b}_{n}\big)$
which satisfies the Combination criterion, due to the transitivity
and asymmetry of the preference relation.
\end{rem*}

\section{Numerical experiments}

The integrals of the combinatorial portfolios corresponding to (\ref{eq:3.6})
or (\ref{eq:3.3}) and (\ref{eq:3.7}) can be numerically approximated
by Riemann sums that partition the simplices $\mathcal{B}^{N}$ and
$\mathcal{B}^{k}$ into finitely discrete grids. Essentially, a finer
partition can enhance the approximation, but it comes at the cost
of increasing computational complexity, especially with high-dimensional
simplices. Figure \ref{figure 2} provides an example from the experiment
section, showcasing the numerical approximation of the investor's
combinatorial strategy using single-step size discretization for the
1-dimensional simplex. It illustrates how more discretization points
and smaller step sizes can improve cumulative wealth. Additionally,
various step sizes and discretization points can be combined for further
refinement. It is important to note that while the theoretically optimal
constant combinations according to Definition \ref{def 4} may not
align with discretization points, fine partitions can ensure that
numerical maximizers closely approximate theoretical values. However,
enhancing numerical approximation is not the primary focus of this
experimental section.
\begin{figure}[H]
\begin{centering}
\includegraphics[scale=0.3]{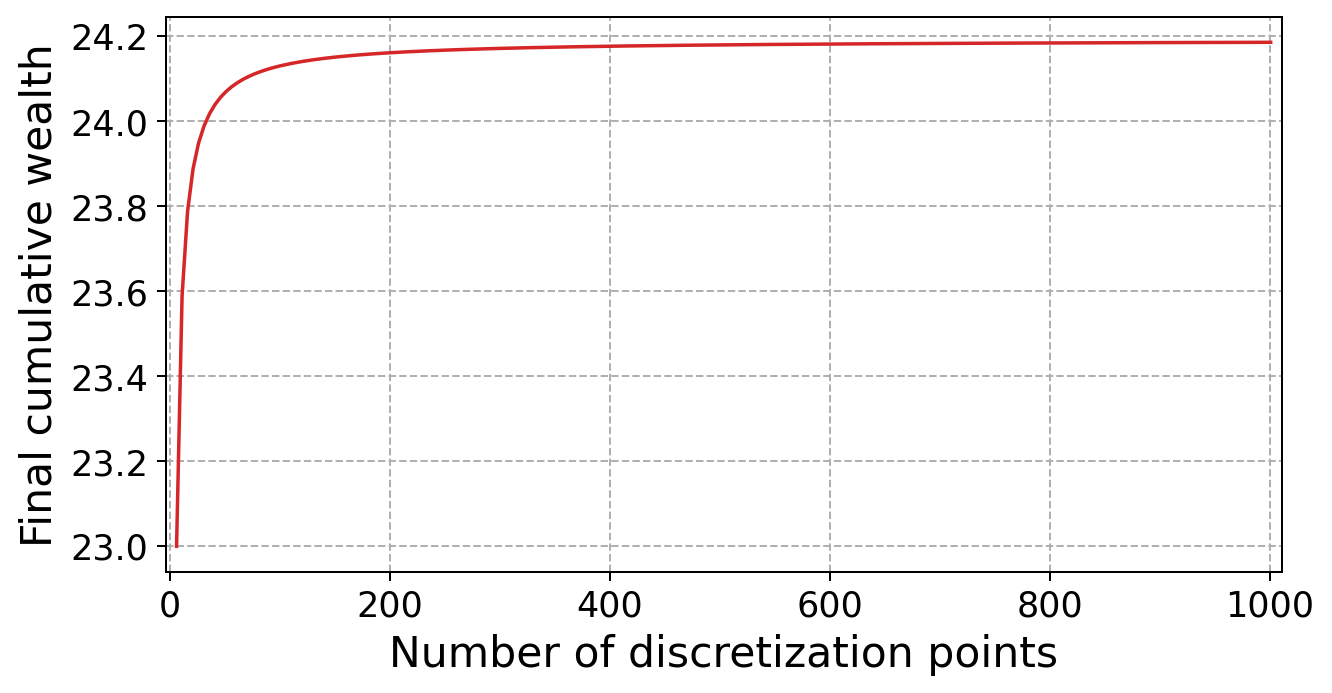}\includegraphics[scale=0.3]{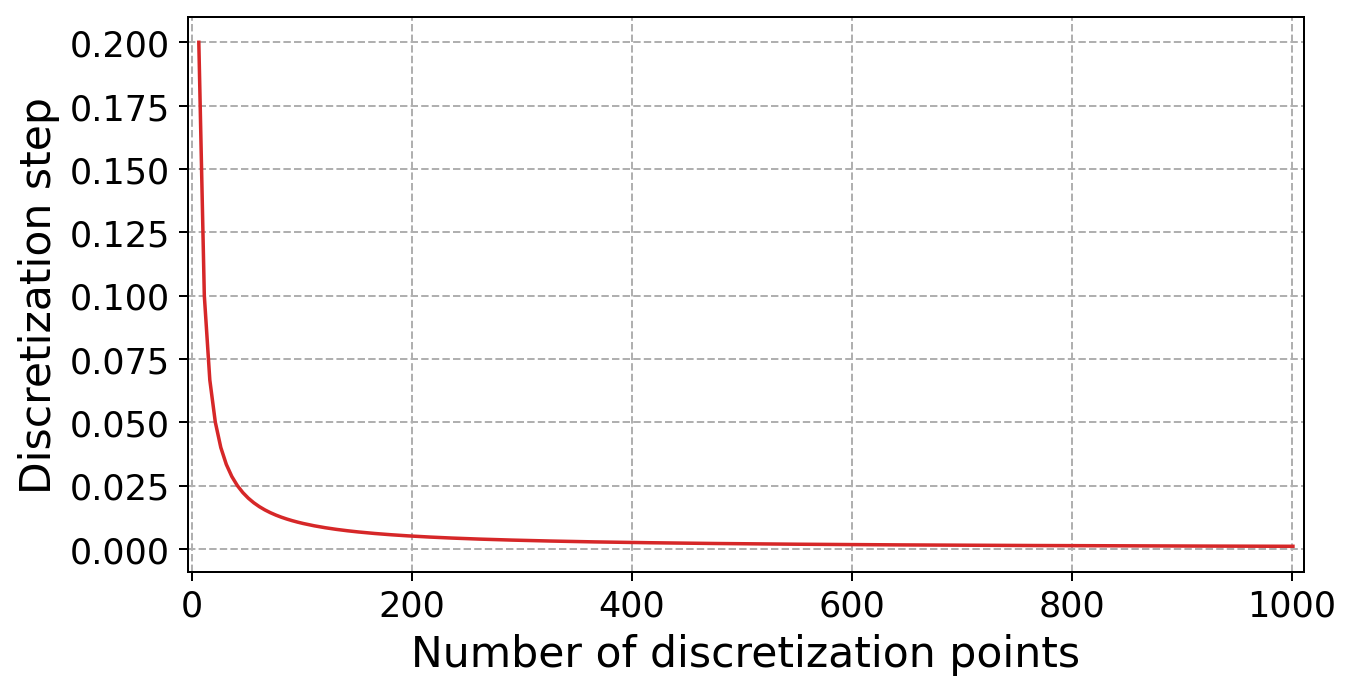}
\par\end{centering}
{\tiny\caption{Final cumulative wealth plotted against discretization points and
step sizes for approximating the combinatorial strategy with two component
strategies.\label{figure 2} }
}{\tiny\par}
\end{figure}

\subsection{Data and testing strategies}

We experiment with six diverse stocks over a 27-year span, from December
31, 1992, to December 31, 2019, covering $6798$ trading days. This
period encapsulates major market events like the dot-com bubble, the
2008 financial crisis, and pre-COVID-19 pandemic, as illustrated in
Figure \ref{figure 3}. The selected stocks are blue-chip symbols,
ranging across several sectors like industrials, energy, technology,
and consumer defensive, traded on the NYSE and NASDAQ exchanges. They
include Advanced Micro Devices, Inc. (AMD), The Boeing Company (BA),
Honeywell International Inc. (HON), The Coca-Cola Company (KO), 3M
Company (MMM), and Exxon Mobil Corporation (XOM). Their varied behaviors
during historic events challenge prediction, making them ideal for
testing our strategies. We used daily adjusted closing prices in USD
provided by The Center for Research in Security Prices, LLC (CRSP).
\begin{figure}[H]
\begin{centering}
\includegraphics[scale=0.33]{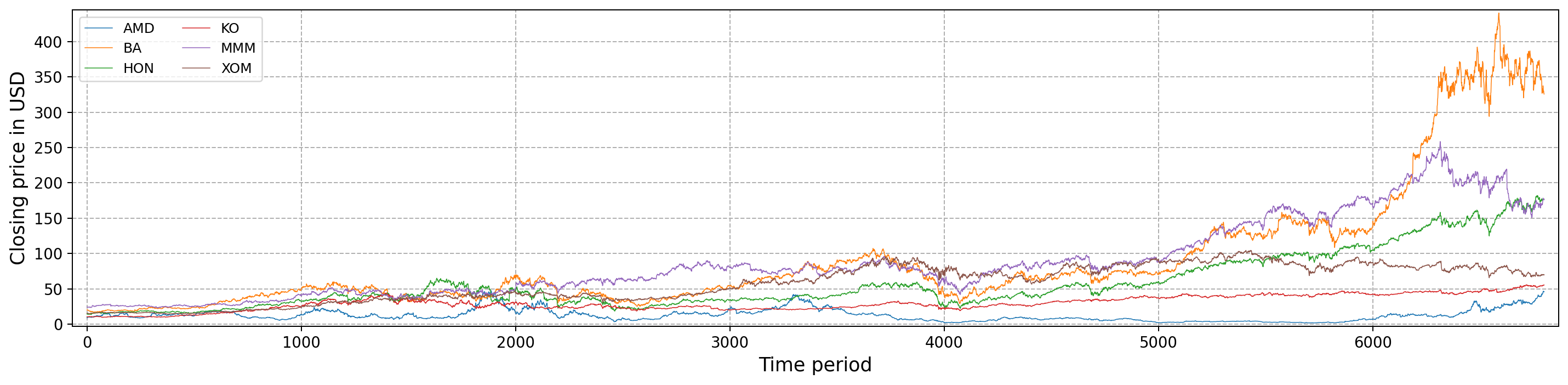}
\par\end{centering}
\caption{Adjusted closing prices in USD of the stocks from December 31, 1992,
to December 31, 2019.\label{figure 3}}
\end{figure}

\textbf{Market scenario}. Let's consider a market where many fund
managers have different risk profiles for their investment strategies,
and an investor can observe them as the components of a combinatorial
strategy. We consider the sequences of Mean-Variance portfolios corresponding
to the managers' risk aversion levels $\alpha\in\mathbb{R}_{+}$,
which are defined as follows, as component strategies:
\[
b_{n}^{\alpha}\coloneqq\operatorname*{argmax}_{b\in\mathcal{B}^{m}}\big(\alpha\left\langle b,\hat{\mu}_{n}\right\rangle -\big<b,\hat{\varSigma}_{n}b\big>\big),\forall n,
\]
where $\hat{\mu}_{n}$ and $\hat{\varSigma}_{n}$ denote the estimators
for the expected values and covariance matrix of the assets' returns
at time $n$. Specifically, we further assume that $\hat{\mu}_{n}$
and ${\displaystyle \hat{\varSigma}_{n}}$ are estimated using a mean-reversion
approach that utilizes past prices of six assets within a moving window
as:
\begin{align*}
\hat{\mu}_{n} & \ensuremath{\coloneqq\frac{{\displaystyle 1}}{J}\stackrel[i=n-J]{n-1}{\sum}x_{i}\text{ \,\,and\,\, }\hat{\varSigma}_{n}\left(i,j\right)\coloneqq{\displaystyle \frac{1}{J-1}}\stackrel[h=n-J]{n-1}{\sum}\big(x_{h,i}-\hat{\mu}_{n,i}\big)\big(x_{h,j}-\hat{\mu}_{n,j}\big)},\text{ \ensuremath{\forall i,j}\ensuremath{\in\left[6\right],\forall n>J,}}
\end{align*}
where $J$ is the scrolling window length set to four weeks $\left(J=20\right)$
for the entire investment simulation, starting from January 29, 1993,
which is considered as the origin day of time $n=1$. The M-V portfolios
are estimated numerically using the OSQP solver by \citet{Stellato2020},
implemented within the CVXPY module for Python by \citet{Diamond2016}
and \citet{Agrawal2018}.\smallskip

\textbf{Testing strategies}. We propose further an accelerated variant
of the combinatorial strategy and assess their performances against
various other (online learning) multi-strategies. The involved strategies
and their corresponding abbreviations are summarized as follows:
\begin{itemize}
\item The proposed combinatorial strategy according to (\ref{eq:3.6}) or
(\ref{eq:3.3}), and (\ref{eq:3.7}) with uniform $\mu\left(\cdot\right)$
and $\mu^{i}\left(\cdot\right)$. We name it as \textit{Universally
Combinatorial }strategy (UC) with associated notation $\big(b_{n}^{\text{UC}}\big)$,
inspired by the Universal Portfolio proposed by \citet{Cover1991}.
\item The aforementioned Mean-Variance strategies (M-V), denoted as $\big(b_{n}^{\alpha}\big)$
with various coefficients $\alpha\in\mathbb{R}_{+}$, which serve
as the component strategies for all the tested multi-strategies.
\item The \textit{Weighted Average Exponential }strategy (WAE). In detail,
the WAE portfolios, denoted by $\big(b_{n}^{\text{WAE}}\big)$, combine
$k$ component portfolios in the set $\big\{\big(b_{n}^{\alpha}\big):\alpha\in\left[k\right]\big\}$,
as follows:
\begin{equation}
b_{1}^{\text{WAE}}=\dfrac{1}{k}\sum_{\alpha\in\left[k\right]}b_{1}^{\alpha}\text{ and }b_{n}^{\text{WAE}}={\displaystyle {\displaystyle \dfrac{{\displaystyle \mathop{\sum_{\alpha\in\left[k\right]}b_{n}^{\alpha}}S_{n-1}\big(b_{n-1}^{\alpha}\big)}}{{\displaystyle \mathop{\sum_{\alpha\in\left[k\right]}}S_{n-1}\big(b_{n-1}^{\alpha}\big)}},\text{ \ensuremath{\forall n\geq2}}.}}\label{eq:WAE}
\end{equation}
\item The \textit{Following-the-Leader }strategy (FL), denoted by $b_{n}^{\text{FL}}$,
uses the best component strategy, among $k$ component strategies
in $\big\{\big(b_{n}^{\alpha}\big):\alpha\in\left[k\right]\big\}$,
in hindsight after $n-1$ periods for the next period $n$. Specifically,
it is defined as follows:
\[
b_{1}^{\text{FL}}=\dfrac{1}{k}\sum_{\alpha\in\left[k\right]}b_{1}^{\alpha}\text{ \,and\, }b_{n}^{\text{FL}}=b_{n-1}^{\alpha_{n-1}},\text{ where }\text{\ensuremath{\alpha_{n-1}}}\coloneqq\operatorname*{argmax}_{\alpha\in\left[k\right]}S_{n-1}\big(b_{n-1}^{\alpha}\big),\forall n\geq2.
\]
\item The accelerated variant of the combinatorial strategy uses a time-varying
density in place of the time-invariant uniform one over the simplex
$\mathcal{B}^{k}$. Namely, we introduce the \textit{Universal combination
on the Winners} (UC-W) as a UC strategy, where the uniform density
$\mu_{n}$ has support on $B_{n}\subset\mathcal{B}^{k}$ at period
$n\geq2$, defined as follows:
\begin{equation}
B_{n}\coloneqq\left\{ \lambda\in\mathcal{B}^{k}:\,S_{n-1}\big(b_{n-1}^{\lambda}\big)>S_{n-1}\big(b_{n-1}^{\bar{\lambda}}\big),\,\forall\bar{\lambda}\in\mathcal{B}^{k}\backslash B_{n}\right\} ,\forall n\geq2.\label{UC-W}
\end{equation}
Accordingly, the \textit{Universal combination on the Losers} (UC-L)
is formed using the subsets:
\begin{equation}
\bar{B}_{n}\coloneqq\left\{ \lambda\in\mathcal{B}^{k}:\,S_{n-1}\big(b_{n-1}^{\lambda}\big)<S_{n-1}\big(b_{n-1}^{\bar{\lambda}}\big),\,\forall\bar{\lambda}\in\mathcal{B}^{k}\backslash\bar{B}_{n}\right\} ,\forall n\geq2.\label{UC-L}
\end{equation}
This method is applicable to the simplex $\mathcal{B}^{N}$ for the
strategy constructed according to (\ref{eq:3.6}). The UC-W and UC-L
strategies are also computed using Riemann approximation.\smallskip
\end{itemize}
The experiments illustrate the evolution of cumulative wealths of
the testing strategies through several graphics. Specifically, we
investigate whether they exhibit a strict preference over the baseline
strategies by eventually exceeding all the M-V component strategies.
Additionally, we report common performance measures for the testing
strategies, including their final cumulative wealth, growth rate,
average return (i.e., empirical expected return), and empirical Sharpe
ratio.

\subsection{Experiments with small scales of component strategies}

In the first four experiments, we implement different small numbers
of component M-V strategies to examine the impact of scales on the
behaviors of the multi-strategies, including $\left\{ 0.005,1\right\} $,
$\left\{ 0.002,0.02,100\right\} $, $\left\{ 0.05,0.2,0.5,0.8,1\right\} $
and $\left\{ 0.002,0.05\right\} $ for risk aversions in Experiments
1, 2, 3, and 4, respectively. Generally, Figure \ref{Figure 4} illustrates
the UC strategies eventually exceed all the component M-V strategies,
except in Experiment 4. The strategies WAE and FL are always exceeded
by the best M-V strategies in all cases. The results indicate that
the FL strategies demonstrate relative effectiveness if there exists
a single M-V strategy that almost always exceeds the others after
$N$ periods, then $\big(\ddot{b}_{n}\big)\sim\big(b_{n}^{\text{FL}}\big)$
due to the following:
\[
\lim_{n\to\infty}\big(W_{n}\big(\ddot{b}_{n}\big)-W_{n}\big(b_{n}^{\text{FL}}\big)\big)=\lim_{n\to\infty}\frac{1}{n}\big(\log S_{N+1}\big(\ddot{b}_{N+1}\big)-\log S_{N+1}\big(b_{N+1}^{\text{FL}}\big)\big)=0.\vspace{0.25ex}
\]
However, the mechanisms of the FL and WAE strategies inherently constrain
their growth rates, being bounded by the baseline strategy. Consequently,
despite the possibility of an indifferent preference with the latter,
the cumulative wealth evolutions of the two former strategies are
worse. These experiments emphasize that attaining the same asymptotic
growth rate as the best component strategy does not guarantee a multi-strategy
not being the worst performer.\smallskip

Table \ref{Table 1} presents performance metrics for the strategies
involved, while Figure \ref{figure 5} displays the final cumulative
wealth against the numerical constant combination. The Sharpe ratios
of the multi-strategies do not surpass the best among the component
M-V strategies. However, the UC strategies offer the most benefit
as they achieve the highest final cumulative wealths in the first
three experiments, with relatively small tradeoffs in terms of the
Sharpe ratios. In contrast, all M-V strategies that yield higher final
wealth and mean of return experience significant decreases in Sharpe
ratios. These favorable performances of the UC strategies are attained
through approximations with respect to the numerical combinations
$\lambda$. Although their performances could potentially improve
with a finer grid, Figure \ref{figure 5} demonstrates that such marginal
improvement is not worthwhile, as the computation may become infeasible
for a large scale of component strategies.
\begin{figure}[H]
\begin{centering}
\includegraphics[scale=0.33]{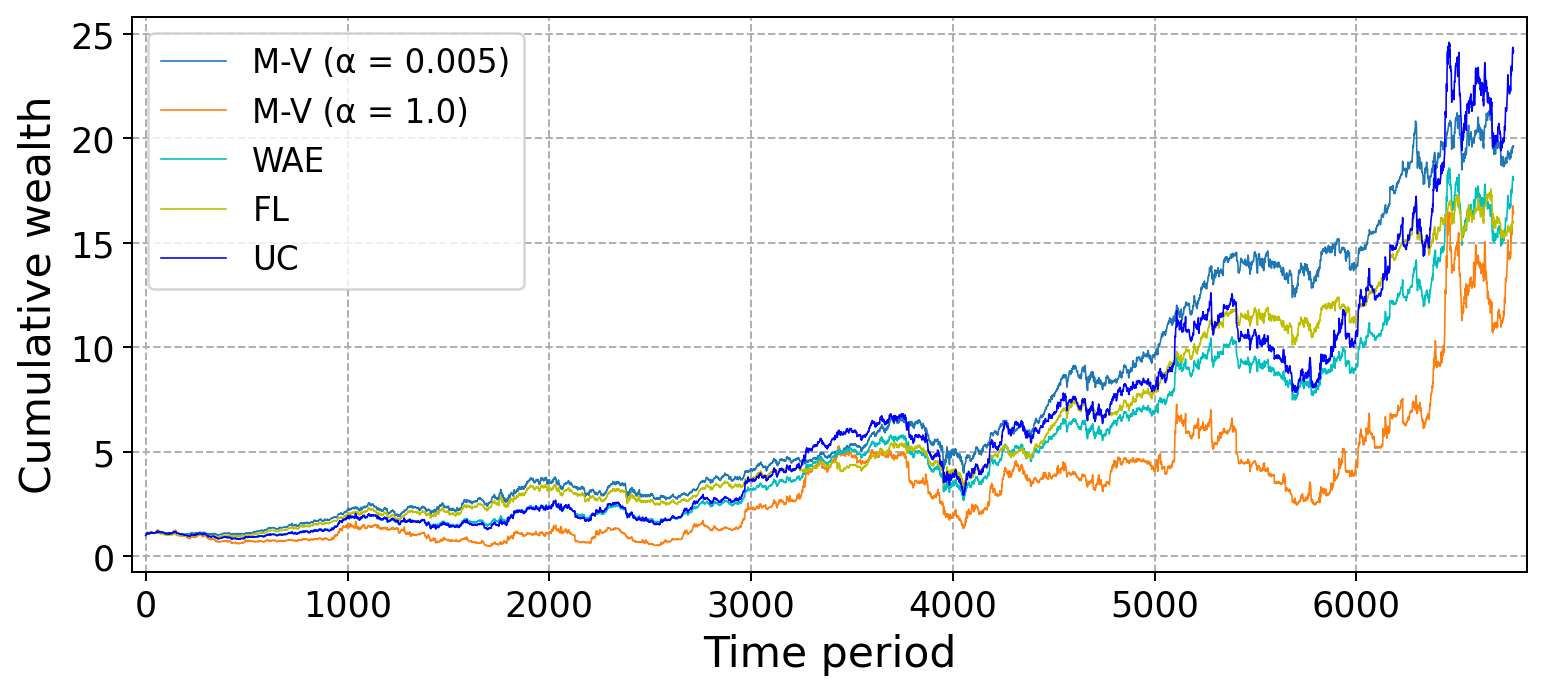} ~\includegraphics[scale=0.33]{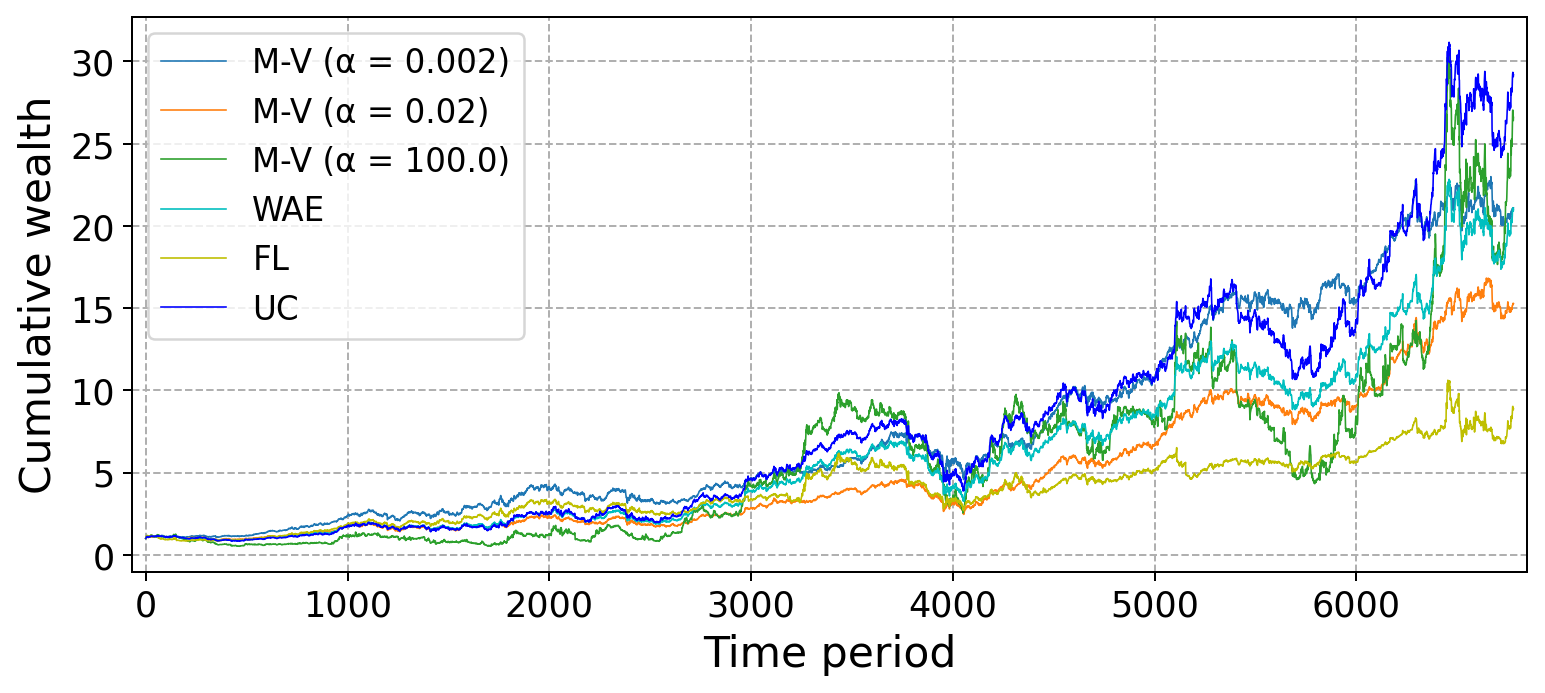}
\par\end{centering}
\begin{centering}
\includegraphics[scale=0.33]{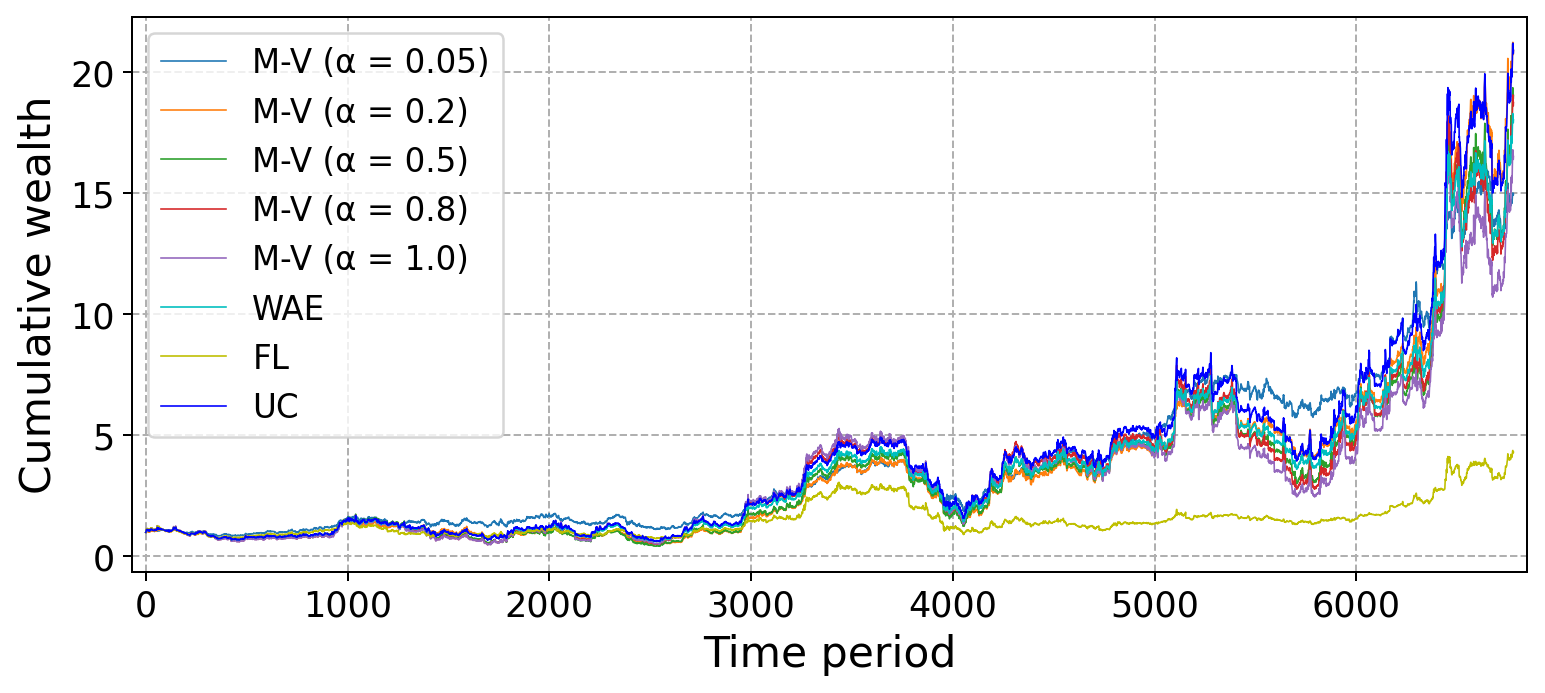} ~\includegraphics[scale=0.33]{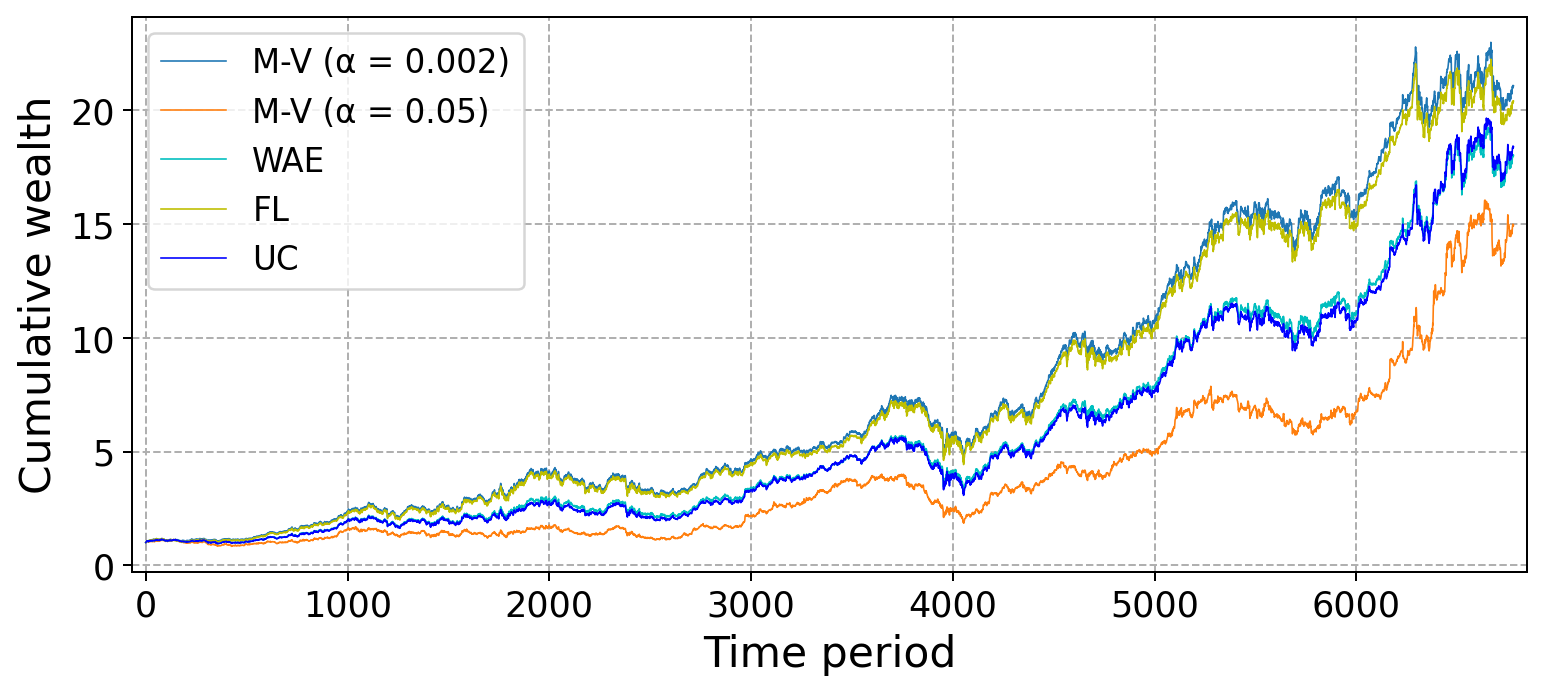}
\par\end{centering}
\caption{Evolutions of the multi-strategies over time in Experiments 1, 2,
3, and 4 (from left-to-right and top-to-bottom order).\label{Figure 4}}

\end{figure}
\renewcommand{\arraystretch}{0.3}
\begin{table}[H]
\caption{Performance measures for the strategies in Experiments 1, 2, 3 and
4.\label{Table 1} }

\begin{centering}
\begin{tabular*}{12cm}{@{\extracolsep{\fill}}>{\raggedright}m{2.5cm}>{\raggedright}m{2cm}>{\raggedright}m{2cm}>{\raggedright}m{2cm}>{\raggedright}m{2cm}}
\toprule 
{\tiny\textbf{Strategy}} & {\tiny\textbf{Final wealth}} & {\tiny\textbf{Average growth rate}} & {\tiny\textbf{Average return}} & {\tiny\textbf{Sharpe ratio}}\tabularnewline
\midrule
\midrule 
{\tiny M-V ($\alpha=0.002$)} & {\tiny 21.081643} & {\tiny 0.000450} & {\tiny 1.000511} & {\tiny 90.884103}\tabularnewline
{\tiny M-V ($\alpha=0.005$)} & {\tiny 19.626578} & {\tiny 0.000439} & {\tiny 1.000500} & {\tiny 90.777017}\tabularnewline
{\tiny M-V ($\alpha=0.02$)} & {\tiny 15.270352} & {\tiny 0.000402} & {\tiny 1.000468} & {\tiny 87.209390}\tabularnewline
{\tiny M-V ($\alpha=0.05$)} & {\tiny 14.951959} & {\tiny 0.000399} & {\tiny 1.000482} & {\tiny 77.889510}\tabularnewline
{\tiny M-V ($\alpha=0.2$)} & {\tiny 20.864838} & {\tiny 0.000448} & {\tiny 1.000615} & {\tiny 54.677952}\tabularnewline
{\tiny M-V ($\alpha=0.5$)} & {\tiny 19.022646} & {\tiny 0.000435} & {\tiny 1.000700} & {\tiny 43.451483}\tabularnewline
{\tiny M-V ($\alpha=0.8$)} & {\tiny 18.728020} & {\tiny 0.000432} & {\tiny 1.000737} & {\tiny 40.520186}\tabularnewline
{\tiny M-V ($\alpha=1$)} & {\tiny 16.503138} & {\tiny 0.000414} & {\tiny 1.000734} & {\tiny 39.543545}\tabularnewline
{\tiny M-V ($\alpha=100$)} & {\tiny 26.641673} & {\tiny 0.000484} & {\tiny 1.000849} & {\tiny 37.017071}\tabularnewline
\end{tabular*}
\par\end{centering}
\begin{centering}
\begin{tabular*}{12cm}{@{\extracolsep{\fill}}>{\raggedright}p{2.5cm}>{\raggedright}p{2cm}>{\raggedright}p{2cm}>{\raggedright}p{2cm}>{\raggedright}p{2cm}}
\toprule 
\multicolumn{5}{c}{{\tiny Experiment 1 with risk aversion $\alpha\in\left\{ 0.005,1\right\} $}}\tabularnewline
\midrule
{\tiny UC } & {\tiny 24.178108} & {\tiny 0.000470} & {\tiny 1.000580} & {\tiny 67.419175}\tabularnewline
{\tiny WAE} & {\tiny 18.060145} & {\tiny 0.000427} & {\tiny 1.000513} & {\tiny 76.449559}\tabularnewline
{\tiny FL} & {\tiny 16.015280} & {\tiny 0.000409} & {\tiny 1.000474} & {\tiny 88.246262}\tabularnewline
\end{tabular*}
\par\end{centering}
\begin{centering}
\begin{tabular*}{12cm}{@{\extracolsep{\fill}}>{\raggedright}p{2.5cm}>{\raggedright}p{2cm}>{\raggedright}p{2cm}>{\raggedright}p{2cm}>{\raggedright}p{2cm}}
\toprule 
\multicolumn{5}{c}{{\tiny Experiment 2 with risk aversion $\alpha\in\left\{ 0.002,0.02,100\right\} $}}\tabularnewline
\midrule
{\tiny UC} & {\tiny 29.192687} & {\tiny 0.000498} & {\tiny 1.000597} & {\tiny 71.217927}\tabularnewline
{\tiny WAE} & {\tiny 20.982959} & {\tiny 0.000449} & {\tiny 1.000535} & {\tiny 76.247127}\tabularnewline
{\tiny FL} & {\tiny 8.8740965} & {\tiny 0.000322} & {\tiny 1.000416} & {\tiny 73.280794}\tabularnewline
\end{tabular*}
\par\end{centering}
\begin{centering}
\begin{tabular*}{12cm}{@{\extracolsep{\fill}}>{\raggedright}p{2.5cm}>{\raggedright}p{2cm}>{\raggedright}p{2cm}>{\raggedright}p{2cm}>{\raggedright}p{2cm}}
\toprule 
\multicolumn{5}{c}{{\tiny Experiment 3 with risk aversion $\alpha\in\left\{ 0.05,0.2,0.5,0.8,1\right\} $}}\tabularnewline
\midrule
{\tiny UC} & {\tiny 20.887842} & {\tiny 0.000448} & {\tiny 1.000645} & {\tiny 50.409559}\tabularnewline
{\tiny WAE} & {\tiny 18.014120} & {\tiny 0.000427} & {\tiny 1.000614} & {\tiny 51.686007}\tabularnewline
{\tiny FL} & {\tiny 4.3021720} & {\tiny 0.000215} & {\tiny 1.000340} & {\tiny 63.518664}\tabularnewline
\end{tabular*}
\par\end{centering}
\begin{centering}
\begin{tabular*}{12cm}{@{\extracolsep{\fill}}>{\raggedright}p{2.5cm}>{\raggedright}p{2cm}>{\raggedright}p{2cm}>{\raggedright}p{2cm}>{\raggedright}p{2cm}}
\toprule 
\multicolumn{5}{c}{{\tiny Experiment 4 with risk aversion $\alpha\in\left\{ 0.002,0.05\right\} $}}\tabularnewline
\midrule
{\tiny UC } & {\tiny 18.401030} & {\tiny 0.000430} & {\tiny 1.000493} & {\tiny 88.740964}\tabularnewline
{\tiny WAE} & {\tiny 18.016801} & {\tiny 0.000427} & {\tiny 1.000489} & {\tiny 89.929546}\tabularnewline
{\tiny FL} & {\tiny 20.401686} & {\tiny 0.000445} & {\tiny 1.000506} & {\tiny 90.848864}\tabularnewline
\bottomrule
\end{tabular*}
\par\end{centering}
\centering{}%
\begin{minipage}[t]{12cm}%
\begin{spacing}{0.5}
{\tiny\textbf{Note}}{\tiny . The UC strategies are approximated by
2001 discretization points by step 0.0005 in Experiments 1 and 4,
4947 discretization points by step 0.01 in Experiment 2, and 9821
discretization points by step 0.05 in Experiment 3.}
\end{spacing}
\end{minipage}
\end{table}
\vspace{-2ex}

The first row of Figure \ref{Figure 6} illustrates the growth rate
differences between the numerically computed benchmark strategy $\big(b_{n}^{*}\big)$
and the UC strategy $\big(b_{n}^{\text{UC}}\big)$, as well as between
the benchmark strategy and the baseline strategy $\big(\ddot{b}_{n}\big)$
of all related M-V strategies. The Combination criterion is satisfied
in the first three experiments but not in the remainder, as $\big(W_{n}\big(\ddot{b}_{n}\big)-W_{n}\big(b_{n}^{*}\big)\big)\to0$
as $n$ increases. In Experiment 1, the M-V strategy $\big(b_{n}^{0.005}\big)$
almost always exceeds the M-V strategy $\big(b_{n}^{1}\big)$, but
the UC strategy is able to exceed the former. Conversely, in Experiment
4, the UC strategy never exceeds the best M-V strategy $\big(b_{n}^{0.002}\big)$,
which almost always exceeds the M-V strategy $\big(b_{n}^{0.05}\big)$.
In the two remaining experiments, the UC strategy is capable of exceeding
the best M-V strategies, where one component strategy experiences
strong fluctuations while the other two experience stable growth over
time in Experiment 2, and five comparable component strategies continuously
compete for the leading positions over time in Experiment 4.
\begin{figure}[H]
\begin{centering}
\begin{tabular}{cc||c||c}
\includegraphics[scale=0.29]{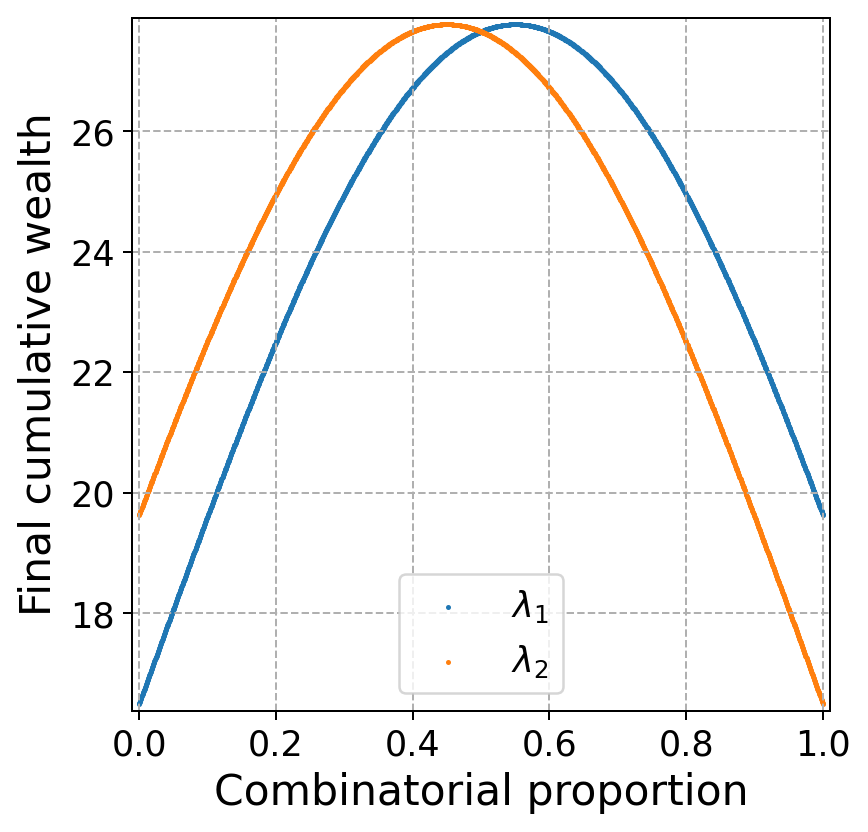} & \multicolumn{3}{c}{\includegraphics[scale=0.29]{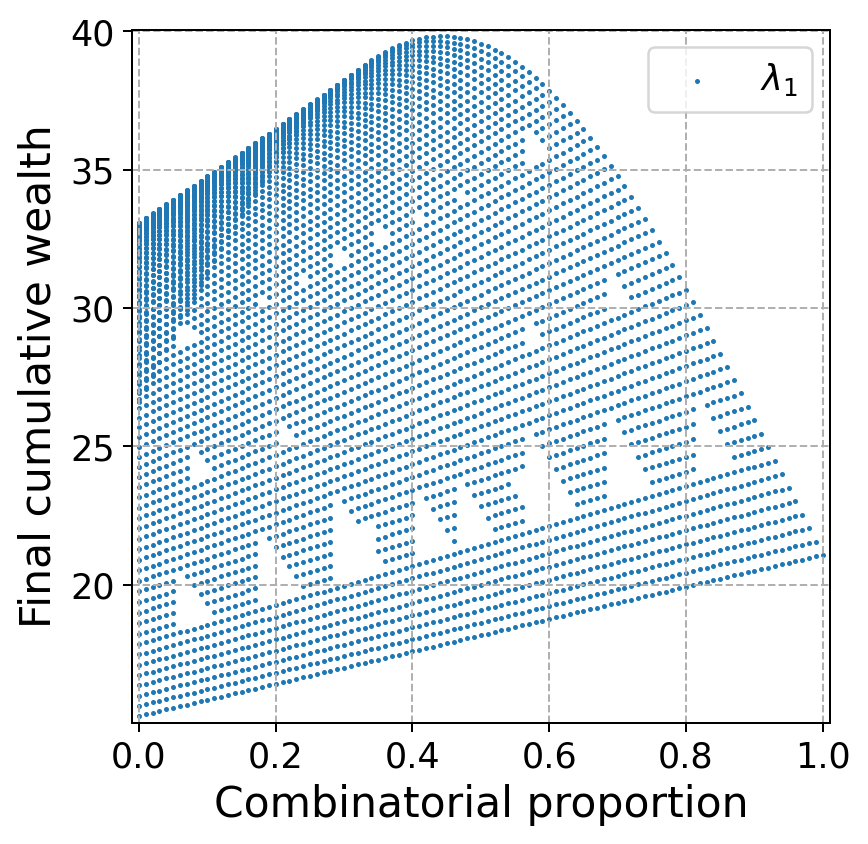}\includegraphics[scale=0.29]{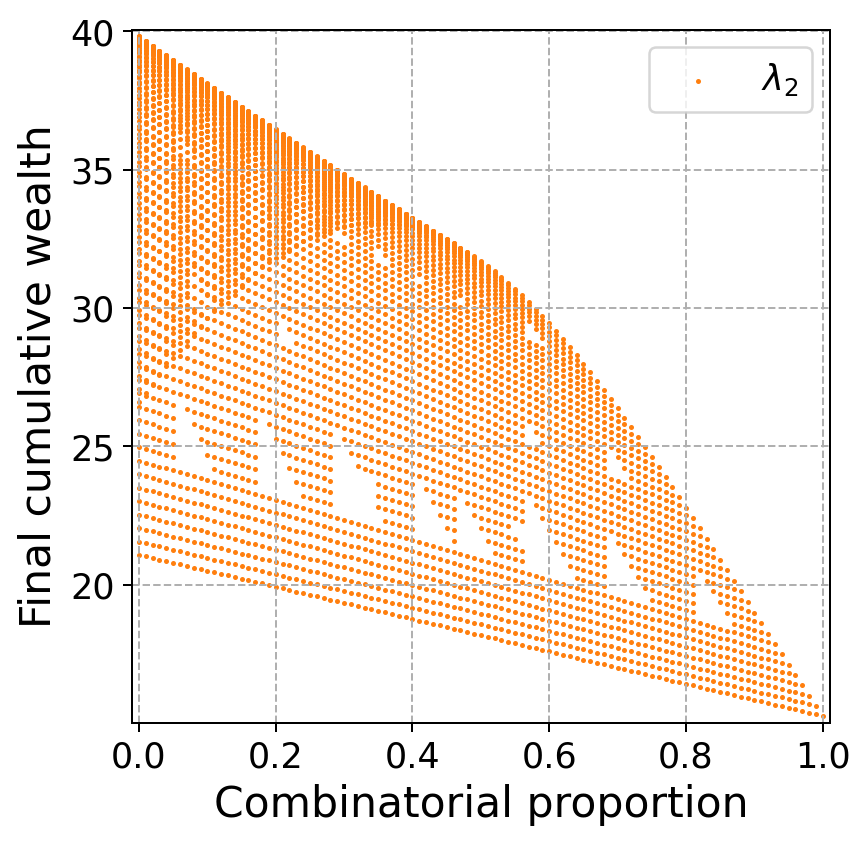}\includegraphics[scale=0.29]{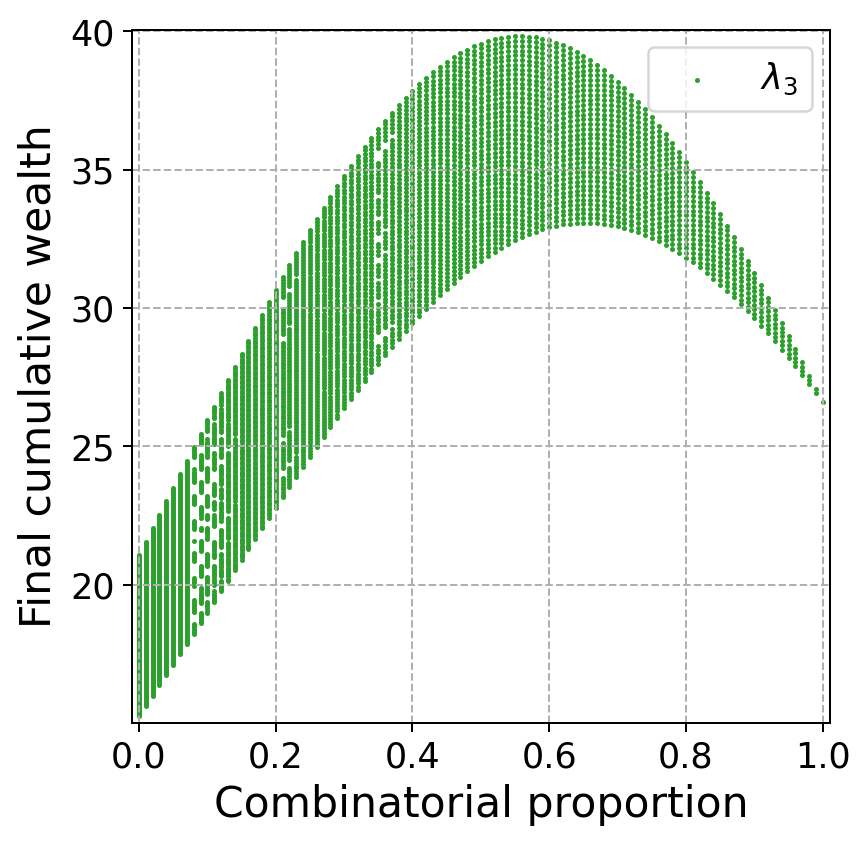}}\tabularnewline
{\footnotesize ~~Experiment 1} & \multicolumn{3}{c}{{\footnotesize ~~~Experiment 2}}\tabularnewline
\includegraphics[scale=0.29]{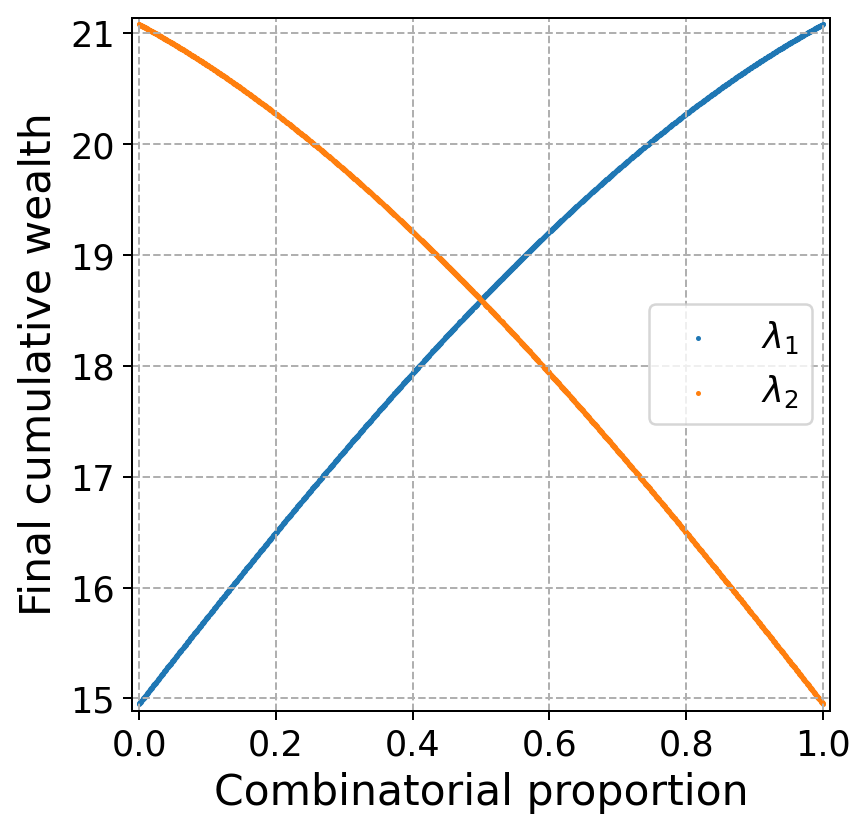} & \multicolumn{3}{c}{\includegraphics[scale=0.29]{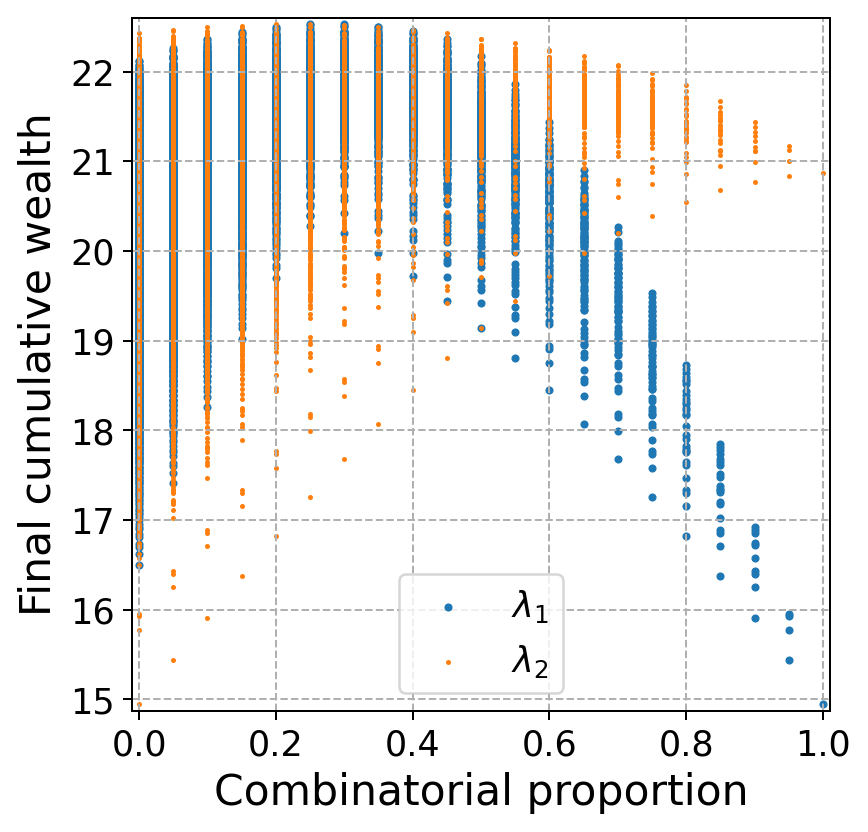}\includegraphics[scale=0.29]{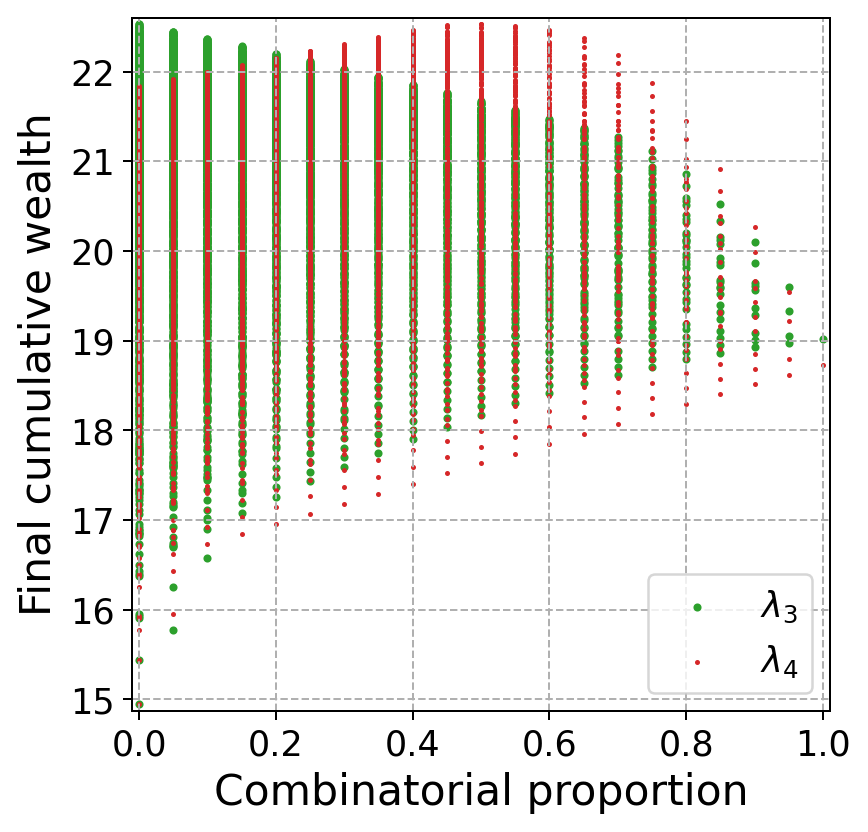}\includegraphics[scale=0.29]{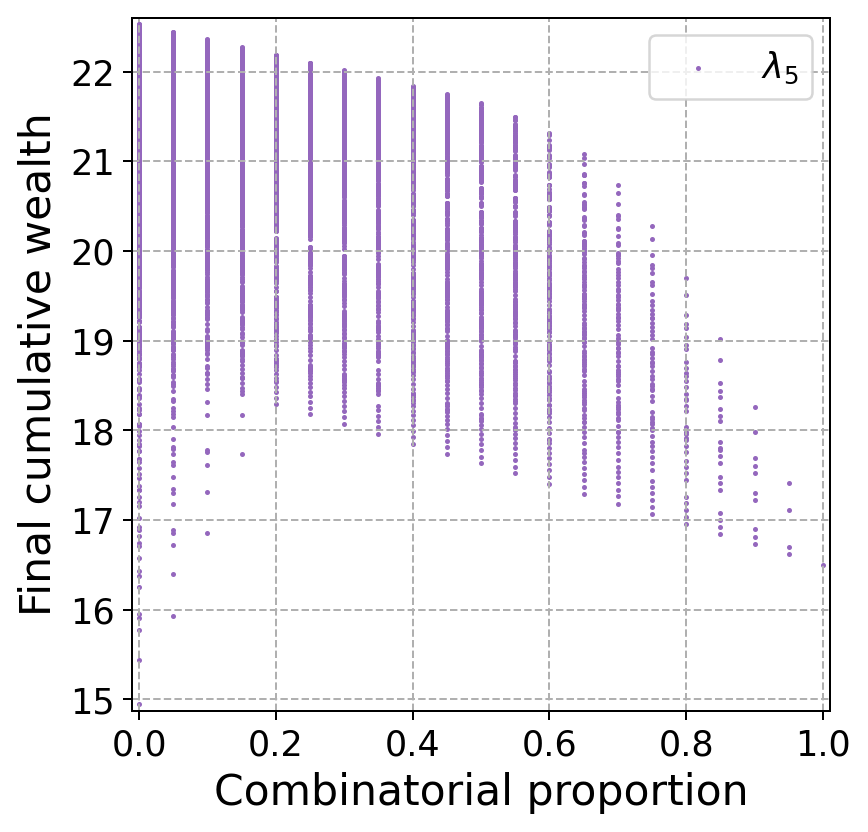}}\tabularnewline
{\footnotesize ~~Experiment 4} & \multicolumn{3}{c}{{\footnotesize ~~~Experiment 3}}\tabularnewline
\end{tabular}
\par\end{centering}
\caption{The constant combination $\lambda$ plotted against the final cumulative
wealth with respect to the discretization. The best constant combinations
in Experiment 1, 2, 3, and 4 are $\left(0.55,0.45\right)$, $\left(0.44,0,0.56\right)$,
$\left(0.3,0.2,0,0.5,0\right)$ and $\left(1,0\right)$, respectively.\label{figure 5}}
\end{figure}
\begin{figure}[H]
\begin{centering}
\begin{tabular}{cccc}
\includegraphics[scale=0.28]{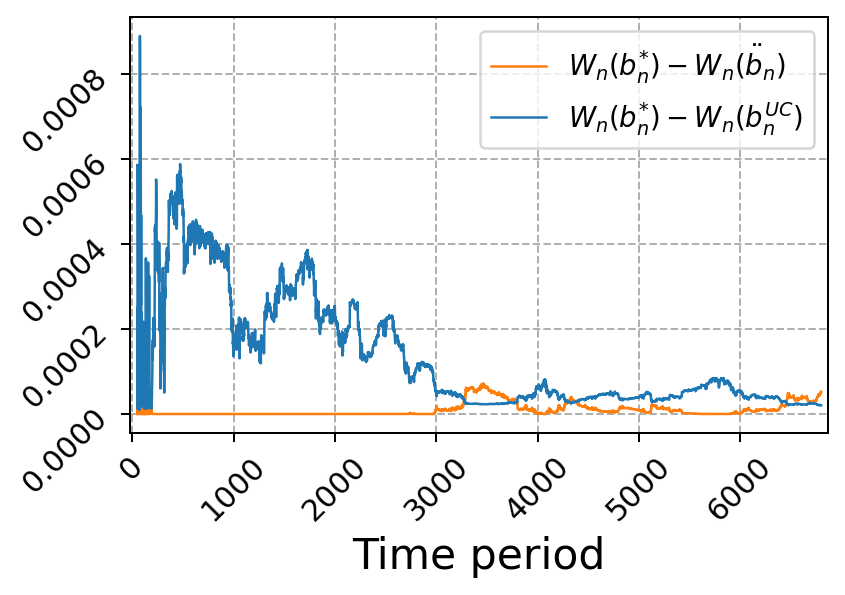} & \includegraphics[scale=0.28]{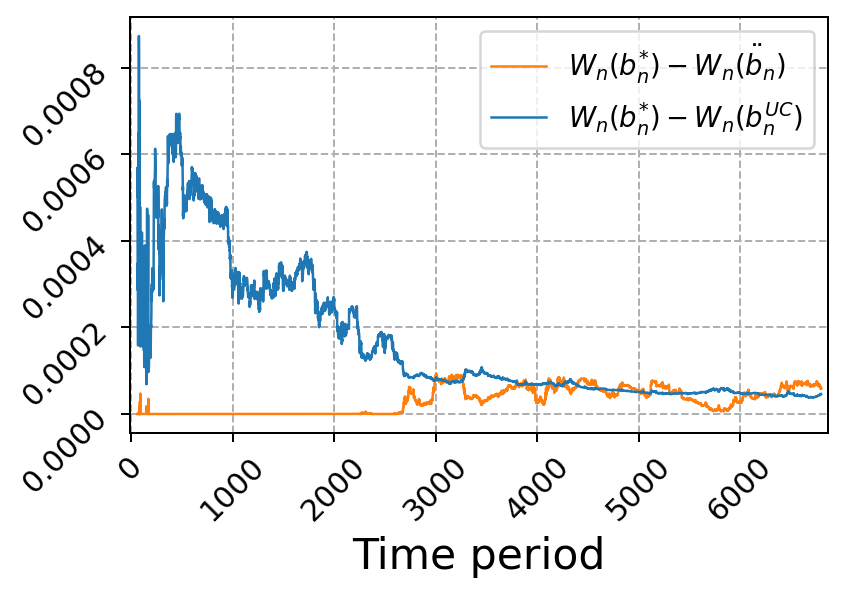} & \includegraphics[scale=0.28]{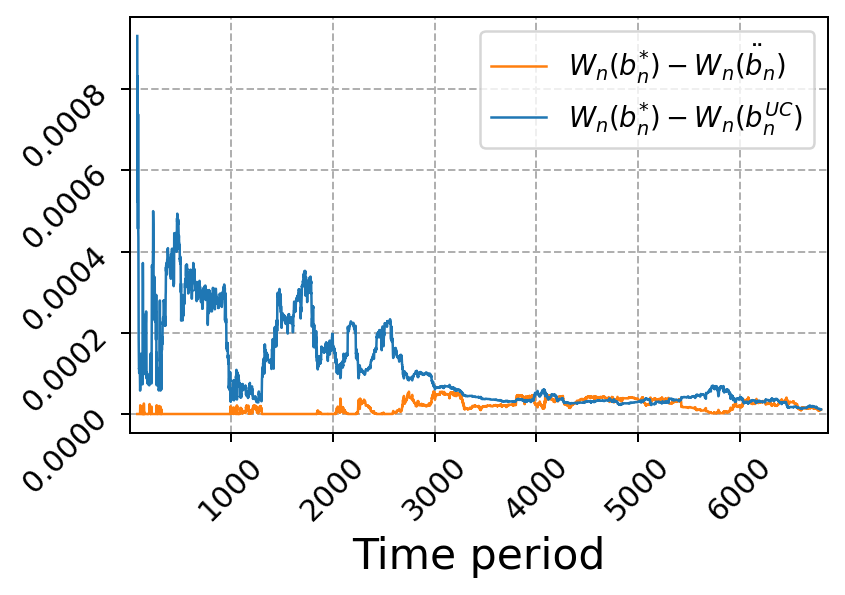} & \includegraphics[scale=0.28]{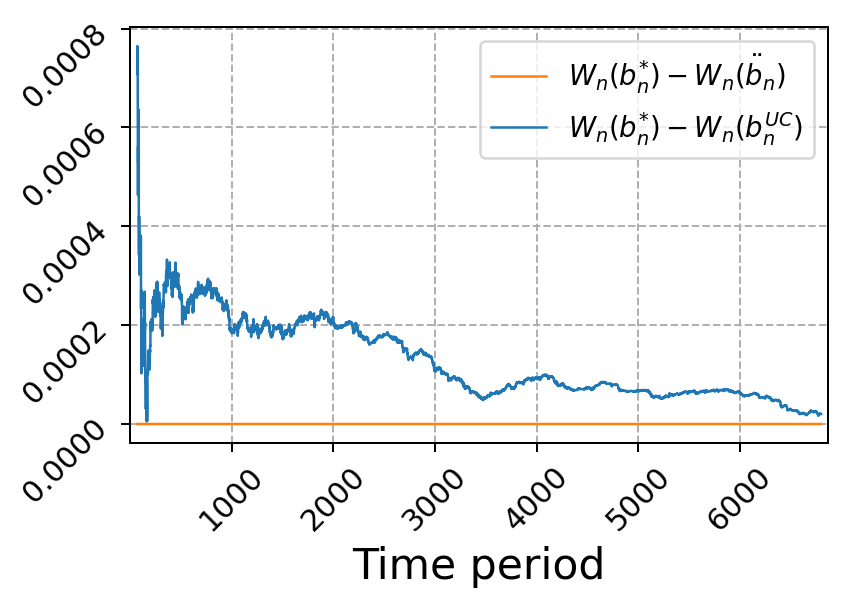}\tabularnewline
\includegraphics[scale=0.28]{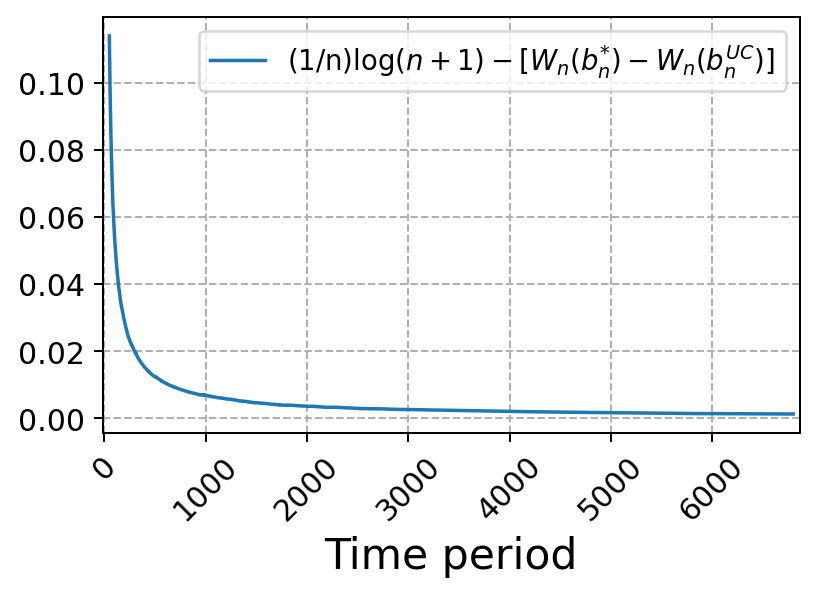} & \includegraphics[scale=0.28]{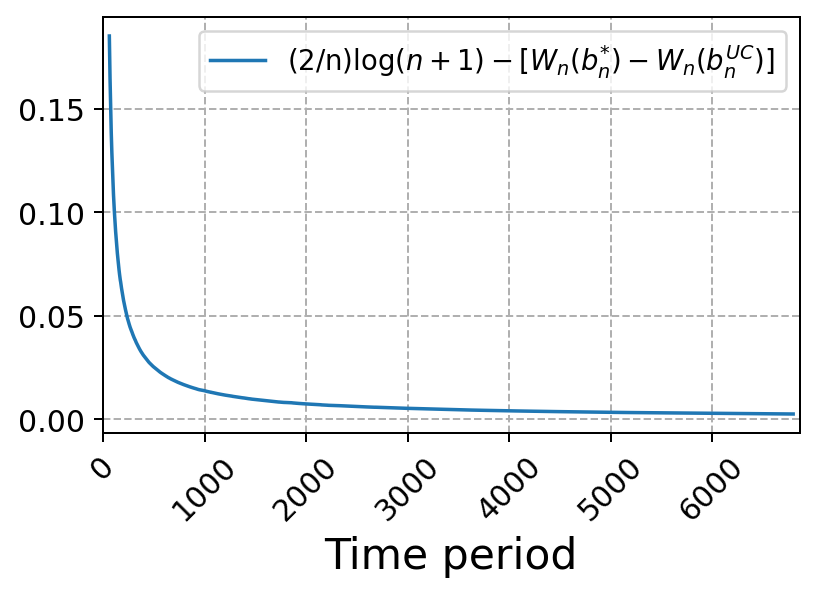} & \includegraphics[scale=0.28]{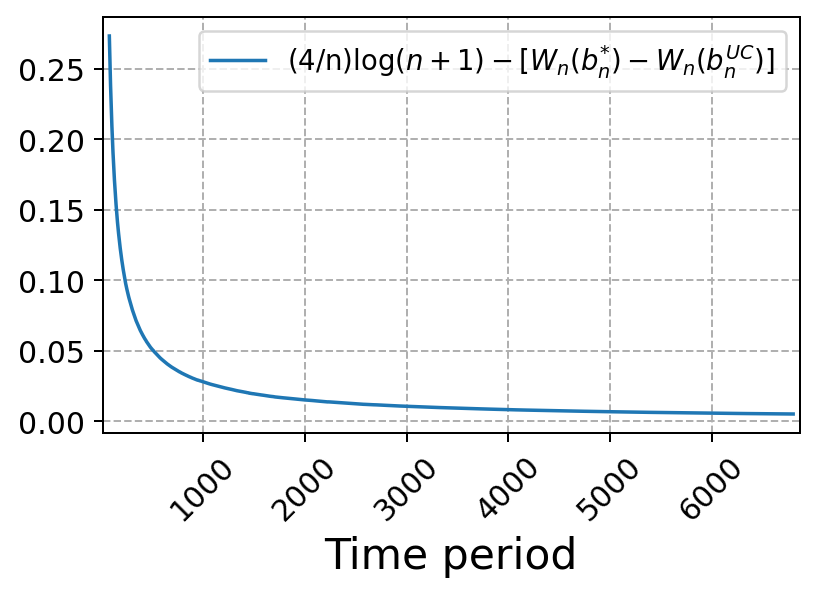} & \includegraphics[scale=0.28]{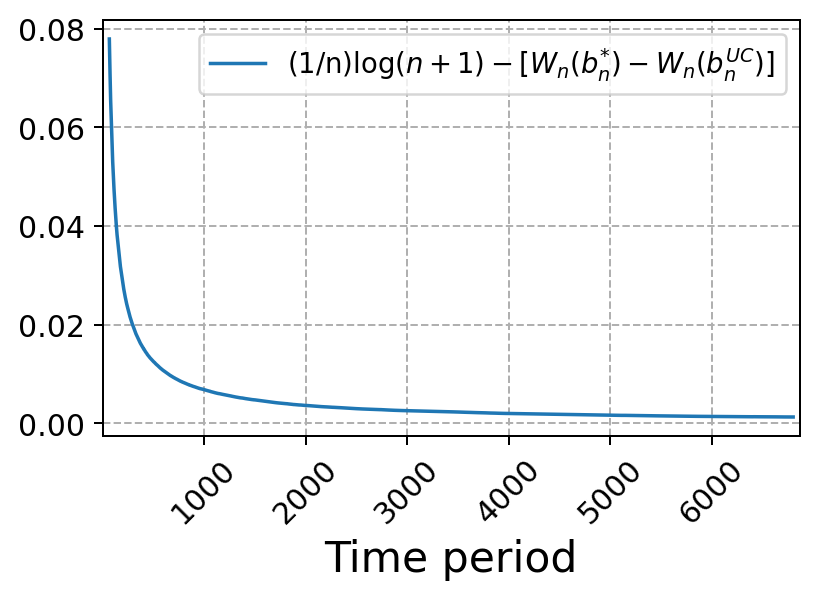}\tabularnewline
{\footnotesize ~~~Experiment 1} & {\footnotesize ~~~Experiment 2} & {\footnotesize ~~~Experiment 3} & {\footnotesize ~~~Experiment 4}\tabularnewline
\end{tabular}
\par\end{centering}
\caption{Upper row: differences over time between the growth rates of the strategies
$\big(\ddot{b}_{n}\big)$, $\big(b_{n}^{*}\big)$ and $\big(b_{n}^{\text{UC}}\big)$.
Lower row: upper bound over time according to Proposition \ref{finite strategy}.\label{Figure 6}}
\end{figure}

The second row of Figure \ref{Figure 6} depicts the upper bound for
the difference between the two numerically computed strategies, $\log S_{n}\big(b_{n}^{*}\big)-\log S_{n}\big(b_{n}^{\text{UC}}\big)$,
as established in Proposition \ref{finite strategy}. Furthermore,
in order to fully comprehend the behavior of the benchmark strategy
$\big(b_{n}^{*}\big)$, the first row of Figure \ref{Figure 7} exhibits
the best numerically constant combinations over time with respect
to different discretization step sizes in each experiment. The first
three graphics indicate the presence of only two non-zero components
of the time-variant maximizer $\lambda_{n}$, while the last graphic
shows only one non-zero component. This is due to the M-V strategy
$\big(b_{n}^{0.002}\big)$ consistently dominating the M-V strategy
$\big(b_{n}^{0.05}\big)$ in the sense of Proposition \ref{S(lamda Y) =00003D S(lamda gamma)},
which explains the violation of the Combination criterion in Experiment
4. In contrast, the second row of Figure \ref{Figure 7} illustrates
the proportions in the UC strategy that the investor allocates to
each component M-V strategy over time, showing that all of them are
consistently invested in throughout.
\begin{figure}[H]
\begin{centering}
\begin{tabular}{cccc}
\includegraphics[scale=0.23]{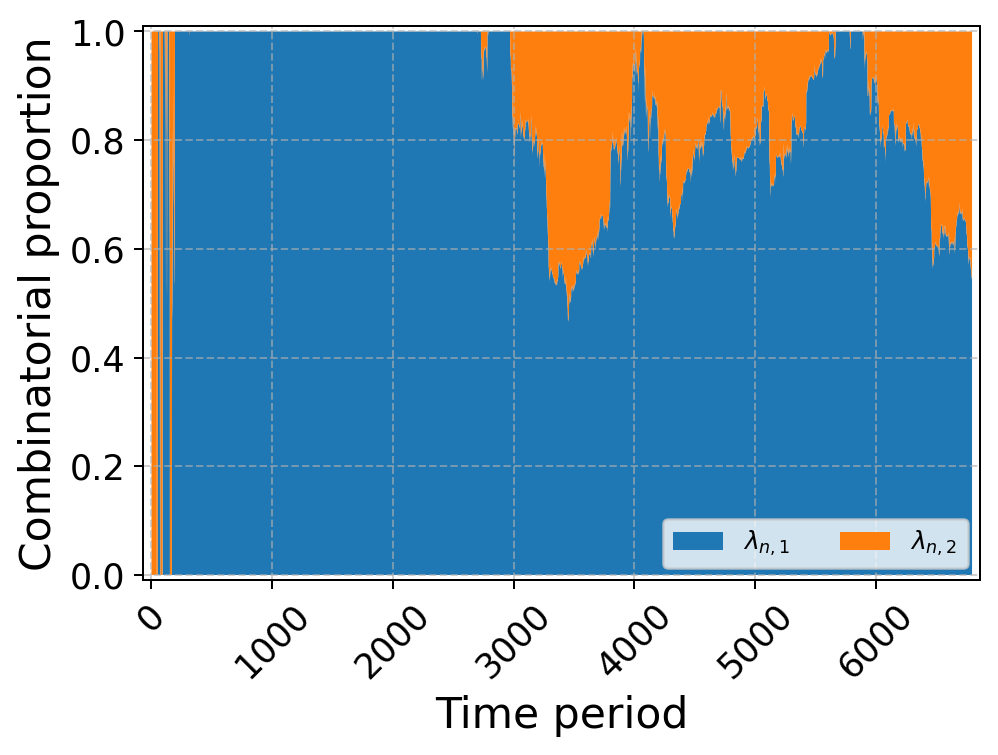} & \includegraphics[scale=0.23]{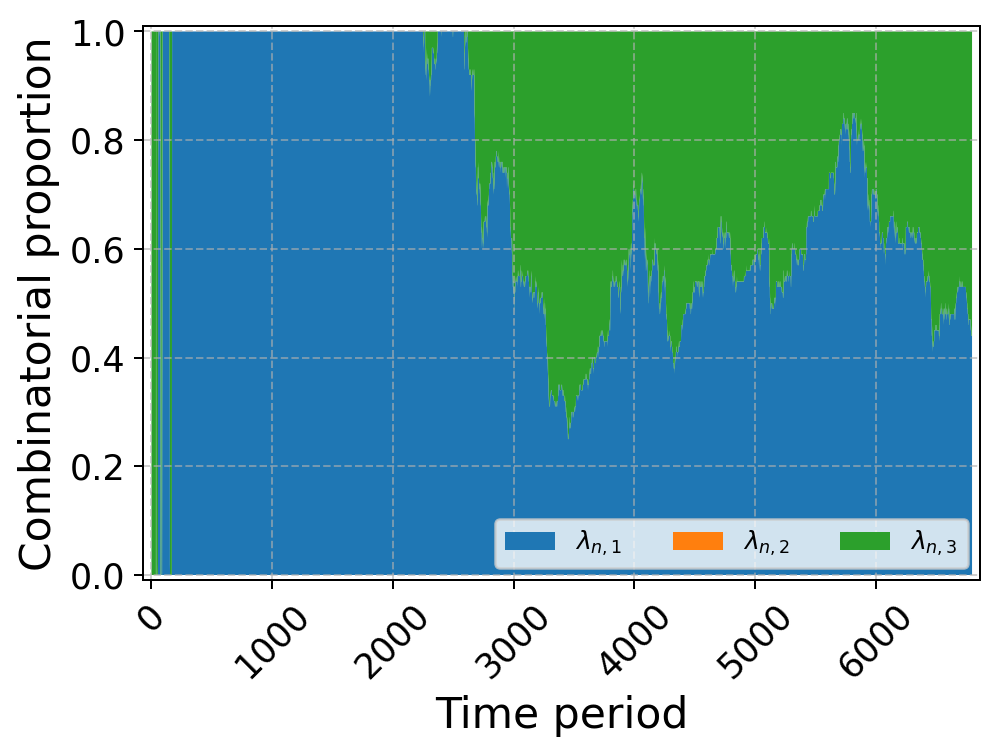} & \includegraphics[scale=0.23]{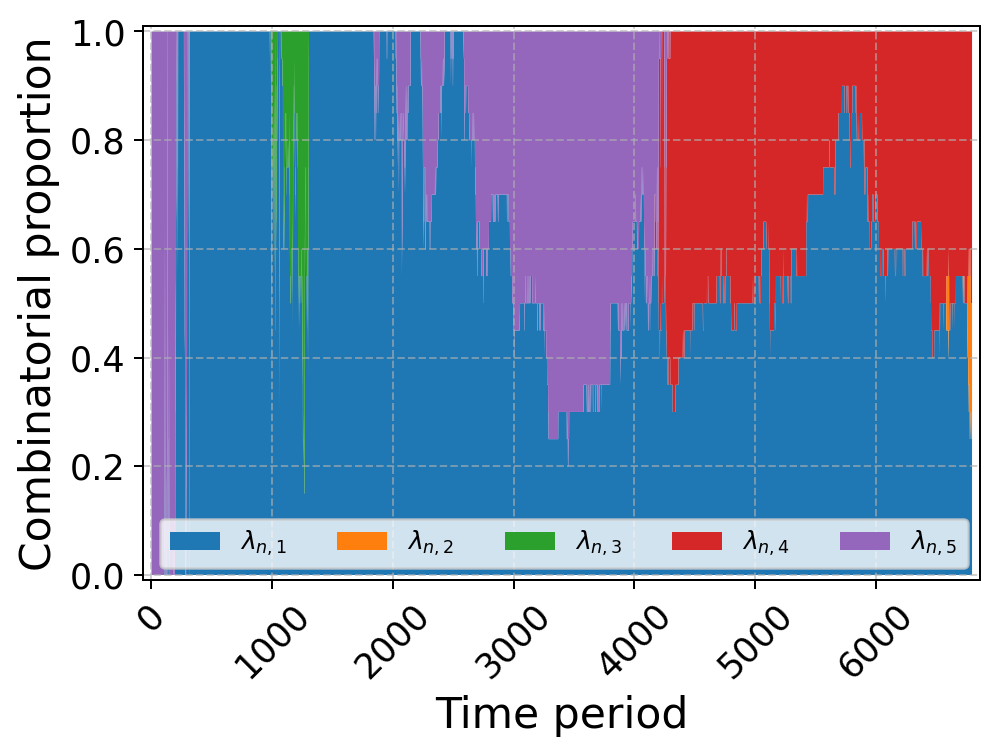} & \multicolumn{1}{c}{\includegraphics[scale=0.23]{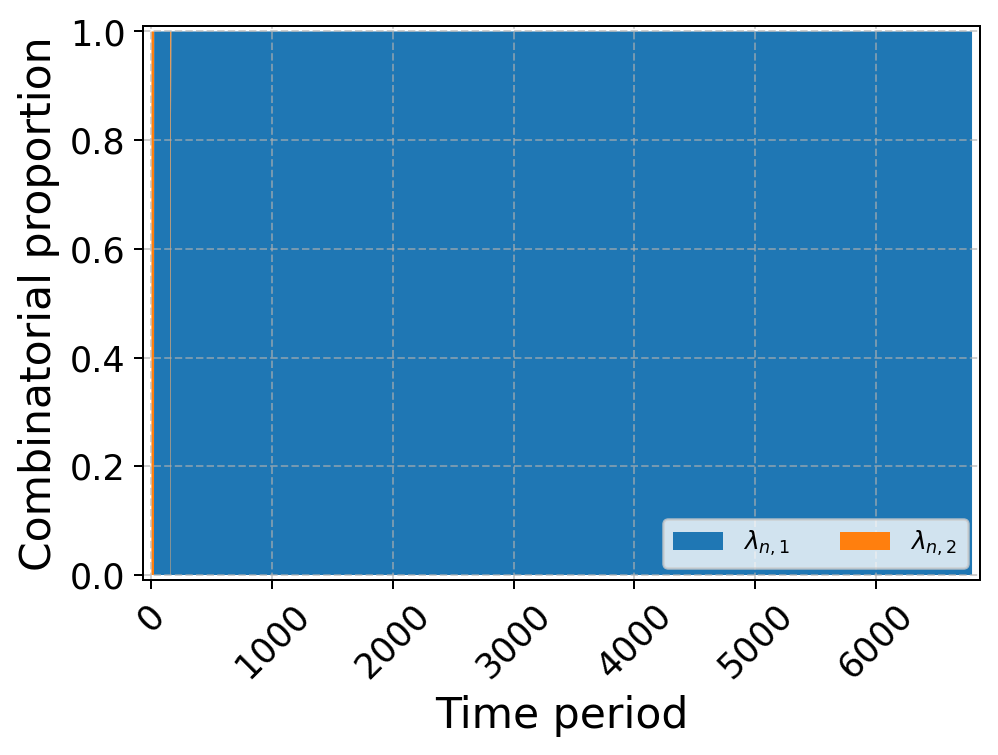}}\tabularnewline
\includegraphics[scale=0.23]{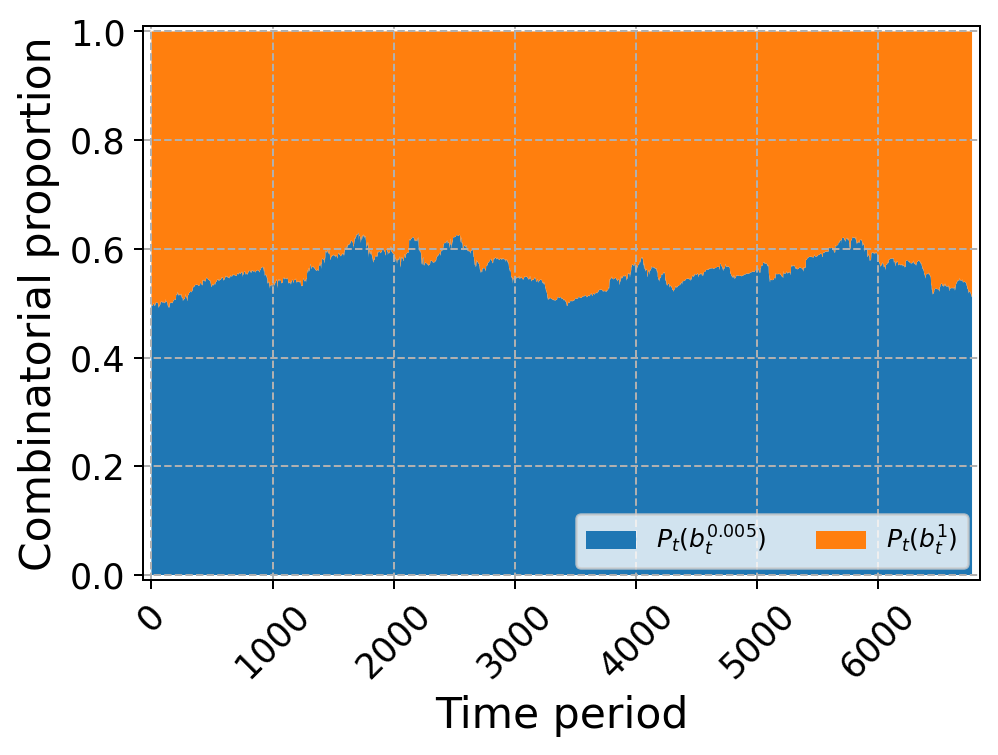} & \includegraphics[scale=0.23]{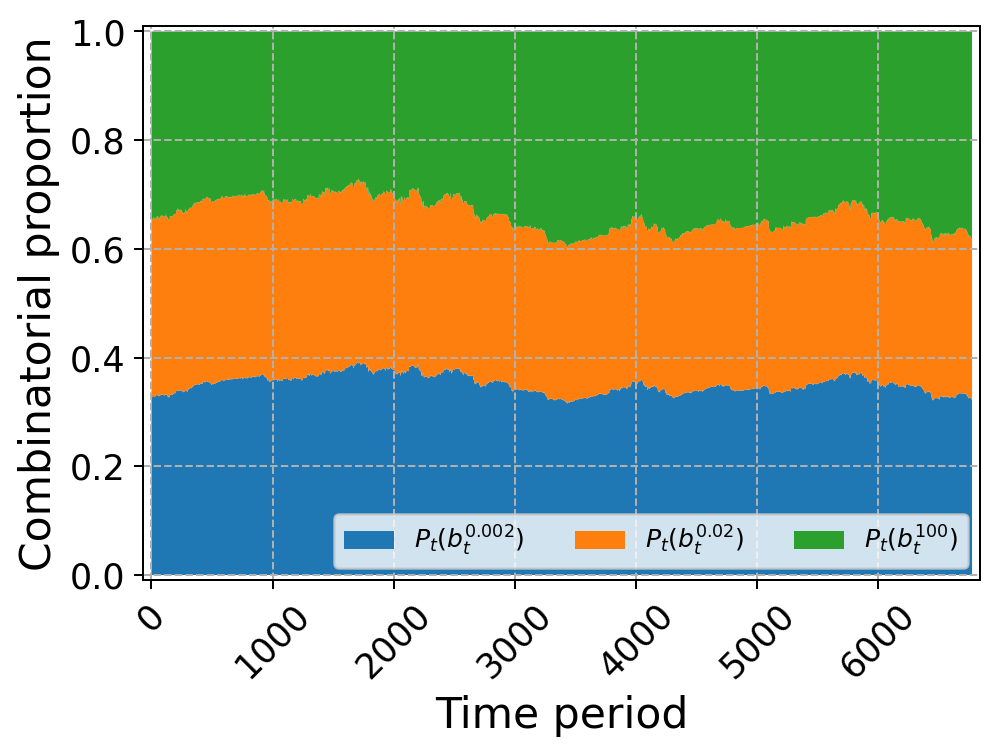} & \includegraphics[scale=0.23]{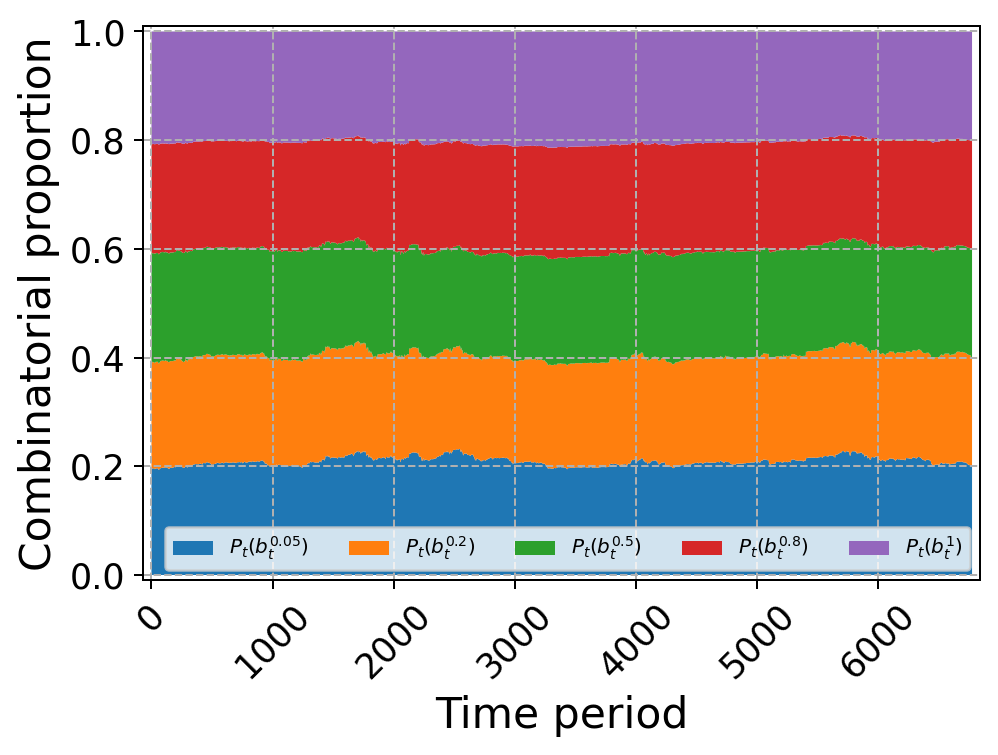} & \includegraphics[scale=0.23]{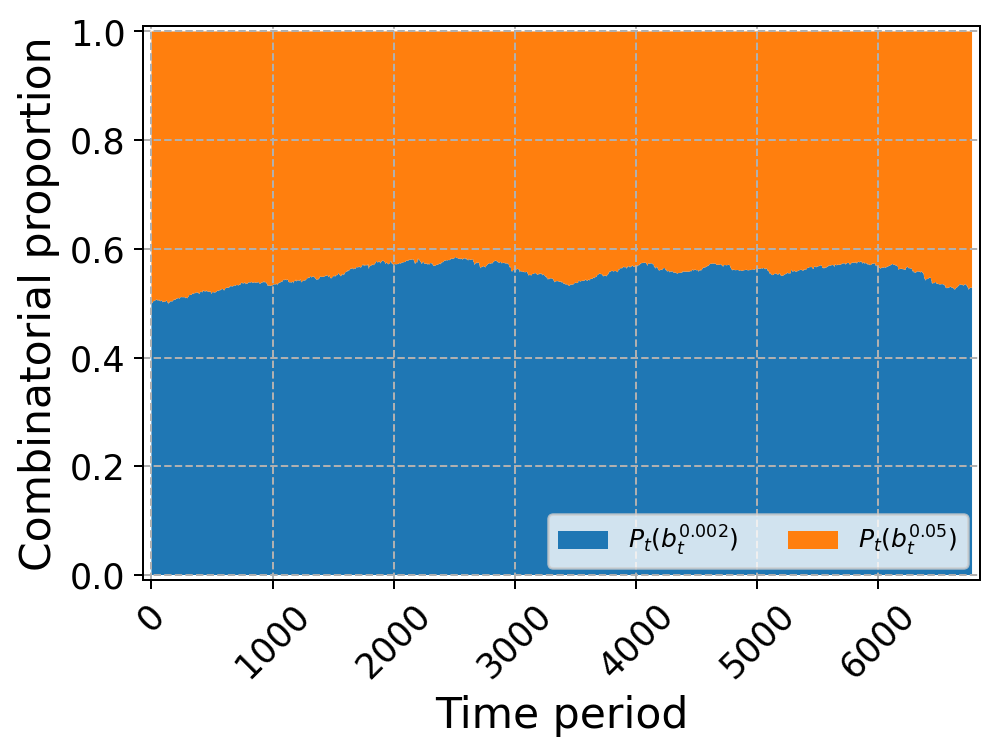}\tabularnewline
{\footnotesize ~~~Experiment 1} & {\footnotesize ~~~Experiment 2} & {\footnotesize ~~~Experiment 3} & \multicolumn{1}{c}{{\footnotesize ~~~Experiment 4}}\tabularnewline
\end{tabular}
\par\end{centering}
\caption{Upper row: the best constant combinations $\lambda_{n}$ of the component
M-V strategies over time with respect to the discretization. Lower
row: allocations of capital proportions into the component strategies,
with $P_{t}$ representing the allocations over time $t$.\label{Figure 7}}
\end{figure}
\smallskip

\textbf{Universally combinatorial strategy with acceleration}. The
results of the first four experiments emphasize the necessity of accelerating
the UC strategy to enhance the converging speed of its growth rate
to that of the benchmark strategy or even surpassing it. Since the
UC strategy allocates capital to all component strategies over time,
it includes unnecessarily those with inferior performance. Particularly,
the underperformance of the UC strategy compared to the significantly
better FL strategy in Experiment 4 justifies our motivation to incorporate
the mechanism of the latter into the former. To address this issue,
the proposed UC-L and UC-W strategies aim to reduce the amount of
capital invested in underperforming component strategies. However,
unlike the original UC strategy, these accelerated strategies do not
come with a theoretical guarantee. Therefore, the UC strategy will
be tested alongside both of its variants UC-L and UC-W in Experiments
5, 6, 7, and 8 to empirically examine their performances. In accordance
with the definitions of UC-L and UC-W strategies, we consider the
time-variant sets $B_{n}$ and $\bar{B}_{n}$ of constant combinations,
which are defined in (\ref{UC-W}) and (\ref{UC-L}), in the top and
bottom $30\%$ and $50\%$ of cumulative wealth from the previous
time period $n-1$ among all combinations in the simplex $\mathcal{B}^{k}$.
Only in Experiment 8, we further consider an additional UC-W strategy
with the top $5\%$ of combinations. The performances of these strategies
are presented in Figure \ref{Figure 8} and Table \ref{Table 2}.
\begin{figure}[H]
\begin{centering}
\includegraphics[scale=0.33]{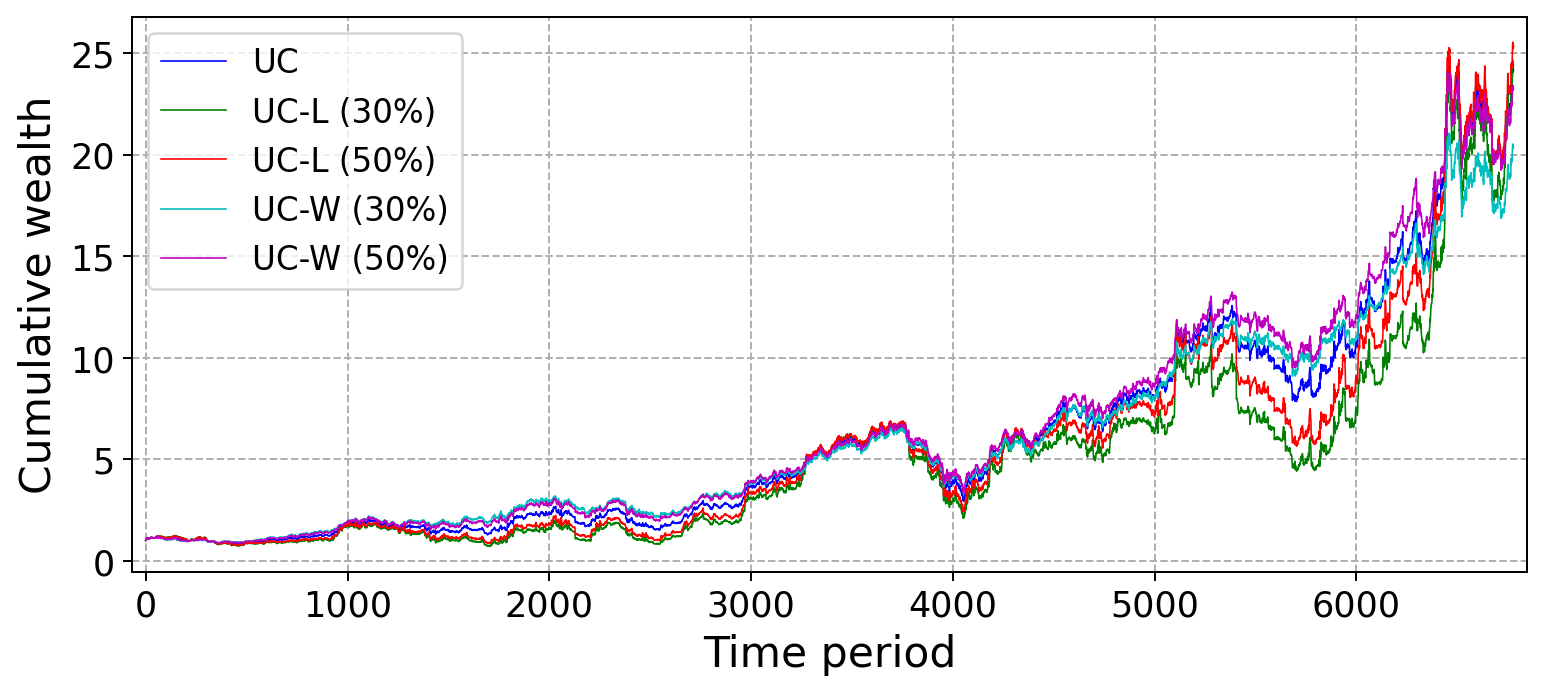} ~\includegraphics[scale=0.33]{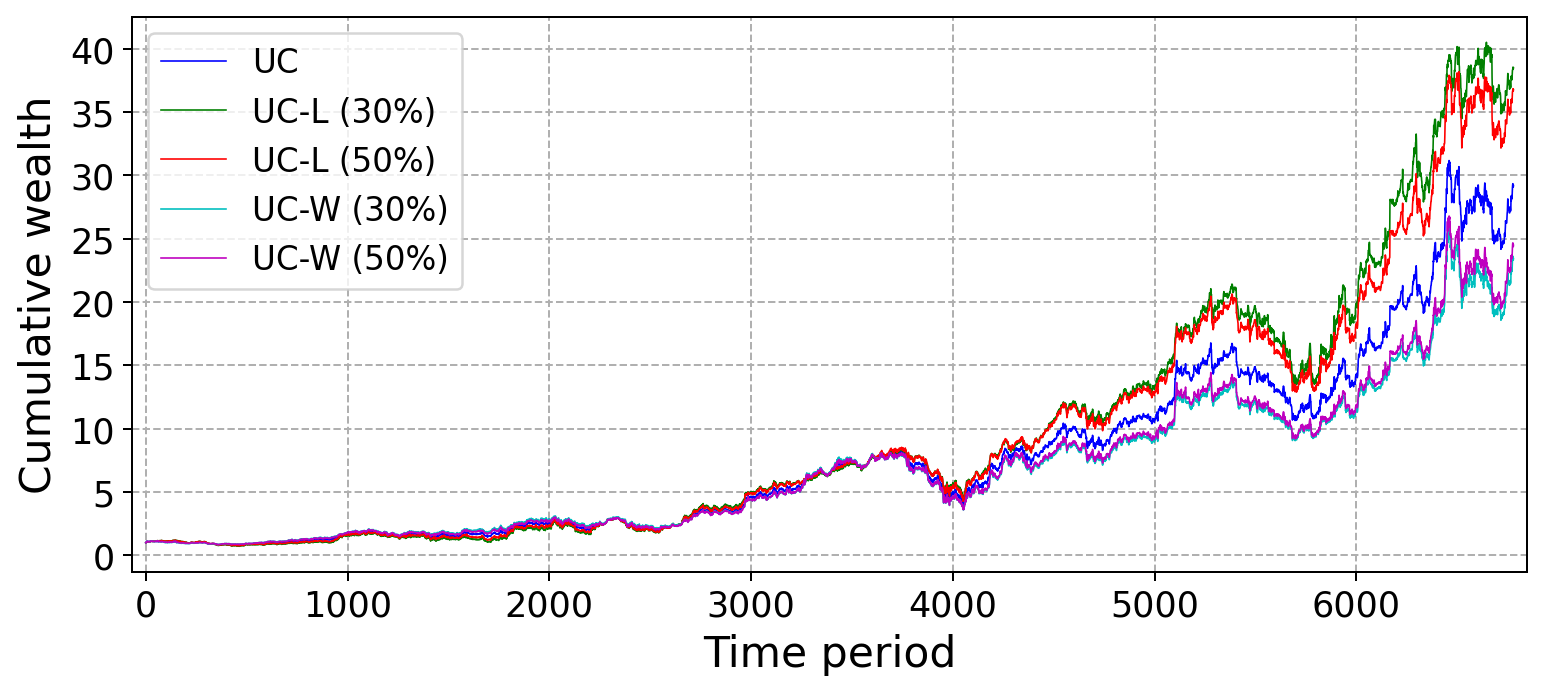}
\par\end{centering}
\begin{centering}
~\includegraphics[scale=0.33]{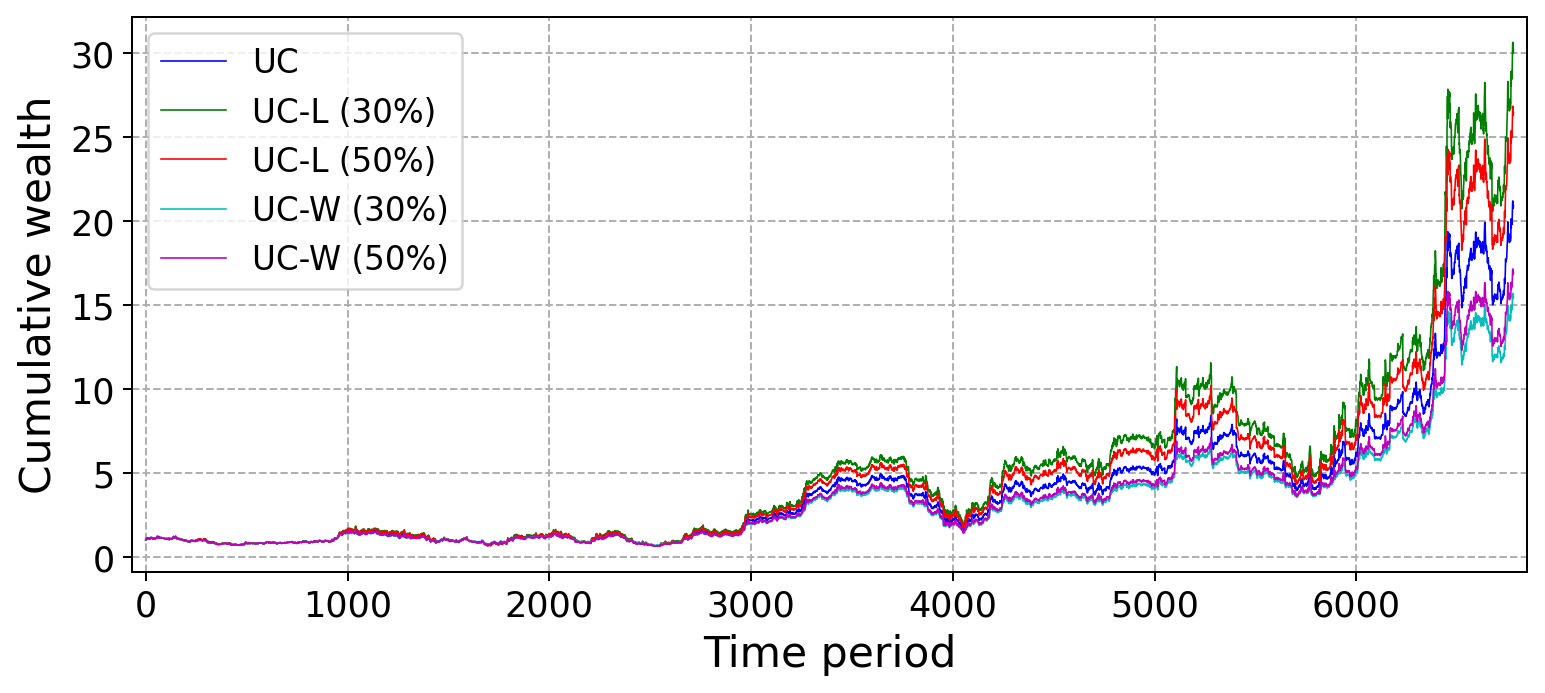} ~~\includegraphics[scale=0.33]{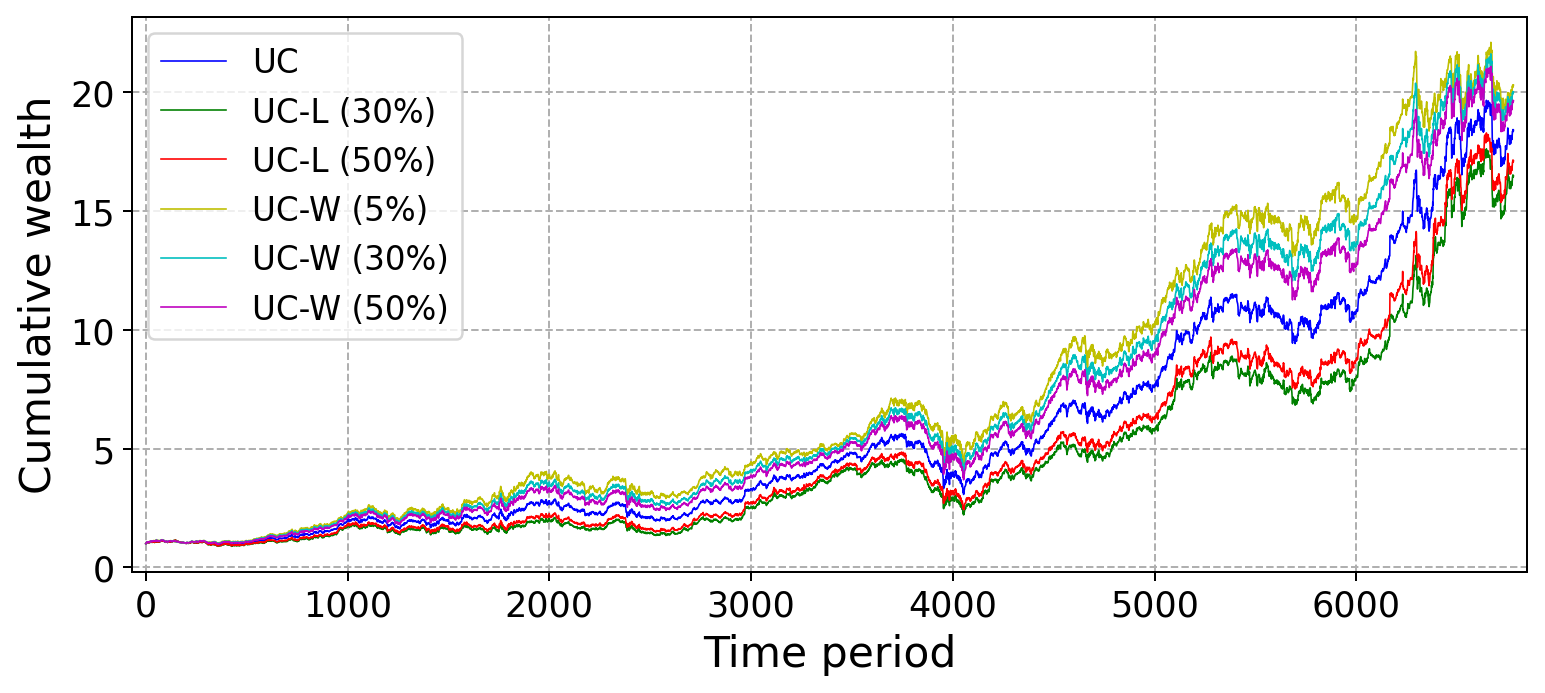}
\par\end{centering}
\caption{Evolutions of the accelerated combinatorial strategies over time in
Experiments 5, 6, 7, and 8 (from left-to-right and top-to-bottom order).
\label{Figure 8}}
\end{figure}
\renewcommand{\arraystretch}{0.3}
\begin{table}[H]
\caption{Performance measures for the strategies in Experiments 5, 6, 7 and
8.\label{Table 2} }

\begin{centering}
\begin{tabular*}{12cm}{@{\extracolsep{\fill}}>{\raggedright}m{2.5cm}>{\raggedright}m{2cm}>{\raggedright}m{2cm}>{\raggedright}m{2cm}>{\raggedright}m{2cm}}
\toprule 
{\tiny\textbf{Strategy}} & {\tiny\textbf{Final wealth}} & {\tiny\textbf{Average growth rate}} & {\tiny\textbf{Average return}} & {\tiny\textbf{Sharpe ratio}}\tabularnewline
\bottomrule
\end{tabular*}
\par\end{centering}
\begin{centering}
\begin{tabular*}{12cm}{@{\extracolsep{\fill}}>{\raggedright}p{2.5cm}>{\raggedright}p{2cm}>{\raggedright}p{2cm}>{\raggedright}p{2cm}>{\raggedright}p{2cm}}
\toprule 
\multicolumn{5}{c}{{\tiny Experiment 5 with risk aversion $\alpha\in\left\{ 0.005,1\right\} $}}\tabularnewline
\midrule
{\tiny UC} & {\tiny 24.178108} & {\tiny 0.000470} & {\tiny 1.000580} & {\tiny 67.419175}\tabularnewline
{\tiny UC-L (30\%)} & {\tiny 24.414120} & {\tiny 0.000471} & {\tiny 1.000710} & {\tiny 45.790848}\tabularnewline
{\tiny UC-L (50\%)} & {\tiny 25.333606} & {\tiny 0.000477} & {\tiny 1.000668} & {\tiny 51.143733}\tabularnewline
{\tiny UC-W (30\%)} & {\tiny 20.393190} & {\tiny 0.000445} & {\tiny 1.000519} & {\tiny 82.463092}\tabularnewline
{\tiny UC-W (50\%)} & {\tiny 23.260910} & {\tiny 0.000464} & {\tiny 1.000545} & {\tiny 78.922235}\tabularnewline
\end{tabular*}
\par\end{centering}
\begin{centering}
\begin{tabular*}{12cm}{@{\extracolsep{\fill}}>{\raggedright}p{2.5cm}>{\raggedright}p{2cm}>{\raggedright}p{2cm}>{\raggedright}p{2cm}>{\raggedright}p{2cm}}
\toprule 
\multicolumn{5}{c}{{\tiny Experiment 6 with risk aversion $\alpha\in\left\{ 0.002,0.02,100\right\} $}}\tabularnewline
\midrule
{\tiny UC} & {\tiny 29.192687} & {\tiny 0.000498} & {\tiny 1.000597} & {\tiny 71.217927}\tabularnewline
{\tiny UC-L (30\%)} & {\tiny 38.484047} & {\tiny 0.000539} & {\tiny 1.000668} & {\tiny 62.238247}\tabularnewline
{\tiny UC-L (50\%)} & {\tiny 36.758159} & {\tiny 0.000532} & {\tiny 1.000646} & {\tiny 66.163429}\tabularnewline
{\tiny UC-W (30\%)} & {\tiny 23.426020} & {\tiny 0.000465} & {\tiny 1.000559} & {\tiny 72.989659}\tabularnewline
{\tiny UC-W (50\%)} & {\tiny 24.483100} & {\tiny 0.000472} & {\tiny 1.000567} & {\tiny 72.393094}\tabularnewline
\end{tabular*}
\par\end{centering}
\begin{centering}
\begin{tabular*}{12cm}{@{\extracolsep{\fill}}>{\raggedright}p{2.5cm}>{\raggedright}p{2cm}>{\raggedright}p{2cm}>{\raggedright}p{2cm}>{\raggedright}p{2cm}}
\toprule 
\multicolumn{5}{c}{{\tiny Experiment 7 with risk aversion $\alpha\in\left\{ 0.05,0.2,0.5,0.8,1\right\} $}}\tabularnewline
\midrule
{\tiny UC} & {\tiny 20.887842} & {\tiny 0.000448} & {\tiny 1.000645} & {\tiny 50.409559}\tabularnewline
{\tiny UC-L (30\%)} & {\tiny 30.182974} & {\tiny 0.000503} & {\tiny 1.000741} & {\tiny 45.813842}\tabularnewline
{\tiny UC-L (50\%)} & {\tiny 26.442502} & {\tiny 0.000483} & {\tiny 1.000712} & {\tiny 46.693717}\tabularnewline
{\tiny UC-W (30\%)} & {\tiny 15.468676} & {\tiny 0.000404} & {\tiny 1.000564} & {\tiny 55.928931}\tabularnewline
{\tiny UC-W (50\%)} & {\tiny 16.903366} & {\tiny 0.000417} & {\tiny 1.000590} & {\tiny 53.769018}\tabularnewline
\end{tabular*}
\par\end{centering}
\begin{centering}
\begin{tabular*}{12cm}{@{\extracolsep{\fill}}>{\raggedright}p{2.5cm}>{\raggedright}p{2cm}>{\raggedright}p{2cm}>{\raggedright}p{2cm}>{\raggedright}p{2cm}}
\toprule 
\multicolumn{5}{c}{{\tiny Experiment 8 with risk aversion $\alpha\in\left\{ 0.002,0.05\right\} $}}\tabularnewline
\midrule
{\tiny UC} & {\tiny 18.401030} & {\tiny 0.000430} & {\tiny 1.000493} & {\tiny 88.740964}\tabularnewline
{\tiny UC-L (30\%)} & {\tiny 16.480301} & {\tiny 0.000413} & {\tiny 1.000489} & {\tiny 81.431545}\tabularnewline
{\tiny UC-L (50\%)} & {\tiny 17.113231} & {\tiny 0.000419} & {\tiny 1.000490} & {\tiny 83.761159}\tabularnewline
{\tiny UC-W (5\%)} & {\tiny 20.299331} & {\tiny 0.000444} & {\tiny 1.000505} & {\tiny 90.939658}\tabularnewline
{\tiny UC-W (30\%)} & {\tiny 20.010540} & {\tiny 0.000442} & {\tiny 1.000503} & {\tiny 91.026628}\tabularnewline
{\tiny UC-W (50\%)} & {\tiny 19.641169} & {\tiny 0.000439} & {\tiny 1.000500} & {\tiny 90.694303}\tabularnewline
\bottomrule
\end{tabular*}
\par\end{centering}
\centering{}%
\begin{minipage}[t]{12cm}%
\begin{spacing}{0.5}
{\tiny\textbf{Note}}{\tiny . The UC strategies are approximated by
2001 discretization points by step 0.0005 in Experiments 5 and 8,
4947 discretization points by step 0.01 in Experiment 6, and 9821
discretization points by step 0.05 in Experiment 7.}
\end{spacing}
\end{minipage}
\end{table}
\vspace{-2ex}

The results presented in Figure \ref{Figure 8} align with our expectations.
Since the FL strategy works in Experiment 4, it results in the UC-W
($5\%$) strategy demonstrating the most favorable evolution of cumulative
wealth compared to the other tested strategies in the same scenario
in Experiment 8. This can be attributed to the almost always exceedance
of the M-V strategy $\big(b_{n}^{0.002}\big)$ over the M-V strategy
$\big(b_{n}^{0.05}\big)$, which causes a combinatorial strategy corresponding
to a constant combination closer to the best one, $\lambda_{n}=(1,0)$
for all $n$, to yield higher cumulative wealth. Consequently, the
UC-W strategy performs better with a small subset $B_{n}$ that includes
the point $(1,0)$, while the UC-L strategy with a small subset $\bar{B}_{n}$
that includes the point $(0,1)$ leads to a decline in cumulative
wealth. However, in the other three experiments, the UC-W strategies
underperform, whereas the UC-L strategies show significant improvement.
A possible explanation for this behavior is similar to the argument
explaining the poor performance of the FL strategy in the corresponding
scenarios. When strongly fluctuating strategies are present among
the component M-V strategies, relying solely on constant combination
strategies that previously led in terms of cumulative wealth becomes
less reliable. Additionally, Table \ref{Table 2} reaffirms a characteristic
inherent in the UC-W and UC-L strategies derived from the UC strategy,
namely, that a strategy with higher final cumulative wealth tends
to produce a slight decrease in its Sharpe ratio{\small .}{\small{}
\begin{figure}[H]
\begin{centering}
{\small{}%
\begin{tabular}{cccc}
{\small\includegraphics[scale=0.23]{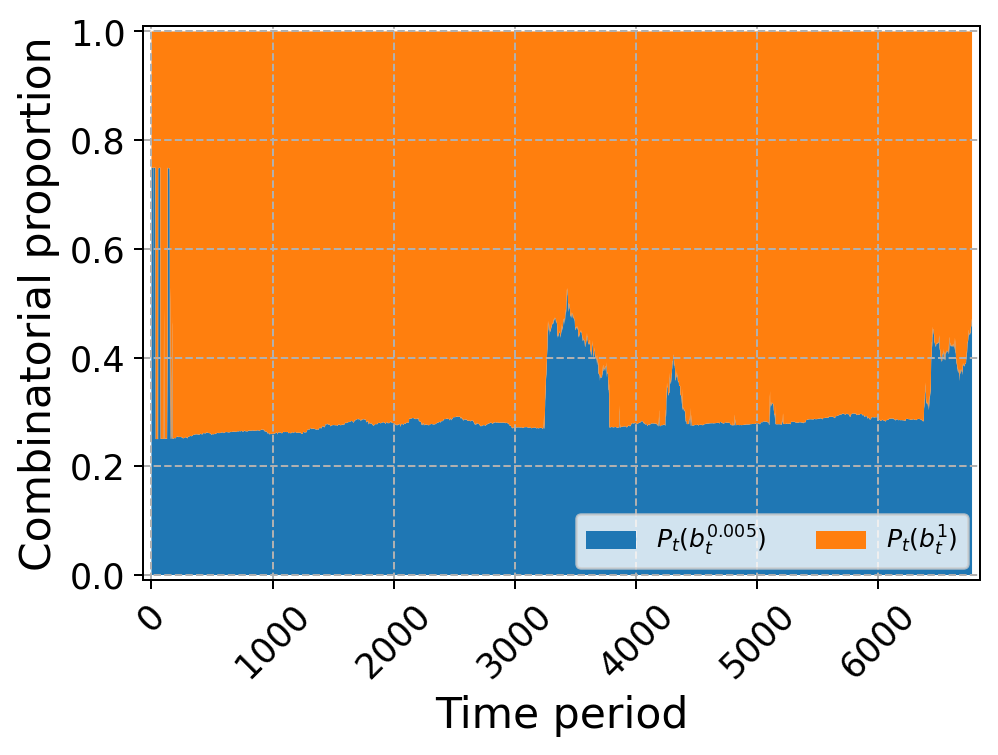}} & {\small\includegraphics[scale=0.23]{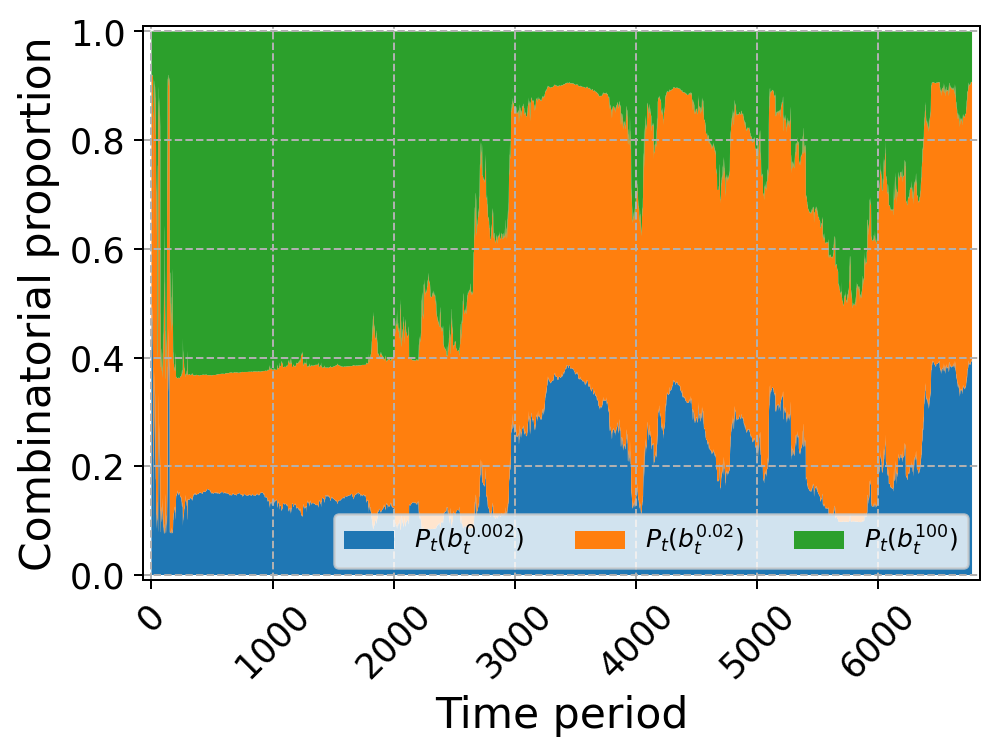}} & {\small\includegraphics[scale=0.23]{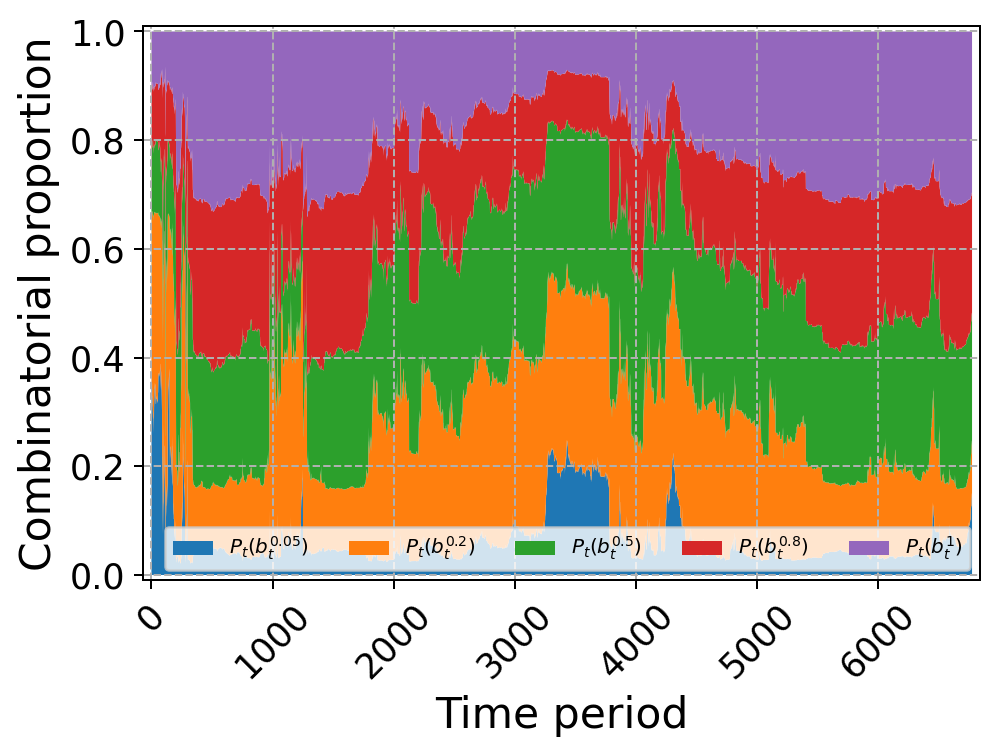}} & {\small\includegraphics[scale=0.23]{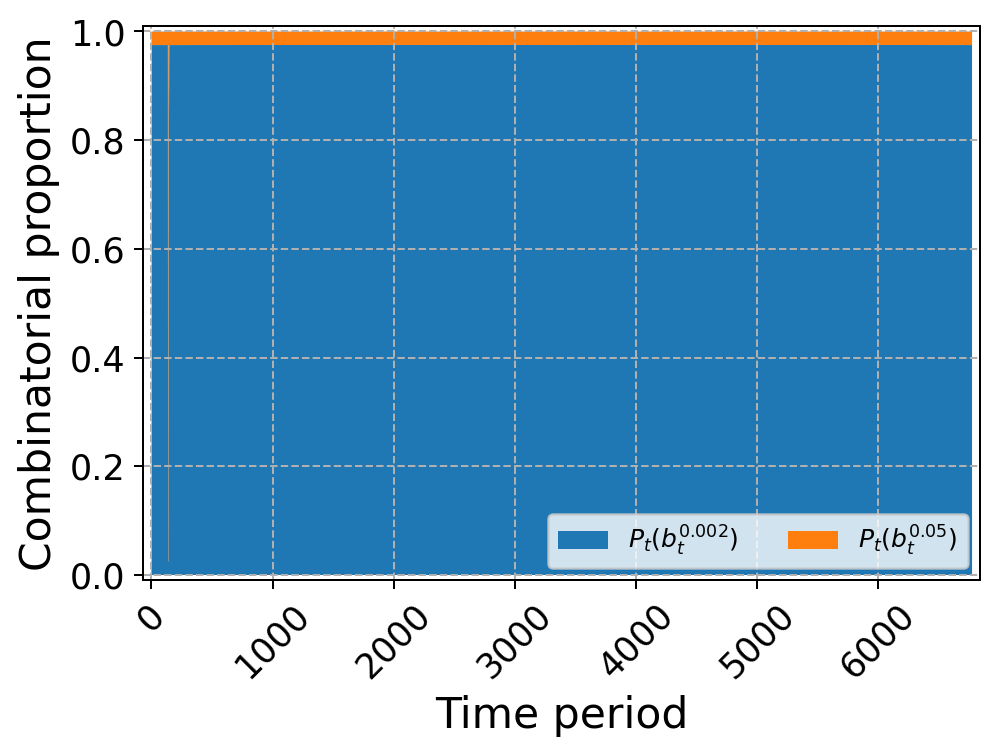}}\tabularnewline
{\small\includegraphics[scale=0.27]{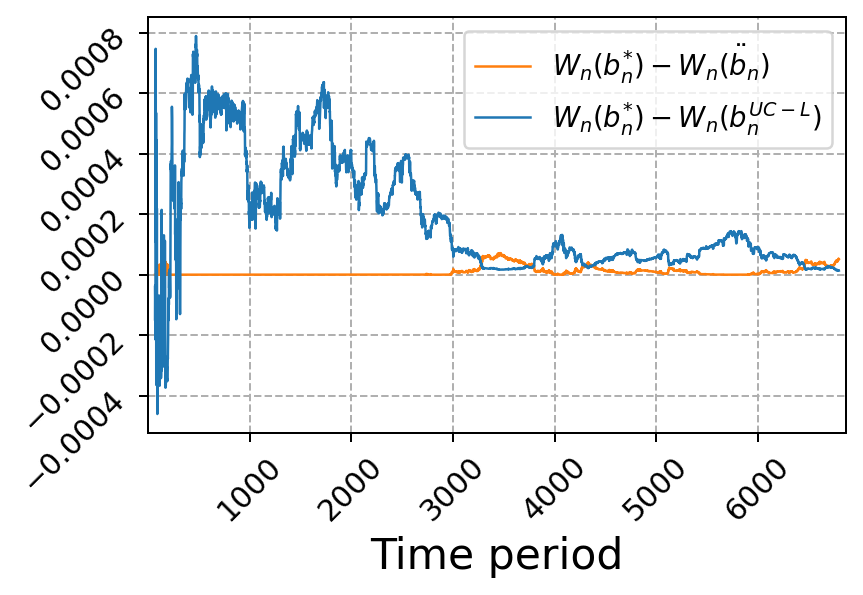}} & {\small\includegraphics[scale=0.27]{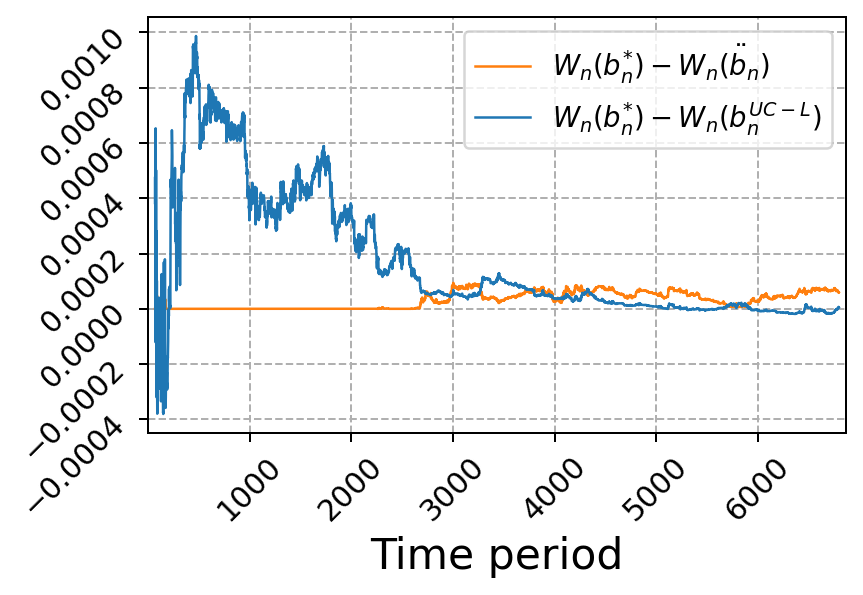}} & {\small\includegraphics[scale=0.27]{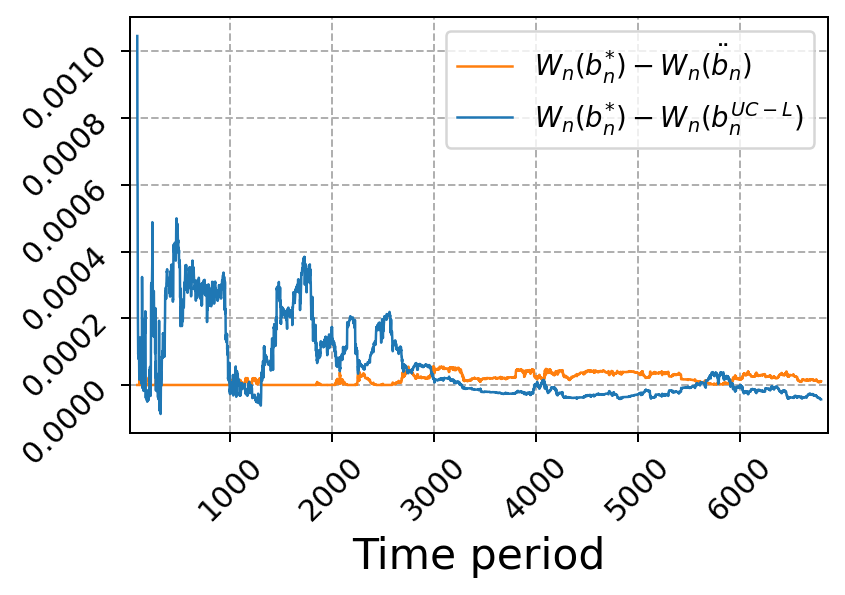}} & {\small\includegraphics[scale=0.27]{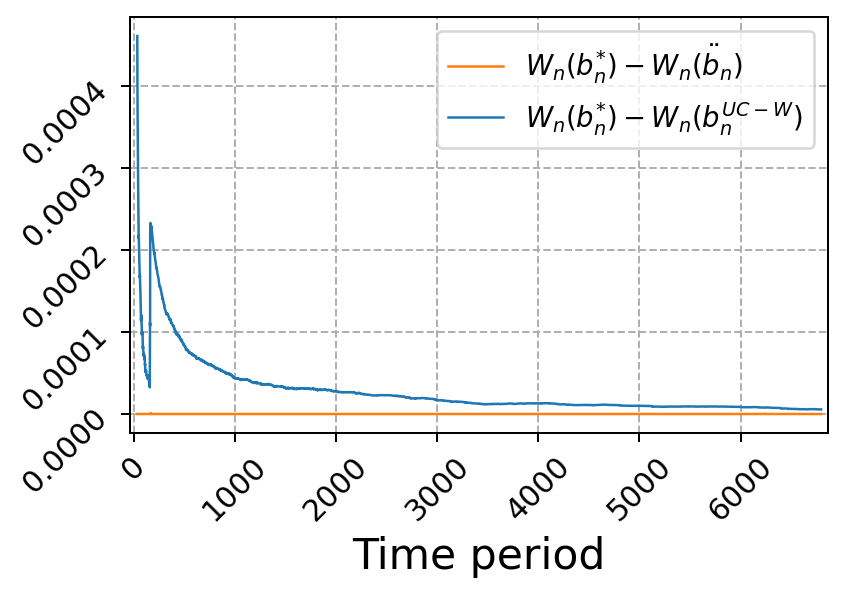}}\tabularnewline
{\scriptsize ~Experiment 5: UC-L (30\%)} & {\scriptsize ~Experiment 6: UC-L (30\%)} & {\scriptsize ~Experiment 7: UC-L (30\%)} & {\scriptsize ~~Experiment 8: UC-W (5\%)}\tabularnewline
\end{tabular}}{\small\par}
\par\end{centering}
{\small\caption{Upper row: allocations of capital proportions into the component M-V
strategies, with $P_{t}$ representing the allocations over time $t$.
Lower row: differences over time between the growth rates of the strategies
$\big(\ddot{b}_{n}\big)$, $\big(b_{n}^{*}\big)$ and $\big(b_{n}^{\text{UC-L}}\big)$
or $\big(b_{n}^{\text{UC-W}}\big)$.\label{Figure 9}}
}{\small\par}
\end{figure}
}{\small\par}

The acceleration results in a noticeable change in the capital allocation,
as depicted in the first row of \ref{Figure 9}. In Experiment 8,
the UC-W ($5\%$) strategy primarily allocates capital to the M-V
strategy $\big(b_{n}^{0.002}\big)$, which closely aligns with the
optimal constant combination $(1,0)$ over time. Conversely, in the
remaining three experiments, the UC-L ($30\%$) strategy exhibits
significant fluctuations in the allocated proportions among the involved
component M-V strategies over time, in contrast to the gradual changes
observed in the proportions of the UC strategy. These substantial
fluctuations in allocated proportions subsequently impact the differences
in growth rate between the benchmark strategy and the accelerated
strategies, as shown in the second row of Figure \ref{Figure 9}.
The first three graphs indicate that the growth rate $W_{n}\big(b_{n}^{\text{UC-L}}\big)$
is not bounded by $W_{n}\big(b_{n}^{*}\big)$, whereas $W_{n}\big(b_{n}^{\text{UC-W}}\big)$
approaches the upper bound $W_{n}\big(b_{n}^{*}\big)$ in the final
graph. The accelerated UC-L strategy in Experiments 6 and 7 quickly
surpasses the benchmark strategy, likely due to the number of component
M-V strategies that exhibit significant fluctuations in cumulative
wealth evolution. In Experiment 5, the consistent exceeding of the
M-V strategy $\big(b_{n}^{0.005}\big)$ over the M-V strategy $\big(b_{n}^{1}\big)$
enables the UC-W strategies, which allocate their capital to small
sets of constant combination strategies that previously led in cumulative
wealth, to perform better than the UC-L strategies most of the time.
However, as the M-V strategy $\big(b_{n}^{1}\big)$ experiences stronger
fluctuations after about $5000$ time periods, the UC-L strategies
gain momentum and eventually surpass the UC-W strategies. In Experiments
6 and 7, characterized by significant fluctuations in at least one
M-V strategy and frequent changes in the leading position, the UC-L
strategy, which consistently allocates its capital to a small set
of constant combination strategies that do not lead in cumulative
wealth but provide stability in returns, quickly surpasses the benchmark
strategy.

\subsection{Experiments with large scales of component strategies}

After conducting eight experiments in the previous section, we extend
our investigation to the UC strategy given by (\ref{eq:3.6}) and
its accelerated variant in a scenario involving a large scale of component
M-V strategies. For this analysis, we consider a set $\mathcal{A}$
of risk aversion coefficients encompassing values ranging from $0.001$
to $0.009$ with a step size of $0.001$, from $0.01$ to $0.09$
with a step size of $0.01$, from $0.1$ to $0.9$ with a step size
of $0.1$, and from $1$ to $10$ with a step size of $0.5$. In Experiment
9, we examine the impact of partitioning $\mathcal{A}$ into multiple
base sets and their respective sizes on the performance and behavior
of the UC strategy. To investigate this further, we analyze partitions
into two, four, and six base sets. Specifically, in the case of two
base sets, our analysis includes examining balanced sets with equal
sizes spanning from $0.001$ to $0.5$ and from $0.6$ to $10$, as
well as unbalanced sets covering the ranges of $0.001$ to $0.02$
and $0.03$ to $10$. In the case of four base sets, we divide the
coefficients from $0.001$ to $5$ into three equal-sized sets, while
assigning the remainder to the fourth set. In a similar vein, for
six base sets, we evenly segment the coefficients ranging from $0.001$
to $7$ into the first five sets, while leaving the remainder for
the final set. Following Experiment 9, we proceed to implement accelerated
strategies tailored to the scenario involving six base sets in Experiment
10.
\begin{figure}[H]
\begin{centering}
\includegraphics[scale=0.33]{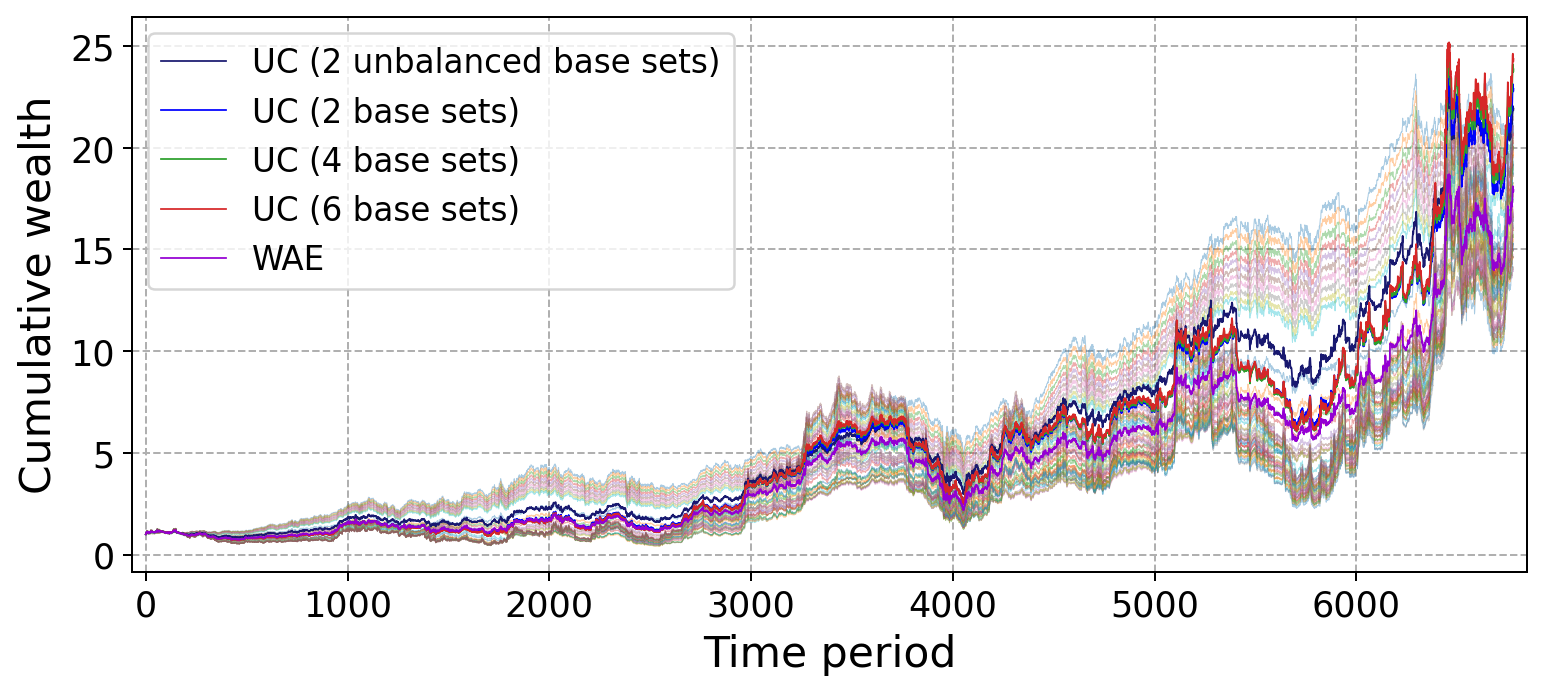} ~\includegraphics[scale=0.33]{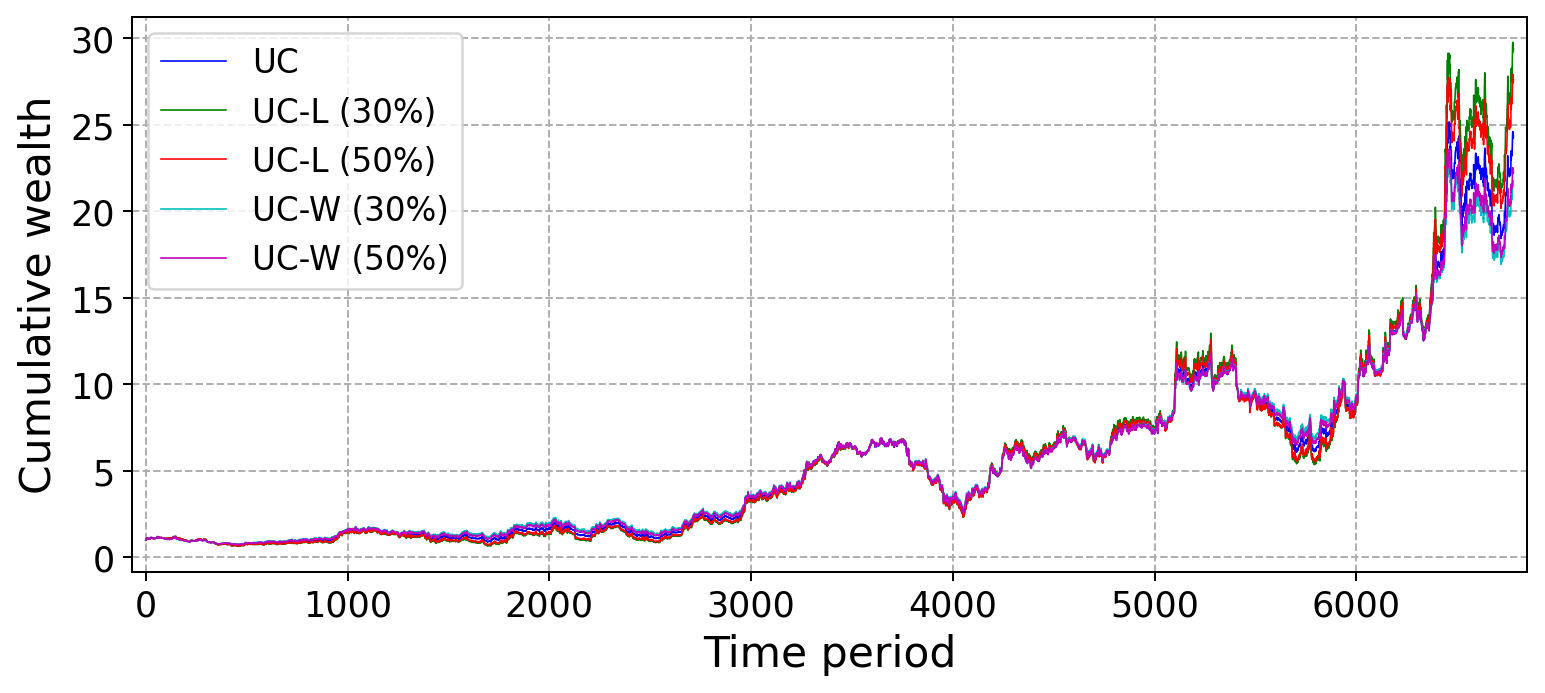}
\par\end{centering}
\caption{Left: evolutions of the combinatorial strategies over time with different
base sets in Experiment 9. Right: evolutions of the accelerated combinatorial
strategies over time with six base sets in Experiment 10. The blurred
background depicts the evolutions of the component M-V strategies.\label{Figure 10}}
\end{figure}
\renewcommand{\arraystretch}{0.3}
\begin{table}[H]
\caption{Performance measures for the strategies in Experiments 9 and 10. \label{Table 3}}

\begin{centering}
\begin{tabular*}{12cm}{@{\extracolsep{\fill}}>{\raggedright}m{2.5cm}>{\raggedright}m{2cm}>{\raggedright}m{2cm}>{\raggedright}m{2cm}>{\raggedright}m{2cm}}
\toprule 
{\tiny\textbf{Strategy}} & {\tiny\textbf{Final wealth}} & {\tiny\textbf{Average growth rate}} & {\tiny\textbf{Average return}} & {\tiny\textbf{Sharpe ratio}}\tabularnewline
\midrule
\midrule 
{\tiny M-V ($\alpha=0.001$)} & {\tiny 21.753985} & {\tiny 0.000454} & {\tiny 1.000515} & {\tiny 90.837290}\tabularnewline
{\tiny M-V ($\alpha=0.06$)} & {\tiny 13.955013} & {\tiny 0.000389} & {\tiny 1.000477} & {\tiny 75.186736}\tabularnewline
\end{tabular*}
\par\end{centering}
\begin{centering}
\begin{tabular*}{12cm}{@{\extracolsep{\fill}}>{\raggedright}p{2.5cm}>{\raggedright}p{2cm}>{\raggedright}p{2cm}>{\raggedright}p{2cm}>{\raggedright}p{2cm}}
\toprule 
\multicolumn{5}{c}{{\tiny Experiment 9 with risk aversion $\alpha\in\mathcal{A}$}}\tabularnewline
\midrule
{\tiny UC (2 unbalanced base sets)} & {\tiny 21.991619} & {\tiny 0.000456} & {\tiny 1.000552} & {\tiny 72.286540}\tabularnewline
{\tiny UC (2 base sets)} & {\tiny 22.882642} & {\tiny 0.000462} & {\tiny 1.000604} & {\tiny 59.289323}\tabularnewline
{\tiny UC (4 base sets)} & {\tiny 23.847455} & {\tiny 0.000468} & {\tiny 1.000627} & {\tiny 56.137547}\tabularnewline
{\tiny UC (6 base sets)} & {\tiny 24.360797} & {\tiny 0.000471} & {\tiny 1.000633} & {\tiny 55.534858}\tabularnewline
{\tiny WAE} & {\tiny 17.919960} & {\tiny 0.000426} & {\tiny 1.000551} & {\tiny 63.130386}\tabularnewline
\end{tabular*}
\par\end{centering}
\begin{centering}
\begin{tabular*}{12cm}{@{\extracolsep{\fill}}>{\raggedright}p{2.5cm}>{\raggedright}p{2cm}>{\raggedright}p{2cm}>{\raggedright}p{2cm}>{\raggedright}p{2cm}}
\toprule 
\multicolumn{5}{c}{{\tiny Experiment 10 with risk aversion $\alpha\in\mathcal{A}$ and
6 base sets partition}}\tabularnewline
\midrule
{\tiny UC-L (30\%)} & {\tiny 29.405695} & {\tiny 0.000499} & {\tiny 1.000725} & {\tiny 46.965682}\tabularnewline
{\tiny UC-L (50\%)} & {\tiny 27.598225} & {\tiny 0.000490} & {\tiny 1.000697} & {\tiny 49.072607}\tabularnewline
{\tiny UC-W (30\%)} & {\tiny 21.596212} & {\tiny 0.000453} & {\tiny 1.000578} & {\tiny 63.439990}\tabularnewline
{\tiny UC-W (50\%)} & {\tiny 22.317208} & {\tiny 0.000458} & {\tiny 1.000594} & {\tiny 60.699273}\tabularnewline
\bottomrule
\end{tabular*}
\par\end{centering}
\centering{}%
\begin{minipage}[t]{12cm}%
\begin{spacing}{0.5}
{\tiny\textbf{Note}}{\tiny . The UC, UC-L, and UC-W strategies are
approximated by 2001 discretization points by step 0.0005, 8605 discretization
points by step 0.0277 and 14831 discretization points by step 0.0666
for the partitions into 2, 4 and 6 base sets, correspondingly. Two
M-V strategies corresponding to the risk aversion levels 0.001 and
0.06 are the best and the worst ones in terms of the final cumulative
wealth.}
\end{spacing}
\end{minipage}
\end{table}
\vspace{-2ex}

Performance measures of the UC strategy, along with a WAE strategy
applied to all component M-V strategies, are presented in Figure \ref{Figure 10}
and Table \ref{Table 3}. The results demonstrate that the WAE strategy
performs poorly in comparison to the UC strategy across all partitions,
despite its higher Sharpe ratio. This is because the growth rate of
the WAE strategy is bounded by that of the best component M-V strategy.
Overall, all partitions lead to the outperformance of the UC strategy
over the best M-V strategy in terms of cumulative wealth. However,
the partition into unbalanced base sets shows relatively lower cumulative
wealth compared to more evenly segmented base sets. Furthermore, increasing
the number of base sets while reducing the size of each set simultaneously
improves both the average growth rate, implying higher final cumulative
wealth, and the average return, but results in a lower Sharpe ratio.
These findings align with the characteristics observed in the previous
eight experiments of the UC strategy. In more detail, regarding cumulative
wealth, the UC strategy with two unbalanced base sets merely matches
the cumulative wealth of the best M-V strategy after approximately
$6600$ time periods, while the UC strategy with two balanced base
sets eventually exceeds it. After the same amount of time, the UC
strategies with four and six base sets exceed the best M-V strategy,
with slightly more favorable cumulative wealth for the latter partitioning.
This distinction is clearly depicted in Figure \ref{Figure 11}.\smallskip

In the first row of Figure \ref{Figure 11}, it is evident that a
higher number of base sets leads to a slower convergence speed for
the growth rates of the benchmark and UC strategies. Conversely, partitioning
into smaller-sized base sets can rapidly lift the difference $W_{n}\big(b_{n}^{*}\big)-W_{n}\big(\ddot{b}_{n}\big)$
above zero over time, enabling the UC strategy to exceed the baseline
strategy faster. However, this difference may be impeded from exceeding
zero by partitioning into strongly unbalanced base sets, as observed
in the first upper graph. This phenomenon is caused by the behaviors
of the representative strategies of the respective base sets, as a
base set may contain numerous M-V strategies with markedly different
behaviors from the others in the same set. Additionally, the second
row of Figure \ref{Figure 11} illustrates the upper bound for the
difference $W_{n}\big(\ddot{b}_{n}\big)-W_{n}\big(b_{n}^{\text{UC}}\big)$,
established in Proposition \ref{infinite strategy}, with respect
to the numerically approximated UC strategy. It shows that, in general,
the average difference over time converges to zero more slowly when
the number of base sets is higher. The results of Experiment 9 emphasize
the importance of appropriately tuning the partition for the set $\mathcal{A}$,
such that the number of base sets should not be large, while each
base set should mainly contain component strategies that behave in
a similar manner.
\begin{figure}[H]
\begin{centering}
\begin{tabular}{cccc}
\includegraphics[scale=0.27]{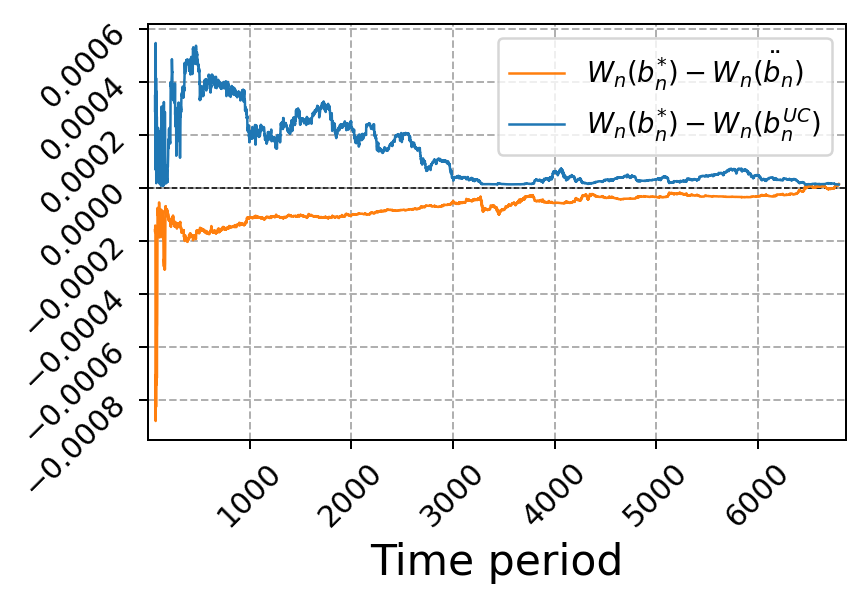} & \includegraphics[scale=0.27]{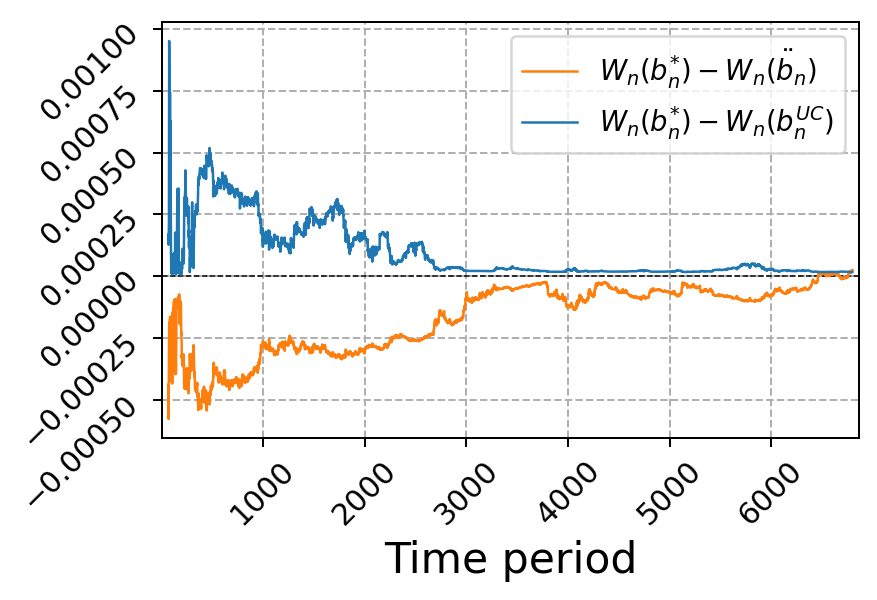} & \includegraphics[scale=0.27]{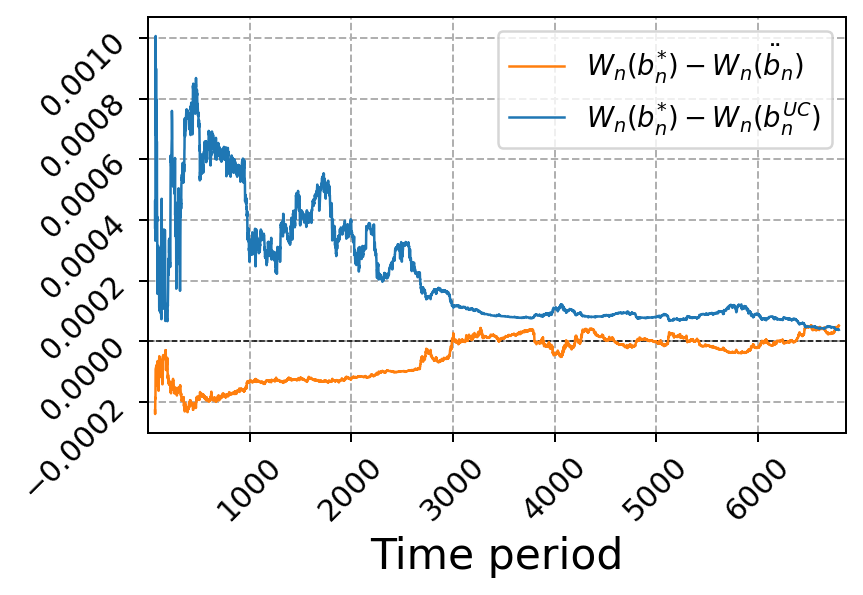} & \includegraphics[scale=0.27]{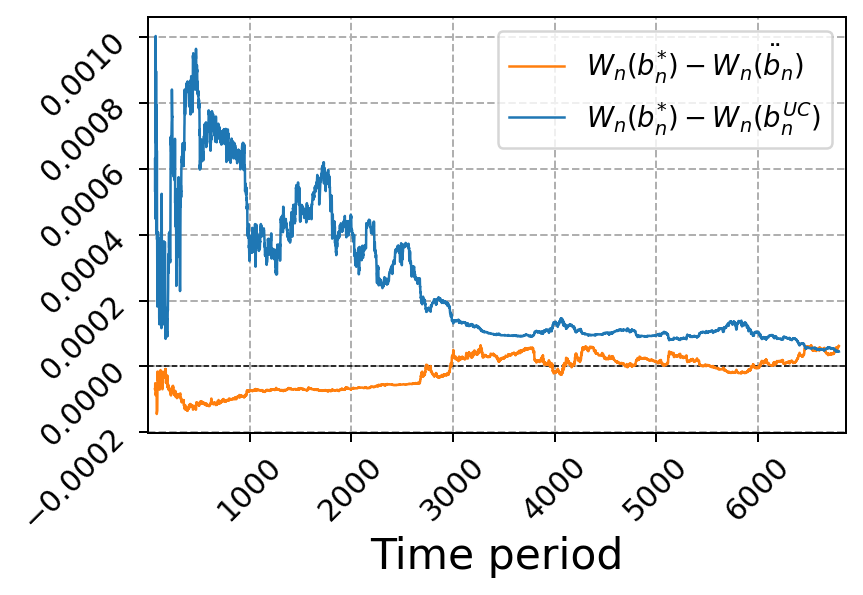}\tabularnewline
~\includegraphics[scale=0.28]{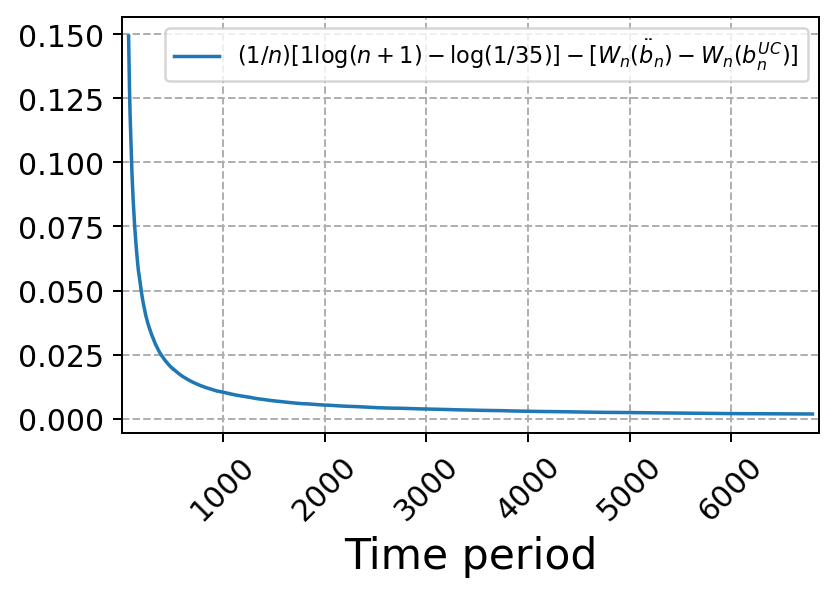} & ~\includegraphics[scale=0.28]{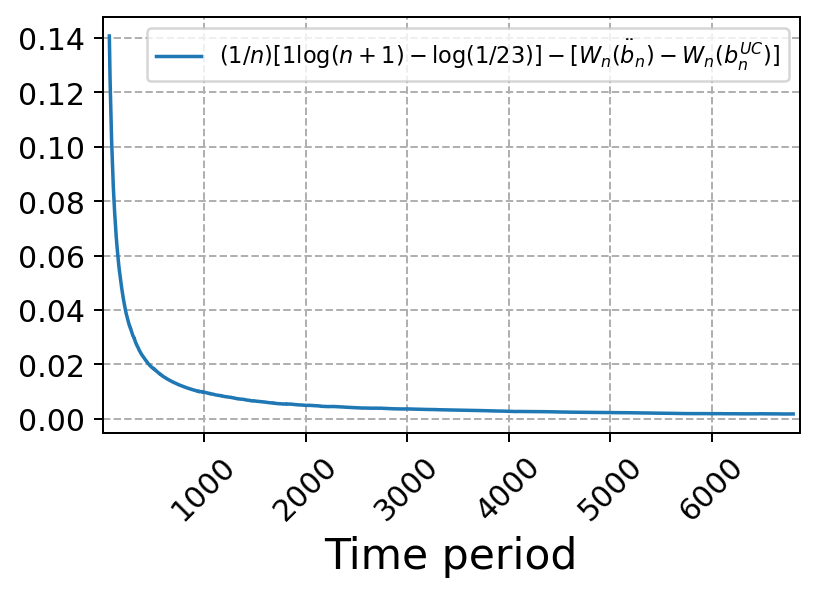} & ~\includegraphics[scale=0.28]{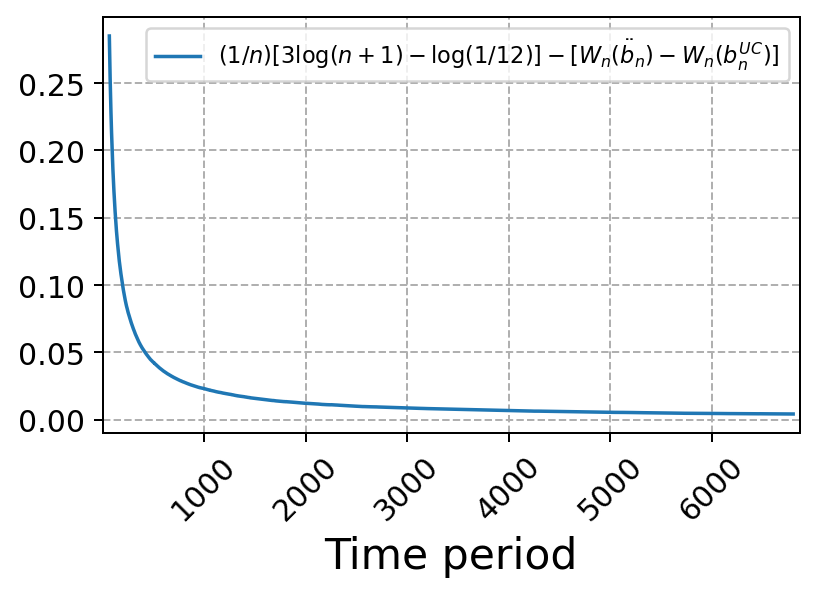} & ~\includegraphics[scale=0.28]{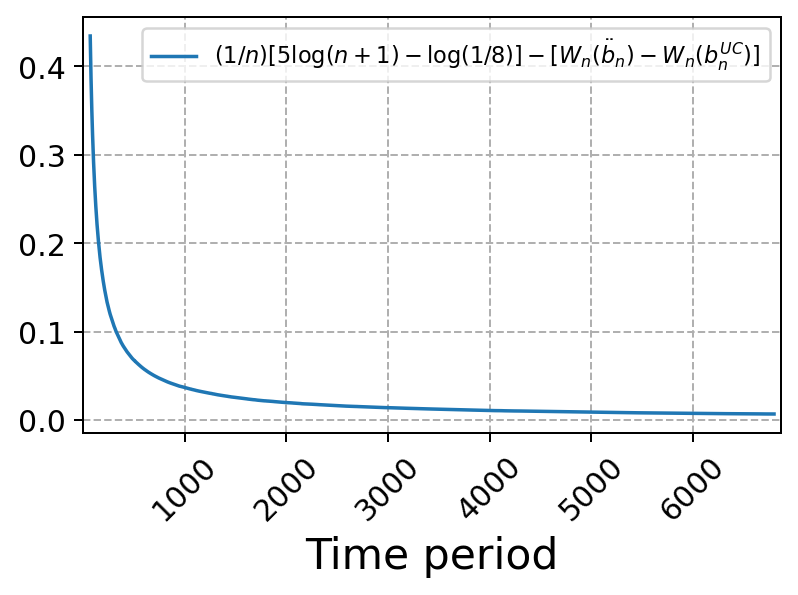}\tabularnewline
{\footnotesize ~~~2 unbalanced base sets} & {\footnotesize ~~~~~2 base sets} & {\footnotesize ~~~~~4 base sets} & {\footnotesize ~~~~~6 base sets}\tabularnewline
\end{tabular}
\par\end{centering}
{\footnotesize\caption{Upper row: differences over time between the growth rates of the strategies
$\big(\ddot{b}_{n}\big)$, $\big(b_{n}^{*}\big)$ and $\big(b_{n}^{\text{UC}}\big)$.
Lower row: upper bound over time according to Proposition \ref{infinite strategy}.
\label{Figure 11}}
}{\footnotesize\par}
\end{figure}
\begin{figure}[H]
\begin{centering}
\begin{tabular}{cccc}
\includegraphics[scale=0.235]{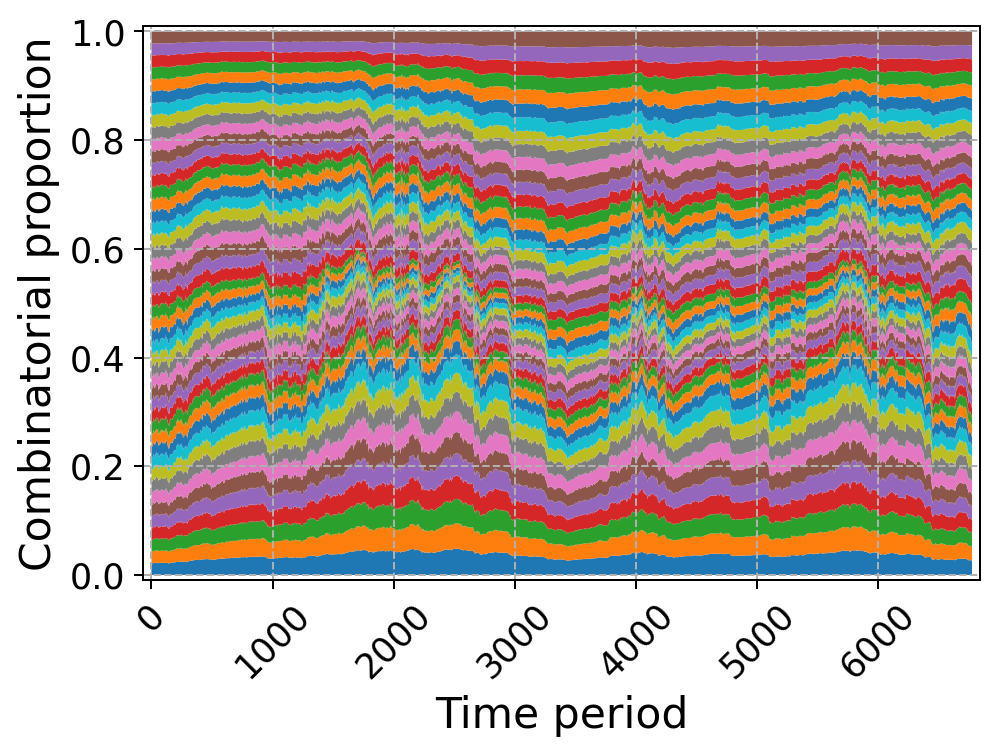} & \includegraphics[scale=0.235]{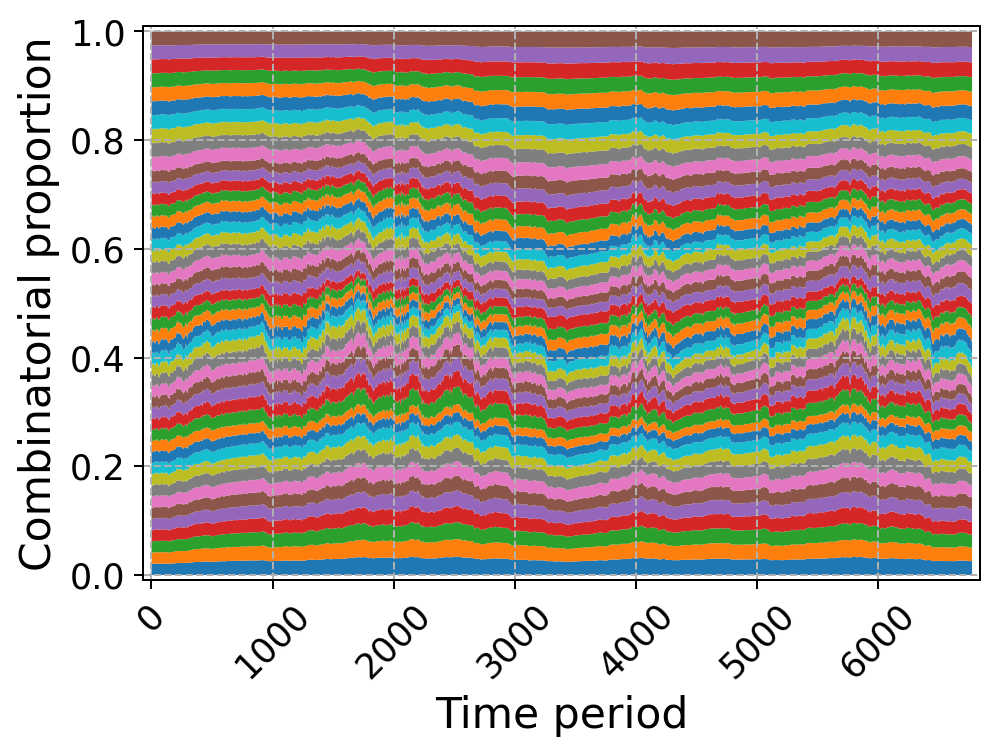} & \includegraphics[scale=0.235]{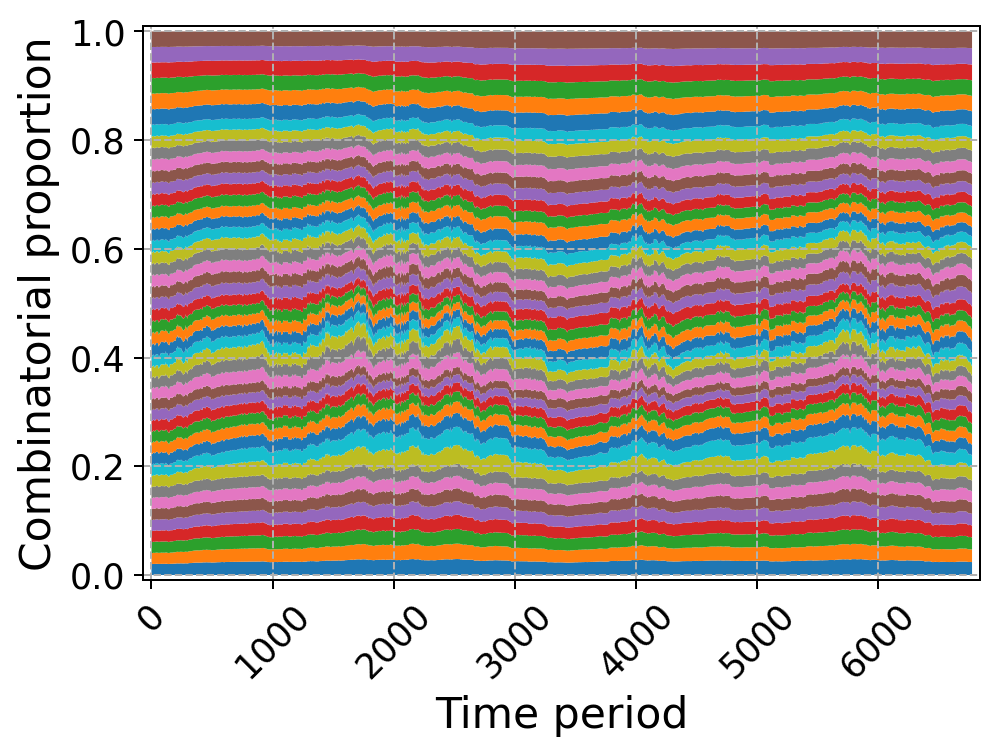} & \includegraphics[scale=0.235]{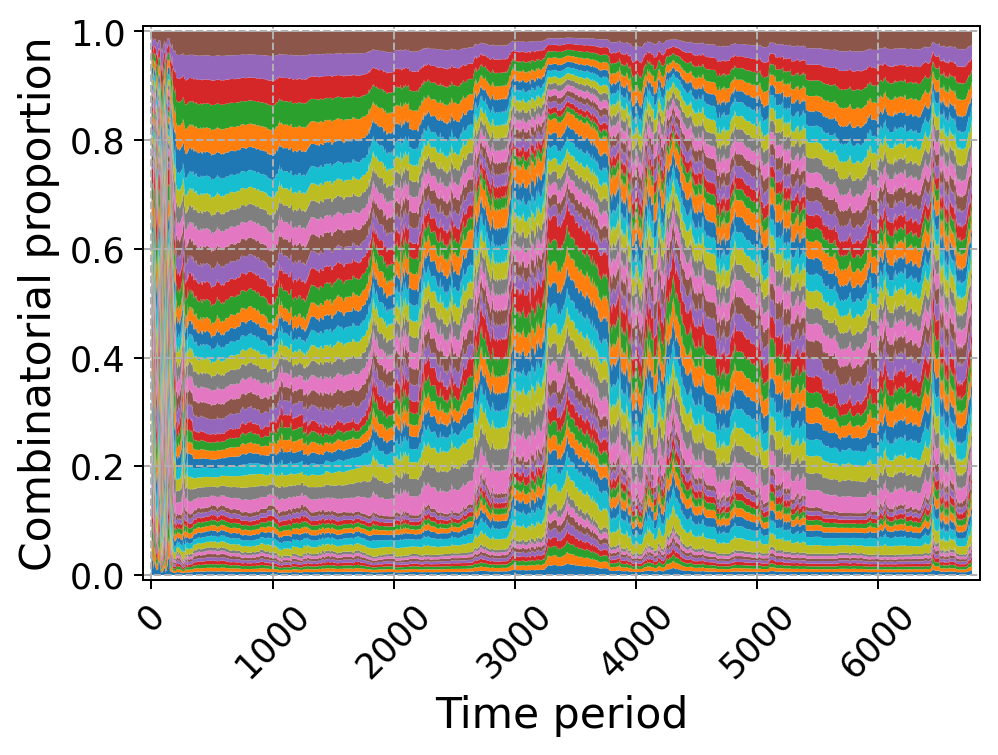}\tabularnewline
{\footnotesize ~~~~UC: 2 base sets} & {\footnotesize ~~~~UC: 4 base sets} & {\footnotesize ~~~~UC: 6 base sets} & {\footnotesize ~~~UC-L (30\%): 6 base sets}\tabularnewline
\end{tabular}
\par\end{centering}
\caption{Allocations of capital proportions into the component M-V strategies
over time (from bottom-to-top: risk aversion levels from $0.001$
to $10$).\label{Figure 12}}
\end{figure}

In Experiment 10, we conduct the UC-L and UC-W strategies alongside
the UC strategy using a partition of six base sets. The computation
procedure for the UC-L and UC-W strategies remains unchanged, except
for the replacement of component M-V strategies with representative
strategies. The results of this experiment are presented in Figure
\ref{Figure 11} and Table \ref{Table 3}. Similar to the case with
a small scale of component strategies, the UC-L strategy with a small
set $\bar{B}_{n}$ exhibits higher cumulative wealth and average return
compared to the other two, albeit with a slight decrease in the Sharpe
ratio. Figure \ref{Figure 12} illustrates the allocation proportions
of capital to each component M-V strategy over time by the UC-L ($30\%$)
strategy, in comparison to the original UC strategy with all partitions
except the unbalanced one. The graphs show that the proportions experience
lower variability as the number of base sets increases, while demonstrating
stronger fluctuations under acceleration for the UC strategy. This
phenomenon can be explained using the results from the experiments
conducted with small scales of risk aversion levels. Specifically,
with fewer base sets, a larger number of component strategies are
adjusted relative to the weights assigned to their corresponding representative
strategies, and vice versa. Moreover, changes in the weights of each
representative strategy are more gradual when their number is higher,
as depicted in the second row of Figure \ref{Figure 7}. Therefore,
the extent of fluctuation in the proportions of the UC strategy decreases
as the number of base sets increases. However, since acceleration
causes strong variations in the weights of representative strategies,
the proportions of the component strategies also exhibit significant
fluctuations, as shown in the last graph for the UC-L strategy.

\section{Summary and concluding Remarks }

In this paper, we consider an investor in the market who prioritizes
wealth accumulation and possesses the ability to learn several observable
strategies represented as sequences of no-short portfolios, which
are then ensembled into a multi-strategy. We establish a distribution-free
preference framework based on asymptotic growth rate to ensure that
the multi-strategy eventually exceeds all component strategies simultaneously
in terms of cumulative wealth, provided it is strictly preferred to
the baseline strategy as a merger of all components. Moreover, given
the unknown possibilities of the infinite sequence of market assets'
returns, the combinatorial strategy must exhibit weak preference everywhere
with strict preference at some instances of the sequence compared
to the baseline strategy. Subsequently, to satisfy the combination
criterion, we propose an online learning combinatorial strategy applicable
to any number of component strategies.\smallskip

In the experiment section, our proposed combinatorial strategy undergoes
testing alongside other online learning multi-strategies in various
designed scenarios using real data of six assets spanning three decades
from CRSP. We numerically compute the proposed combinatorial strategy
and demonstrate its ability to meet the investor's objective of exceeding
all component Mean-Variance strategies, as per the theoretical guarantees.
It exhibits a superior growth rate and cumulative wealth compared
to other online learning strategies in the experiments, albeit with
a slight loss in the Sharpe ratio. Additionally, in experiments involving
a large scale of component strategies, the time required for our combinatorial
strategy to surpass the best one is longer than that in experiments
with a small scale, consistent with theoretical analysis. Furthermore,
to expedite the speed and magnitude of surpassing the cumulative wealth
of the best component strategy, we propose an accelerated variant
of the original combination. While this accelerated combinatorial
strategy lacks a theoretical guarantee, our analysis of the experimental
results demonstrates its empirical efficiency over the original strategy
across any scale of component strategies.\smallskip

There are certain caveats to consider when applying the proposed combinatorial
strategy, particularly in tuning it for a large scale of component
strategies. If there exists a single component strategy that dominates
all others in terms of returns, our strategy will not exceed the cumulative
wealth of the baseline one; however, the accelerated variant may achieve
this. Nonetheless, such cases are often uncommon and can be addressed
by adding more component strategies so that there are at least two
competitive component strategies in terms of growth. In cases where
the majority of the components perform poorly, the time it takes for
the combinatorial strategy to exceed the best one among them will
be longer, as it may experience a significant decrease before gaining
enough momentum to recover. Similarly, when tuning the partition for
a large scale of component strategies, it is crucial to ensure that
the number of base sets is sufficient, with each base set including
a majority of component strategies exhibiting similar behavior. In
the context of component M-V strategies, appropriate tuning involves
partitioning into evenly-sized base sets that include risk aversion
levels within a small local range. In any case, acceleration can help
improve the magnitude and speed of exceeding the cumulative wealth
of the best component strategy.

\pagebreak\addcontentsline{toc}{Section}{Bibliography}

\pagebreak{}

\appendix
\section*{Appendix}
\setcounter{equation}{0}
\renewcommand{\theequation}{A.\arabic{equation}} 

\subsubsection*{Proof of Proposition \ref{Combination-criterion}}

Since $\big(b_{n}\big)\succ\big(\ddot{b}_{n}\big)$ implies $\big(b_{n}\big)\succsim\big(\ddot{b}_{n}\big)\land\big(b_{n}\big)\nsim\big(\ddot{b}_{n}\big)$
and by using the definition of supremum limit corresponding to the
preference $\big(b_{n}\big)\succsim\big(\ddot{b}_{n}\big)$, we have:
\[
\forall\epsilon>0,\,\exists N_{\epsilon}\text{ such that }\sup_{n\geq N}\big(W_{n}\big(\ddot{b}_{n}\big)-W_{n}\big(b_{n}\big)\big)<\epsilon,\,\forall N\geq N_{\epsilon}.
\]
Moreover, combining with the preference $\big(b_{n}\big)\nsim\big(\ddot{b}_{n}\big)$,
it must result in the following:
\[
\forall N\geq N_{\epsilon},\,\exists n_{N}\geq N\text{ such that }W_{n_{N}}\big(\ddot{b}_{n_{N}}\big)-W_{n_{N}}\big(b_{n_{N}}\big)<0,
\]
which is equivalent to $S_{n_{N}}\big(\ddot{b}_{n_{N}}\big)<S_{n_{N}}\big(b_{n_{N}}\big)$,
i.e., the combinatorial strategy $\left(b_{n}\right)$ is not always
exceeded by the baseline strategy $\big(\ddot{b}_{n}\big)$. Otherwise,
if there exists $M\geq1$ such that $W_{n}\big(\ddot{b}_{n}\big)>W_{n}\big(b_{n}\big)$
for any $n\geq M$, then it violates the assumption $\big(b_{n}\big)\nsim\big(\ddot{b}_{n}\big)$
as:
\[
\epsilon>W_{n}\big(\ddot{b}_{n}\big)-W_{n}\big(b_{n}\big)\geq0,\,\forall n\geq\max\left\{ M,N_{\epsilon}\right\} ,
\]
which implies $W_{n}\big(\ddot{b}_{n}\big)\to W_{n}\big(b_{n}\big)$.

\subsubsection*{Proof of Proposition \ref{S(lamda Y) =00003D S(lamda gamma)}}

On the one hand, $\max_{\gamma}S_{n}\big(b_{n}^{\gamma}\big)\leq\max_{\lambda}S_{n}\big(b_{n}^{\lambda}\big)$
for all $n$ since $\mathcal{H}\subseteq\mathcal{\big[}k\big]$. On
the other hand, since:
\[
\big<b_{n}^{\alpha^{j}},x_{n}\big>\geq\big<b_{n}^{\alpha},x_{n}\big>,\text{ \ensuremath{\forall\alpha\in\mathcal{A}^{j}},}\,j\in\left[h\right],
\]
then we have $S_{n}\big(b_{n}^{\lambda_{n}}\big)\leq S_{n}\big(b_{n}^{\gamma_{n}}\big)$
for all $n$ as follows:
\[
\max_{\lambda\in\mathcal{B}^{k}}{\displaystyle \prod_{t=1}^{n}\sum_{\alpha=1}^{k}\lambda_{\alpha}}\big<b_{t}^{\alpha},x_{t}\big>\leq\max_{\lambda\in\mathcal{B}^{k}}\prod_{t=1}^{n}\sum_{j=1}^{h}\Big(\sum_{\alpha\in\mathcal{A}^{j}}\lambda_{\alpha}\Big)\big<b_{t}^{\alpha^{j}},x_{t}\big>\leq\max_{\gamma\in\mathcal{B}^{h}}\prod_{t=1}^{n}\sum_{j=1}^{h}{\displaystyle \gamma}_{j}\big<b_{t}^{\alpha^{j}},x_{t}\big>,\forall n.
\]

\subsubsection*{Proof of Proposition \ref{finite strategy}}

By using the equality in (\ref{eq:3.5}), we have:
\begin{align*}
{\displaystyle \max_{x_{1}^{n}}\log\dfrac{S_{n}\big(b_{n}^{\lambda_{n}}\big)}{S_{n}\big(b_{n}\big)}} & =\max_{x_{1}^{n}}\log\dfrac{{\displaystyle \prod_{t=1}^{n}}{\displaystyle \sum_{\alpha=1}^{k}}\lambda_{n,\alpha}\big<b_{t}^{\alpha},x_{t}\big>}{{\displaystyle \int_{\mathcal{B}^{k}}}{\displaystyle \prod_{t=1}^{n}}{\displaystyle \sum_{\alpha=1}^{k}\lambda_{\alpha}}\big<b_{t}^{\alpha},x_{t}\big>\mu\left(\lambda\right)d\lambda}=\max_{x_{1}^{n}}\log\dfrac{{\displaystyle \prod_{t=1}^{n}}\big<\lambda_{n},r_{t}\big>}{{\displaystyle \int_{\mathcal{B}^{k}}}{\displaystyle \prod_{t=1}^{n}}\big<\lambda,r_{t}\big>\mu\left(\lambda\right)d\lambda},
\end{align*}
where the vectors $\big(\big<b_{t}^{1},x_{t}\big>,...,\big<b_{t}^{k},x_{t}\big>\big)\eqqcolon\big(r_{t,1},...,r_{t,k}\big)\eqqcolon r_{t}\in\mathbb{R}_{++}^{k}$
represent the respective returns of $k$ component portfolios at time
$t$. Let $r_{1}^{n}\coloneqq\big\{ r_{1},...,r_{n}\big\}\in\mathbb{R}_{++}^{k\times n}$
denote the sequence of $n$ vectors of portfolios' returns. Then,
by applying the theorem of the upper bound for the worst-case of the
Universal portfolio in \citet{Cover1991,Cover1996a}, with replacing
the sequence of assets' returns by $r_{1}^{n}$, we obtain the upper
bound with uniform density $\mu\left(\lambda\right)$ as:
\begin{align*}
\max_{x_{1}^{n}}\log\dfrac{{\displaystyle \prod_{t=1}^{n}}\big<\lambda_{n},r_{t}\big>}{{\displaystyle \int_{\mathcal{B}^{k}}}{\displaystyle \prod_{t=1}^{n}}\big<\lambda,r_{t}\big>\mu\left(\lambda\right)d\lambda} & =\max_{r_{1}^{n}}\log\dfrac{{\displaystyle \prod_{t=1}^{n}}\big<\lambda_{n},r_{t}\big>}{{\displaystyle \int_{\mathcal{B}^{k}}}{\displaystyle \prod_{t=1}^{n}}\big<\lambda,r_{t}\big>\mu\left(\lambda\right)d\lambda}\leq\left(k-1\right)\log\left(n+1\right),\forall n.
\end{align*}
Thus, ${\displaystyle \max_{x_{1}^{n}}\left(\log S_{n}\left(b_{n}^{*}\right)-\log S_{n}\left(b_{n}\right)\right)\leq\left(k-1\right)\log\left(n+1\right)}$
for all $n$, as needed to show.

\subsubsection*{Proof of Proposition \ref{infinite strategy}}

On the one hand, similar to equation in (\ref{eq:3.5}), we have the
following by telescoping:
\begin{align*}
S_{n}\big(\hat{b}_{n}^{i}\big) & ={\displaystyle {\displaystyle \prod_{t=1}^{n}\dfrac{{\displaystyle \sum_{\mathcal{A}^{i}}\big<{\displaystyle b_{n}^{\alpha}},x_{n}\big>}S_{t-1}\big(b_{t-1}^{\alpha}\big)\mu^{i}\big(\alpha\big)}{{\displaystyle \sum_{\mathcal{A}^{i}}}{\displaystyle S_{t-1}\big(b_{t-1}^{\alpha}\big)\mu^{i}\big(\alpha\big)}}}={\displaystyle \sum_{\mathcal{A}^{i}}}{\displaystyle S_{n}\left(b_{n}^{\alpha}\right)\mu^{i}\big(\alpha\big)},\,\forall i\in\left[N\right],\forall n.}
\end{align*}
Then, we have the following bound for every representative strategy:
\begin{align}
{\displaystyle \log\dfrac{S_{n}\big(\hat{b}_{n}^{i}\big)}{S_{n}\big(b_{n}^{\alpha_{n}^{i}}\big)}={\displaystyle \log\dfrac{{\displaystyle \sum_{\mathcal{A}^{i}}}{\displaystyle S_{n}\left(b_{n}^{\alpha}\right)\mu^{i}\big(\alpha\big)}}{S_{n}\big(b_{n}^{\alpha_{n}^{i}}\big)}}\geq{\displaystyle \log\dfrac{S_{n}\big(b_{n}^{\alpha_{n}^{i}}\big)\mu^{i}\big(\alpha_{n}^{i}\big)}{S_{n}\big(b_{n}^{\alpha_{n}^{i}}\big)}}} & \geq\log\mu^{i}\big(\alpha_{n}^{i}\big),\text{\ensuremath{\forall i}\ensuremath{\in}\ensuremath{\left[N\right]},\ensuremath{\forall n,}}\nonumber \\
 & \geq\log\epsilon_{n},\text{\ensuremath{\forall i}\ensuremath{\in}\ensuremath{\left[N\right]},\ensuremath{\forall n.}}\label{eq:proof Proposition 4 B}
\end{align}
On the other hand, by Proposition \ref{finite strategy} and the property
of the benchmark strategy for this context, namely $S_{n}\big(b_{n}^{*}\big)\geq\max_{i\in\left[N\right]}S_{n}\big(\hat{b}_{n}^{i}\big)$
for all $n$, we obtain the needed upper bound as follows:
\begin{align*}
\left(N-1\right)\log\left(n+1\right)\geq\max_{x_{1}^{n}}\big(\log S_{n}\big(b_{n}^{*}\big)-\log S_{n}\big(b_{n}\big)\big) & \geq\max_{x_{1}^{n}}\max_{i\in\left[N\right]}\big(\log S_{n}\big(\hat{b}_{n}^{i}\big)-\log S_{n}\big(b_{n}\big)\big)\\
 & \geq\max_{x_{1}^{n}}\big(\log S_{n}\big(\ddot{b}_{n}\big)-\log S_{n}\big(b_{n}\big)\big)+\log\epsilon_{n},\forall n.
\end{align*}
The last inequality follows from (\ref{eq:proof Proposition 4 B})
and the fact that $S_{n}\big(\ddot{b}_{n}\big)=\max_{i\in\left[N\right]}S_{n}\big(b_{n}^{\alpha_{n}^{i}}\big)$
by definition.
\end{document}